\definecolor{monblu}{cmyk}{0.80,0.00,0.00,0.70}
\definecolor{monmag}{cmyk}{0.00,1.00,0.00,0.00}
\newtheorem{lemma}{Lemma}
\newtheorem{proposition}{Proposition}
\newenvironment{proof}{{\bf Proof:}}{~$\dashv$\\}
\def\N{\mathbb{N}}
\def\axiomN{\mathbf{N}}
\def\Axiom{\mathbf{A}}
\def\Rule{\mathbf{R}}
\def\LIK{\mathbf{LIK}}
\def\Fo{\mathbf{Fo}}
\def\card{\mathtt{Card}}
\def\nPlt{\mathtt{nlt}}
\def\At{\mathbf{At}}
\def\K{\mathbf{K}}
\def\S{\mathbf{S}}
\def\IPL{\mathbf{IPL}}
\def\hauteur{\mathtt{h}}
\def\degre{\mathtt{deg}}
\begin{document}
\title{Intuitionistic modal logic $\LIK4$ is decidable}
\author{Philippe Balbiani$^{a}$\footnote{Corresponding author.
Email address: philippe.balbiani@irit.fr.}
\hspace{0.1cm}
\c{C}i\u{g}dem Gencer$^{a,b}$\footnote{Email address: cigdem.gencer@irit.fr.}
\hspace{0.1cm}
Tinko Tinchev$^{c}$\footnote{Email address: tinko@fmi.uni-sofia.bg.}}
\date{$^{a}$Toulouse Institute of Computer Science Research
\\
CNRS--INPT--UT, Toulouse, France
\\
$^{b}$Faculty of Arts and Sciences
\\
Istanbul Ayd\i n University, Istanbul, Turkey
\\
$^{c}$Faculty of Mathematics and Informatics
\\
Sofia University St. Kliment Ohridski, Sofia, Bulgaria}
\maketitle
\begin{abstract}
In this note, we prove that intuitionistic modal logic $\LIK4$ is decidable.
\end{abstract}
{\bf Keywords:}
Intuitionistic modal logics.
Decidability.
Tableaux-based procedures.
Finite frame property.
\section{Introduction}\label{section:introduction}
In this note, we prove that intuitionistic modal logic $\LIK4$ is decidable.
\\
\\
In Sections~\ref{section:trees:and:bips}, we introduce labelled trees, i.e. the relational structures that will be our main tools in our proof of the decidability of $\LIK4$.
In Section~\ref{section:closed:sets:of:formulas}, we introduce the syntax of $\LIK4$.
In Section~\ref{section:semantics}, we introduce the relational semantics of $\LIK4$.
In Section~\ref{section:axiomatization}, we present a Hilbert-style axiomatization of $\LIK4$.
In Section~\ref{section:canonical:model}, we present the canonical model construction of $\LIK4$.
In Sections~\ref{section:maximal:worlds}--\ref{section:finite:model:property}, we prove that intuitionistic modal logic $\LIK4$ is decidable.
\\
\\
This paper includes the proofs of several of our results.
Some of these proofs are relatively simple and we have included them here just for the sake of the completeness.
\section{Some notations}
For all $n{\in}\N$, $(n)$ denotes $\{a{\in}\N$: $1{\leq}a{\leq}n\}$.
\\
\\
For all sets $U,V$, when we write ``$U{\subset}V$'', we mean ``$U{\subseteq}V$ and $V{\not\subseteq}U$''.
\\
\\
For all sets $\Sigma$, $\wp(\Sigma)$ denotes the {\em powerset of $\Sigma$.}
\\
\\
For all sets $\Sigma$, $\card(\Sigma)$ denotes the {\em cardinal of $\Sigma$.}
\\
\\
For all sets $U,V$ and for all bijective functions $f$: $U\longrightarrow V$, $f^{{-}1}$ denotes the {\em inverse function of $f$,} 	that is to say the bijective function $g$: $V\longrightarrow U$ such that for all $x{\in}U$, $g(f(x)){=}x$ and for all $y{\in}V$, $f(g(y)){=}y$.
\\
\\
For all sets $W$ and for all binary relations $R,S$ on $W$, ${R}{\circ}{S}$ denotes the {\em composition of $R$ and $S$,} that is to say the binary relation $T$ on $W$ such that for all $s,t{\in}W$, $s{T}t$ if and only if there exists $u{\in}W$ such that $s{R}u$ and $u{S}t$.
\\
\\
For all sets $W$ and for all binary relations $R$ on $W$, $R^{+}$ denotes the least transitive binary relation on $W$ containing $R$ and $R^{\star}$ denotes the least reflexive and transitive binary relation on $W$ containing $R$.
\\
\\
For all sets $W$, for all binary relations $R$ on $W$ and for all $s{\in}W$, let ${R}(s){=}\{t{\in}W$: $s{R}t\}$.
\\
\\
For all sets $W$, a {\em preorder on $W$}\/ is a reflexive and transitive binary relation on $W$.
\\
\\
For all sets $W$ and for all preorders $\leq$ on $W$, $\geq$ denotes the preorder on $W$ such that for all $s,t{\in}W$, $s{\geq}t$ if and only if $t{\leq}s$.
\\
\\
For all sets $W$, for all preorders $\leq$ on $W$ and for all $s,t{\in}W$, when we write ``$s{<}t$'', we mean ``$s{\leq}t$ and $t{\not\leq}s$''.
\\
\\
For all sets $W$ and for all preorders $\leq$ on $W$, a subset $U$ of $W$ is {\em $\leq$-closed}\/ if for all $s,t{\in}W$, if $s{\in}U$ and $s{\leq}t$ then $t{\in}U$.
\\
\\
For all sets $W$, a {\em partial order on $W$}\/ is an antisymmetric preorder on $W$.
\\
\\
``$\IPL$'' stands for ``Intuitionistic Propositional Logic''.
\section{Labelled trees}\label{section:trees:and:bips}
Let $P$ be a finite set and $\leq$ be a partial order on $P$.
\\
\\
A {\em $P$-labelled tree}\/ is a triple $(N,{E},\lambda)$ where the finite set $N$, the binary relation $E$ on $N$ and the function $\lambda$: $N\longrightarrow P$ are such that
\begin{itemize}
\item $(N,{E})$ is a tree,\footnote{We assume the reader is at home with the basic definitions concerning trees: roots, paths, etc.}
\item $\lambda$ is {\em $\leq$-monotone,} that is to say for all nodes $i,j$, if $i{E}j$ then $\lambda(i){\leq}\lambda(j)$.
\end{itemize}
For all $P$-labelled trees $(N,{E},\lambda),(N^{\prime},{E^{\prime}},\lambda^{\prime})$, a {\em $P$-embedding of $(N,{E},\lambda)$ into $(N^{\prime},{E^{\prime}},\lambda^{\prime})$}\/ is a function $f$: $N\longrightarrow N^{\prime}$ such that
\begin{itemize}
\item for all $i,j{\in}N$, if $i{E}j$ then $f(i){{E^{\prime}}^{\star}}f(j)$,
\item for all $k{\in}N$, $\lambda(k){=}\lambda^{\prime}(f(k))$.
\end{itemize}
Let $\sim$ be the equivalence relation on the set of all $P$-labelled trees such that for all $P$-labelled trees $(N,{E},\lambda),(N^{\prime},{E^{\prime}},\lambda^{\prime})$, $(N,{E},\lambda){\sim}(N^{\prime},{E^{\prime}},\lambda^{\prime})$ if and only if there exists a $P$-embedding of $(N,{E},\lambda)$ into $(N^{\prime},{E^{\prime}},\lambda^{\prime})$ and there exists a $P$-embedding of $(N^{\prime},{E^{\prime}},\lambda^{\prime})$ into $(N,{E},\lambda)$.
\\
\\
%
%
%
%
%
%
For all $n{\in}\N$, a finitely indexed family $(N_{m},{E_{m}},\lambda_{m})_{m{\in}(n)}$ of $P$-labelled trees is {\em dreary}\/ if there exists $m{\in}(n)$ such that $m{<}n$ and $(N_{m},{E_{m}},\lambda_{m}){\sim}(N_{n},{E_{n}},\lambda_{n})$.
\\
\\
At the end of this section, we will show that for all countably indexed families $(N_{m},{E_{m}},\lambda_{m})_{m{\in}\N}$ of $P$-label\-led trees, there exists $n{\in}\N$ such that the finitely indexed family $(N_{m},{E_{m}},\lambda_{m})_{m{\in}(n)}$ of $P$-labelled trees is dreary.\footnote{See Lemma~\ref{proposition:about:labelled:trees:to:be:dreary}.}
\\
\\
A $P$-labelled tree $(N,{E},\lambda)$ is {\em strict}\/ if
\begin{itemize}
\item for all nodes $i,j$, if $i{E}j$ then $\lambda(i){<}\lambda(j)$.
\end{itemize}
Obviously, for all strict $P$-labelled trees $(N,{E},\lambda)$, the height of the tree $(N,E)$ does not exceed $\card(P)$.
\\
\\
For all $P$-labelled trees $(N,{E},\lambda)$, a {\em duplicate in $(N,{E},\lambda)$}\/ is a couple $(i,j)$ of nodes such that $i{E}j$ and $\lambda(i){=}\lambda(j)$.
\\
\\
Obviously, for all $P$-labelled trees $(N,{E},\lambda)$, if $(N,{E},\lambda)$ is non-strict then $(N,{E},
$\linebreak$
\lambda)$ contains duplicates.
\begin{lemma}\label{lemma:useful:for:elimination:of:duplicates}
Let $(N,{E},\lambda)$ be a $P$-labelled tree and $(i,j)$ be a duplicate in $(N,{E},
$\linebreak$
\lambda)$.
The structure $(N^{\prime},{E^{\prime}},\lambda^{\prime})$ such that
\begin{itemize}
\item $N^{\prime}{=}N{\setminus}\{j\}$,
\item $E^{\prime}$ is the binary relation on $N^{\prime}$ such that for all $k,l{\in}N^{\prime}$, $k{E^{\prime}}l$ if and only if either $k{E}l$, or $k{=}i$ and $j{E}l$,
\item $\lambda^{\prime}$: $N^{\prime}\longrightarrow P$ is the function such that for all $k{\in}N^{\prime}$, $\lambda^{\prime}(k){=}\lambda(k)$,
\end{itemize}
is a $P$-labelled tree.
\end{lemma}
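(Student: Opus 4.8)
The plan is to verify the two defining clauses of a $P$-labelled tree for $(N^{\prime},{E^{\prime}},\lambda^{\prime})$: that $(N^{\prime},{E^{\prime}})$ is a tree, and that $\lambda^{\prime}$ is $\leq$-monotone. The monotonicity clause is immediate. Given $k,l{\in}N^{\prime}$ with $k{E^{\prime}}l$, either $k{E}l$, in which case $\lambda^{\prime}(k){=}\lambda(k){\leq}\lambda(l){=}\lambda^{\prime}(l)$ by monotonicity of $\lambda$; or $k{=}i$ and $j{E}l$, in which case $\lambda^{\prime}(k){=}\lambda(i){=}\lambda(j){\leq}\lambda(l){=}\lambda^{\prime}(l)$, using $\lambda(i){=}\lambda(j)$ (the duplicate condition) together with monotonicity of $\lambda$ along $j{E}l$. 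Either way $\lambda^{\prime}(k){\leq}\lambda^{\prime}(l)$.

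The substantive clause is that $(N^{\prime},{E^{\prime}})$ is a tree. First I would record the inclusion ${E^{\prime}}{\subseteq}{E^{+}}$: if $k{E^{\prime}}l$ then either $k{E}l$, or $k{=}i$ and $j{E}l$ so that $i{E}j{E}l$; in both cases $k{E^{+}}l$. Since $(N,{E})$ is a tree, $E^{+}$ is irreflexive (it is the proper-descendant relation of $(N,{E})$), and as ${E^{\prime}}{\subseteq}{E^{+}}$ with $E^{+}$ transitive we get ${E^{\prime}}^{+}{\subseteq}{E^{+}}$, which is therefore irreflexive as well; this rules out cycles and self-loops in $(N^{\prime},{E^{\prime}})$. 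Next, writing $r$ for the root of $(N,{E})$, the edge $i{E}j$ shows $j{\neq}r$, so $r{\in}N^{\prime}$, and no edge $k{E^{\prime}}r$ can exist (it would force $k{E}r$ or $j{E}r$, both impossible as $r$ is a root). Thus $r$ has no ${E^{\prime}}$-predecessor, as required of a root.

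It then remains to establish reachability from $r$ and uniqueness of predecessors. For reachability I would use the contraction $\pi$: $N{\longrightarrow}N^{\prime}$ that fixes every node but sends $j$ to $i$; a case split over the possible $E$-edges shows that $x{E}y$ implies $\pi(x){=}\pi(y)$ or $\pi(x){E^{\prime}}\pi(y)$, whence $\pi(x){{E^{\prime}}^{\star}}\pi(y)$, and by induction on paths $x{E^{\star}}y$ implies $\pi(x){{E^{\prime}}^{\star}}\pi(y)$; applied to $r{E^{\star}}l$ this yields $r{{E^{\prime}}^{\star}}l$ for every $l{\in}N^{\prime}$. For uniqueness, given $l{\in}N^{\prime}$ distinct from $r$ with $E$-parent $p$, I would check that $p$ is the unique ${E^{\prime}}$-predecessor of $l$ when $p{\neq}j$, and that $i$ is the unique one when $p{=}j$ (here $l{\neq}i$, since $l{=}i$ would give the forbidden $i{E}j{E}i$). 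Together with acyclicity and reachability this produces a unique directed ${E^{\prime}}$-path from $r$ to each node, i.e. $(N^{\prime},{E^{\prime}})$ is a tree. The part demanding the most care is precisely this predecessor analysis: one must invoke the unique-parent property of $(N,{E})$ to see that contracting the edge $i{E}j$ neither leaves a node with two parents nor cross-links distinct branches, both being controlled by the tree structure of the original.
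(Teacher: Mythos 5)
Your proof is correct, and it is in fact more complete than the paper's own. The paper's proof verifies only the $\leq$-monotonicity of $\lambda^{\prime}$, by exactly the case split you use (the case $k{E}l$ handled by monotonicity of $\lambda$; the case $k{=}i$ and $j{E}l$ handled by combining the duplicate condition $\lambda(i){=}\lambda(j)$ with monotonicity along $j{E}l$); the fact that $(N^{\prime},{E^{\prime}})$ is again a tree is treated as evident and is not argued at all. You supply precisely this missing half: the inclusion ${E^{\prime}}{\subseteq}{E^{+}}$ giving acyclicity, preservation of the root, reachability via the contraction map $\pi$, and --- the one place where the tree structure of $(N,{E})$ is genuinely needed --- the unique-parent analysis showing that contracting the edge $i{E}j$ neither leaves a node with two ${E^{\prime}}$-predecessors nor creates a loop (your observation that $l{=}i$ is impossible, since $i{E}j{E}i$ would be a cycle, is exactly the right point to check). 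So the two arguments coincide on the monotonicity clause, but yours is self-contained where the paper's relies on the reader accepting that deleting $j$ and reattaching its children to its parent preserves treeness. One cosmetic remark: the paper's proof is written with $\subseteq$ in place of $\leq$ --- an artifact of the intended application in which the labels are sets of formulas --- whereas you correctly argue with the abstract partial order on $P$, which is what the statement of the lemma actually requires.
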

\begin{proof}
Let $k,l{\in}N^{\prime}$.
For the sake of the contradiction, suppose $k{E^{\prime}}l$ and $\lambda^{\prime}(k){\not\subseteq}\lambda^{\prime}(l)$.
Hence, either $k{E}l$, or $k{=}i$ and $j{E}l$.
In the former case, $\lambda(k){\subseteq}\lambda(l)$.
Thus, $\lambda^{\prime}(k){\subseteq}\lambda^{\prime}(l)$: a contradiction.
In the latter case, $\lambda(k){=}\lambda(i)$ and $\lambda(j){\subseteq}\lambda(l)$.
Since $(i,j)$ is a duplicate in $(N,{E},\lambda)$, then $\lambda(i){=}\lambda(j)$.
Since $\lambda(k){=}\lambda(i)$ and $\lambda(j){\subseteq}\lambda(l)$, then $\lambda(k){\subseteq}\lambda(l)$.
Consequently, $\lambda^{\prime}(k){\subseteq}\lambda^{\prime}(l)$: a contradiction.
\medskip
\end{proof}
\begin{lemma}\label{lemma:transforming:labelled:trees:into:strict:labelled:trees}
For all $P$-labelled trees $(N,{E},\lambda)$, if $(N,{E},\lambda)$ is non-strict then there exists a $P$-labelled tree $(N^{\prime},{E^{\prime}},\lambda^{\prime})$ such that $N^{\prime}{\subset}N$ and $(N,{E},\lambda){\sim}(N^{\prime},{E^{\prime}},\lambda^{\prime})$.
\end{lemma}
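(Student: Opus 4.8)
The plan is to build the required smaller tree by eliminating a single duplicate, and then to verify that the two trees embed into one another. Since $(N,{E},\lambda)$ is non-strict, the observation preceding Lemma~\ref{lemma:useful:for:elimination:of:duplicates} guarantees a duplicate $(i,j)$, that is to say nodes with $i{E}j$ and $\lambda(i){=}\lambda(j)$. Feeding $(i,j)$ into Lemma~\ref{lemma:useful:for:elimination:of:duplicates} yields a structure $(N^{\prime},{E^{\prime}},\lambda^{\prime})$ that is already known to be a $P$-labelled tree, with $N^{\prime}{=}N{\setminus}\{j\}{\subset}N$. It therefore only remains to exhibit a $P$-embedding of $(N,{E},\lambda)$ into $(N^{\prime},{E^{\prime}},\lambda^{\prime})$ together with a $P$-embedding in the reverse direction, which by the definition of $\sim$ gives $(N,{E},\lambda){\sim}(N^{\prime},{E^{\prime}},\lambda^{\prime})$.

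For the embedding of the larger tree into the smaller one, I would take $f$: $N\longrightarrow N^{\prime}$ that collapses $j$ onto $i$, setting $f(j){=}i$ and $f(k){=}k$ for every other $k$. The labelling clause holds because $\lambda(i){=}\lambda(j)$, so that $\lambda^{\prime}(f(k)){=}\lambda(k)$ in every case. For the edge clause I would argue by cases on an edge $k{E}l$, using that in the tree $(N,{E})$ the node $j$ has the unique parent $i$: if neither endpoint is $j$, then $k{E^{\prime}}l$ directly; if $k{=}j$, then $l$ is a child of $j$ distinct from $j$, so $f(l){=}l$ and $i{E^{\prime}}l$ holds by the second clause in the definition of $E^{\prime}$; and if $l{=}j$, then necessarily $k{=}i$, so $f(k){=}f(l){=}i$ and the collapsed edge is absorbed by the reflexivity of ${E^{\prime}}^{\star}$. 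Hence $f(k){{E^{\prime}}^{\star}}f(l)$ in all cases.

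For the reverse embedding I would simply take the inclusion $g$: $N^{\prime}\longrightarrow N$. Its labelling clause is immediate since $\lambda^{\prime}$ is the restriction of $\lambda$ to $N^{\prime}$. For the edge clause, suppose $k{E^{\prime}}l$: by the definition of $E^{\prime}$, either $k{E}l$, whence $k{E^{\star}}l$, or $k{=}i$ and $j{E}l$, in which case $i{E}j$ together with $j{E}l$ yields $i{E^{\star}}l$ by transitivity of $E^{\star}$. Thus $g(k){E^{\star}}g(l)$ always holds.

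The only genuinely delicate point is the edge clause of $f$, and the key observation is that the two situations that would otherwise break monotone edge-preservation, namely the incoming edge $i{E}j$ and the outgoing edges $j{E}l$, are exactly accommodated by the reflexive and transitive closure ${E^{\prime}}^{\star}$ appearing in the definition of $P$-embedding: the former collapses to a reflexive loop at $i$, while the latter is rerouted through $i$ by the construction of $E^{\prime}$. Combining the two embeddings gives $(N,{E},\lambda){\sim}(N^{\prime},{E^{\prime}},\lambda^{\prime})$ with $N^{\prime}{\subset}N$, as required.
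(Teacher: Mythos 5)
Your proof is correct and follows essentially the same route as the paper's: remove the duplicate $(i,j)$ via Lemma~\ref{lemma:useful:for:elimination:of:duplicates}, then exhibit the collapse map $f$ (sending $j$ to $i$) and the inclusion as the two $P$-embeddings witnessing $\sim$. The only difference is that you spell out the case analysis for the edge clauses, which the paper leaves to the reader.
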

\begin{proof}
Let $(N,{E},\lambda)$ be a $P$-labelled tree.
Suppose $(N,{E},\lambda)$ is non-strict.
Hence, there exists a duplicate $(i,j)$ in $(N,{E},\lambda)$.
Thus, $i{E}j$ and $\lambda(i){=}\lambda(j)$.
Let $(N^{\prime},{E^{\prime}},\lambda^{\prime})$ be the structure such that
\begin{itemize}
\item $N^{\prime}{=}N{\setminus}\{j\}$,
\item $E^{\prime}$ is the binary relation on $N^{\prime}$ such that for all $k,l{\in}N^{\prime}$, $k{E^{\prime}}l$ if and only if either $k{E}l$, or $k{=}i$ and $j{E}l$,
\item $\lambda^{\prime}$: $N^{\prime}\longrightarrow P$ is the function such that for all $k{\in}N^{\prime}$, $\lambda^{\prime}(k){=}\lambda(k)$.
\end{itemize}
By Lemma~\ref{lemma:useful:for:elimination:of:duplicates}, $(N^{\prime},{E^{\prime}},\lambda^{\prime})$ is a $P$-labelled tree.
\\
\\
Obviously, $N^{\prime}{\subset}N$.
\\
\\
Moreover, the reader may easily verify that the function $f$: $N\longrightarrow N^{\prime}$ such that for all $k{\in}N$, if $k{\not=}j$ then $f(k){=}k$ else $f(k){=}i$ is a $P$-embedding of $(N,{E},\lambda)$ into $(N^{\prime},{E^{\prime}},\lambda^{\prime})$.
\\
\\
In other respect, the reader may easily verify that the function $f^{\prime}$: $N^{\prime}\longrightarrow N$ such that for all $k{\in}N^{\prime}$, $f^{\prime}(k){=}k$ is a $P$-embedding of $(N^{\prime},{E^{\prime}},\lambda^{\prime})$ into $(N,{E},\lambda)$.
\\
\\
Consequently, $(N,{E},\lambda){\sim}(N^{\prime},{E^{\prime}},\lambda^{\prime})$.
\medskip
\end{proof}
%
%
%
%
\begin{lemma}\label{lemma:about:nice:A:trees:and:relation:equivalence:sim}
For all $P$-labelled trees $(N,{E},\lambda)$, there exists a strict $P$-labelled tree $(N^{\prime},{E^{\prime}},\lambda^{\prime})$ such that $(N,{E},\lambda){\sim}(N^{\prime},{E^{\prime}},\lambda^{\prime})$.
\end{lemma}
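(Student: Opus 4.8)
The plan is to argue by induction on $\card(N)$, using Lemma~\ref{lemma:transforming:labelled:trees:into:strict:labelled:trees} to peel off one duplicate at a time and the fact that $\sim$ is an equivalence relation to glue the successive approximations together.

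First I would invoke the dichotomy that a $P$-labelled tree $(N,{E},\lambda)$ is either strict or non-strict. If it is strict, then there is nothing to do: taking $(N^{\prime},{E^{\prime}},\lambda^{\prime}){=}(N,{E},\lambda)$ and recalling that $\sim$ is reflexive yields $(N,{E},\lambda){\sim}(N^{\prime},{E^{\prime}},\lambda^{\prime})$ at once. This case is also where the induction bottoms out, since the minimal trees (for instance, the single-node tree, which has no edges) are strict, so the induction on $\card(N)$ is well-founded.

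For the inductive step, I would suppose the result holds for all $P$-labelled trees whose node set has cardinality strictly less than $\card(N)$, and suppose $(N,{E},\lambda)$ is non-strict. Then Lemma~\ref{lemma:transforming:labelled:trees:into:strict:labelled:trees} supplies a $P$-labelled tree $(N^{\prime\prime},{E^{\prime\prime}},\lambda^{\prime\prime})$ with $N^{\prime\prime}{\subset}N$ and $(N,{E},\lambda){\sim}(N^{\prime\prime},{E^{\prime\prime}},\lambda^{\prime\prime})$. Because $N^{\prime\prime}{\subset}N$ forces $\card(N^{\prime\prime}){<}\card(N)$, the induction hypothesis applies to $(N^{\prime\prime},{E^{\prime\prime}},\lambda^{\prime\prime})$ and furnishes a strict $P$-labelled tree $(N^{\prime},{E^{\prime}},\lambda^{\prime})$ with $(N^{\prime\prime},{E^{\prime\prime}},\lambda^{\prime\prime}){\sim}(N^{\prime},{E^{\prime}},\lambda^{\prime})$. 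Transitivity of $\sim$ then gives $(N,{E},\lambda){\sim}(N^{\prime},{E^{\prime}},\lambda^{\prime})$, as required.

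I do not expect any genuine obstacle here: the whole weight of the argument has been front-loaded into Lemma~\ref{lemma:useful:for:elimination:of:duplicates} and Lemma~\ref{lemma:transforming:labelled:trees:into:strict:labelled:trees}, which together guarantee that eliminating a duplicate both preserves $\sim$ and strictly decreases the number of nodes, thereby ensuring termination. The only point that warrants a moment's care is the equivalence between non-strictness and the presence of a duplicate: non-strictness means some edge $i{E}j$ fails $\lambda(i){<}\lambda(j)$, and since $\lambda$ is $\leq$-monotone this failure can only arise from $\lambda(i){=}\lambda(j)$ by antisymmetry of $\leq$. This observation, which is exactly what licenses the appeal to Lemma~\ref{lemma:transforming:labelled:trees:into:strict:labelled:trees}, has effectively been recorded already in the remark preceding that lemma.
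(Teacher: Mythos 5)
Your proof is correct and follows essentially the same route as the paper: the paper's own proof is just an appeal to Lemma~\ref{lemma:transforming:labelled:trees:into:strict:labelled:trees}, tacitly iterating it until strictness is reached, and your induction on $\card(N)$ with transitivity of $\sim$ is precisely the explicit form of that argument. Your closing remark on why non-strictness yields a duplicate (via antisymmetry of $\leq$) matches the paper's own observation preceding Lemma~\ref{lemma:useful:for:elimination:of:duplicates}.
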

\begin{proof}
By Lemma~\ref{lemma:transforming:labelled:trees:into:strict:labelled:trees}.
\medskip
\end{proof}
For all $P$-labelled trees $(N,{E},\lambda)$ and for all nodes $i$, let $(N^{i},{E^{i}},\lambda^{i})$ be the {\em restriction of $(N,{E},\lambda)$ to ${E^{\star}}(i)$.}
\\
\\
For all $P$-labelled trees $(N,{E},\lambda),(N^{\prime},{E^{\prime}},\lambda^{\prime})$, a {\em $P$-isomorphism from $(N,{E},\lambda)$ onto $(N^{\prime},{E^{\prime}},\lambda^{\prime})$}\/ is a bijective function $f$: $N\longrightarrow N^{\prime}$ such that
\begin{itemize}
\item for all $i,j{\in}N$, $i{E}j$ if and only if $f(i){E^{\prime}}f(j)$,
\item for all $k{\in}N$, $\lambda(k){=}\lambda^{\prime}(f(k))$.
\end{itemize}
Let $\simeq$ be the equivalence relation on the set of all $P$-labelled trees such that for all $P$-labelled trees $(N,{E},\lambda),(N^{\prime},{E^{\prime}},\lambda^{\prime})$, $(N,{E},\lambda){\simeq}(N^{\prime},{E^{\prime}},\lambda^{\prime})$ if and only if there exists a $P$-isomorphism from $(N,{E},\lambda)$ onto $(N^{\prime},{E^{\prime}},\lambda^{\prime})$.
\\
\\
Obviously, $\simeq$ is finer than $\sim$, i.e. for all $P$-labelled trees $(N,{E},\lambda),(N^{\prime},{E^{\prime}},\lambda^{\prime})$, if $(N,{E},\lambda){\simeq}(N^{\prime},{E^{\prime}},\lambda^{\prime})$ then $(N,{E},\lambda){\sim}(N^{\prime},{E^{\prime}},\lambda^{\prime})$.
\\
\\
A strict $P$-labelled tree $(N,{E},\lambda)$ is {\em nice}\/ if
\begin{itemize}
\item for all nodes $i,j,k$, if $k{E}i$, $k{E}j$ and $i{\not=}j$ then $(N^{i},{E^{i}},\lambda^{i}){\not\simeq}(N^{j},{E^{j}},\lambda^{j})$.
\end{itemize}
%
%
%
%
%
%
%
%
%
%
%
%
%
%
%
%
%
%
%
%
%
%
%
%
%
%
%
%
%
%
%
%
%
%
%
%
%
%
%
%
%
%
%
%
%
%
%
%
%
%
For all strict $P$-labelled trees $(N,{E},\lambda)$, a {\em triplicate in $(N,{E},\lambda)$}\/ is a triple $(i,j,k)$ of nodes such that $k{E}i$, $k{E}j$, $i{\not=}j$ and $(N^{i},{E^{i}},\lambda^{i}){\simeq}(N^{j},{E^{j}},\lambda^{j})$.
\\
\\
Obviously, for all strict $P$-labelled trees $(N,{E},\lambda)$, if $(N,{E},\lambda)$ is non-nice then $(N,{E},\lambda)$ contains triplicates.
\begin{lemma}\label{lemma:useful:for:elimination:of:triplicates}
Let $(N,{E},\lambda)$ be a strict $P$-labelled tree and $(i,j,k)$ be a triplicate in $(N,{E},\lambda)$.
The structure $(N^{\prime},{E^{\prime}},\lambda^{\prime})$ such that
\begin{itemize}
\item $N^{\prime}{=}N{\setminus}N^{j}$,
\item $E^{\prime}$ is the binary relation on $N^{\prime}$ such that for all $l,m{\in}N^{\prime}$, $l{E^{\prime}}m$ if and only if $l{E}m$,
\item $\lambda^{\prime}$: $N^{\prime}\longrightarrow P$ is the function such that for all $l{\in}N^{\prime}$, $\lambda^{\prime}(l){=}\lambda(l)$,
\end{itemize}
is a strict $P$-labelled tree.
\end{lemma}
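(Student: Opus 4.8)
The plan is to verify directly the three defining conditions of a strict $P$-labelled tree for the structure $(N^{\prime},{E^{\prime}},\lambda^{\prime})$: namely, that $(N^{\prime},{E^{\prime}})$ is a tree, that $\lambda^{\prime}$ is $\leq$-monotone, and that the strictness condition holds. Since $\lambda^{\prime}$ is the restriction of $\lambda$ to $N^{\prime}$ and $E^{\prime}{\subseteq}E$, the last two conditions will be immediate, so the real content lies in the first.

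First I would establish that $(N^{\prime},{E^{\prime}})$ is a tree. The set $N^{j}{=}{E^{\star}}(j)$ consists exactly of $j$ together with all its descendants, so $N^{\prime}{=}N{\setminus}N^{j}$ is obtained by deleting the subtree rooted at $j$. The root $r$ of $(N,{E})$ survives: since $k{E}j$ we have $j{\not=}r$, and as nothing points into the root, $j{E^{\star}}r$ is impossible, whence $r{\not\in}N^{j}$. The key point is that every node $l{\in}N^{\prime}$ other than $r$ keeps its parent inside $N^{\prime}$: if $p$ is the parent of $l$ in $(N,{E})$ and we had $p{\in}N^{j}$, then $j{E^{\star}}p{E}l$ would give $j{E^{\star}}l$, forcing $l{\in}N^{j}$, a contradiction; hence $p{\in}N^{\prime}$ and $p{E^{\prime}}l$. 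Since $E^{\prime}{\subseteq}E$ introduces no new edges and hence no cycles, and every node of $N^{\prime}$ reaches $r$ along parent edges that all remain in $N^{\prime}$, it follows that $(N^{\prime},{E^{\prime}})$ is a tree with root $r$.

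The remaining conditions then fall out at once. Whenever $l{E^{\prime}}m$ we have $l{E}m$, so $\lambda$-monotonicity gives $\lambda^{\prime}(l){=}\lambda(l){\leq}\lambda(m){=}\lambda^{\prime}(m)$, and the strictness of $(N,{E},\lambda)$ gives $\lambda^{\prime}(l){<}\lambda^{\prime}(m)$; thus $(N^{\prime},{E^{\prime}},\lambda^{\prime})$ is a strict $P$-labelled tree. I expect the only genuinely substantive step to be the verification that deleting a full subtree preserves treeness---in particular, confirming that the parent of every surviving non-root node survives as well, so that no node of $N^{\prime}$ is orphaned. Notably, the isomorphism $(N^{i},{E^{i}},\lambda^{i}){\simeq}(N^{j},{E^{j}},\lambda^{j})$ from the triplicate plays no role in this lemma; only the fact that $N^{j}$ is closed under $E^{\star}$, i.e. is a full subtree, is used.
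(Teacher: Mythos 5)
Your proposal is correct, and on the labelling side it coincides with the paper's proof: both reduce to the fact that $E^{\prime}{\subseteq}E$ and $\lambda^{\prime}$ is the restriction of $\lambda$, so strictness is inherited edge by edge (the paper phrases this as a short contradiction argument, nothing more). The difference is one of completeness rather than of route: the paper verifies \emph{only} the strictness condition and leaves the tree property of $(N^{\prime},{E^{\prime}})$ entirely unargued, whereas you supply exactly that verification --- the root survives because $k{E}j$ forces $j$ not to be the root, and no surviving node is orphaned because an ancestor of a node outside $N^{j}{=}{E^{\star}}(j)$ cannot itself lie in $N^{j}$. Your closing observation is also accurate: the isomorphism $(N^{i},{E^{i}},\lambda^{i}){\simeq}(N^{j},{E^{j}},\lambda^{j})$ plays no role in this lemma and is only needed in Lemma~\ref{lemma:transforming:strict:labelled:trees:into:nice:labelled:trees}, where the deleted subtree must be re-embedded via $f_{ij}^{{-}1}$; what this lemma really uses is just that $N^{j}$ is ${E^{\star}}$-closed and does not contain the root.
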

\begin{proof}
Let $l,m{\in}N^{\prime}$.
For the sake of the contradiction, suppose $l{E^{\prime}}m$ and $\lambda^{\prime}(l){\not\subset}\lambda^{\prime}(m)$.
Hence, $l{E}m$.
Thus, $\lambda(l){\subset}\lambda(m)$.
Consequently, $\lambda^{\prime}(l){\subset}\lambda^{\prime}(m)$: a contradiction.
\medskip
\end{proof}
\begin{lemma}\label{lemma:transforming:strict:labelled:trees:into:nice:labelled:trees}
For all strict $P$-labelled trees $(N,{E},\lambda)$, if $(N,{E},\lambda)$ is non-nice then there exists a strict $P$-labelled tree $(N^{\prime},{E^{\prime}},\lambda^{\prime})$ such that $N^{\prime}{\subset}N$ and $(N,{E},\lambda){\sim}
$\linebreak$
(N^{\prime},{E^{\prime}},\lambda^{\prime})$.
\end{lemma}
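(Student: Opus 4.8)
The plan is to follow the pattern of the proof of Lemma~\ref{lemma:transforming:labelled:trees:into:strict:labelled:trees}, replacing ``duplicate'' by ``triplicate'' and invoking Lemma~\ref{lemma:useful:for:elimination:of:triplicates} in place of Lemma~\ref{lemma:useful:for:elimination:of:duplicates}. Starting from a non-nice strict $P$-labelled tree $(N,{E},\lambda)$, I would first extract a triplicate $(i,j,k)$, so that $k{E}i$, $k{E}j$, $i{\not=}j$ and $(N^{i},{E^{i}},\lambda^{i}){\simeq}(N^{j},{E^{j}},\lambda^{j})$. Then I would define $(N^{\prime},{E^{\prime}},\lambda^{\prime})$ exactly as in Lemma~\ref{lemma:useful:for:elimination:of:triplicates}, i.e. by deleting the whole subtree $N^{j}$ and restricting $E$ and $\lambda$ to the remaining nodes. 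That lemma immediately gives that $(N^{\prime},{E^{\prime}},\lambda^{\prime})$ is a strict $P$-labelled tree, and since $j{\in}N{\setminus}N^{\prime}$ while $N^{\prime}{\subseteq}N$, the inclusion $N^{\prime}{\subset}N$ is clear.

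It then remains to establish $(N,{E},\lambda){\sim}(N^{\prime},{E^{\prime}},\lambda^{\prime})$, which by definition requires a $P$-embedding each way. One direction is easy: the identity map $f^{\prime}$: $N^{\prime}\longrightarrow N$ is a $P$-embedding, because $E^{\prime}$ is just $E$ restricted and $\lambda^{\prime}$ is just $\lambda$ restricted, so $l{E^{\prime}}m$ gives $l{E}m$ hence $l{E^{\star}}m$, and the label condition holds trivially. For the other direction I would exploit the isomorphism witnessing the triplicate: fix a $P$-isomorphism $g$ from $(N^{j},{E^{j}},\lambda^{j})$ onto $(N^{i},{E^{i}},\lambda^{i})$ (the inverse of the isomorphism coming from $(N^{i},{E^{i}},\lambda^{i}){\simeq}(N^{j},{E^{j}},\lambda^{j})$), and define $f$: $N\longrightarrow N^{\prime}$ by $f(a){=}a$ when $a{\notin}N^{j}$ and $f(a){=}g(a)$ when $a{\in}N^{j}$. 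Since $N^{i}$ and $N^{j}$ are disjoint sibling subtrees, $g$ maps $N^{j}$ into $N^{i}{\subseteq}N^{\prime}$, so $f$ is well defined, and the label clause follows because $g$ preserves labels and $\lambda^{\prime}$ agrees with $\lambda$.

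The delicate point, and the step I expect to be the main obstacle, is checking the edge clause for $f$, namely that $a{E}b$ implies $f(a){{E^{\prime}}^{\star}}f(b)$. This splits according to where $a$ and $b$ sit relative to the deleted subtree, and the verification rests on the tree structure. When both $a,b$ avoid $N^{j}$ the edge survives verbatim in $E^{\prime}$; when both lie in $N^{j}$ the edge is an edge of $(N^{j},{E^{j}},\lambda^{j})$, so $g$ carries it to an $E^{i}$-edge, hence an $E^{\prime}$-edge; the case $a{\in}N^{j}$, $b{\notin}N^{j}$ is impossible because $N^{j}$ is downward closed; and the only genuinely mixed case, $a{\notin}N^{j}$ with $b{\in}N^{j}$, forces $b{=}j$ and $a{=}k$ since $k$ is the unique parent of $j$, whence $f(a){=}k$ and $f(b){=}g(j){=}i$ (isomorphisms send root to root), so the required $k{{E^{\prime}}^{\star}}i$ holds because $k{E}i$ and both $k,i{\in}N^{\prime}$. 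Assembling the two embeddings yields $(N,{E},\lambda){\sim}(N^{\prime},{E^{\prime}},\lambda^{\prime})$, completing the argument.
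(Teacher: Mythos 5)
Your proof is correct and follows essentially the same route as the paper: delete the subtree $N^{j}$, invoke Lemma~\ref{lemma:useful:for:elimination:of:triplicates} for strictness, use the identity as the embedding of $(N^{\prime},{E^{\prime}},\lambda^{\prime})$ into $(N,{E},\lambda)$, and use the map that is the identity off $N^{j}$ and the inverse isomorphism $f_{ij}^{-1}$ (your $g$) on $N^{j}$ for the other direction. The only difference is that you spell out the case analysis for the edge clause, which the paper leaves as an exercise to the reader, and your treatment of the mixed case $a{\notin}N^{j}$, $b{\in}N^{j}$ (forcing $b{=}j$, $a{=}k$, $g(j){=}i$) is exactly the verification needed there.
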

\begin{proof}
Let $(N,{E},\lambda)$ be a strict $P$-labelled tree.
Suppose $(N,{E},\lambda)$ is non-nice.
Hence, there exists a triplicate $(i,j,k)$ in $(N,{E},\lambda)$.
Thus, $k{E}i$, $k{E}j$, $i{\not=}j$ and $(N^{i},{E^{i}},\lambda^{i}){\simeq}(N^{j},{E^{j}},\lambda^{j})$.
Consequently, there exists a $P$-isomorphism $f_{ij}$ from $(N^{i},{E^{i}},\lambda^{i})$ onto $(N^{j},{E^{j}},\lambda^{j})$.
Let $(N^{\prime},{E^{\prime}},\lambda^{\prime})$ be the structure such that
\begin{itemize}
\item $N^{\prime}{=}N{\setminus}N^{j}$,
\item $E^{\prime}$ is the binary relation on $N^{\prime}$ such that for all $l,m{\in}N^{\prime}$, $l{E^{\prime}}m$ if and only if $l{E}m$,
\item $\lambda^{\prime}$: $N^{\prime}\longrightarrow P$ is the function such that for all $l{\in}N^{\prime}$, $\lambda^{\prime}(l){=}\lambda(l)$.
\end{itemize}
By Lemma~\ref{lemma:useful:for:elimination:of:triplicates}, $(N^{\prime},{E^{\prime}},\lambda^{\prime})$ is a strict $P$-labelled tree.
\\
\\
Obviously, $N^{\prime}{\subset}N$.
\\
\\
Moreover, the reader may easily verify that the function $f$: $N\longrightarrow N^{\prime}$ such that for all $l{\in}N$, if $l{\not\in}N^{j}$ then $f(l){=}l$ else $f(l){=}f_{ij}^{{-}1}(l)$ is a $P$-embedding of $(N,{E},\lambda)$ into $(N^{\prime},{E^{\prime}},\lambda^{\prime})$.
\\
\\
In other respect, the reader may easily verify that the function $f^{\prime}$: $N^{\prime}\longrightarrow N$ such that for all $l{\in}N^{\prime}$, $f^{\prime}(l){=}l$ is a $P$-embedding of $(N^{\prime},{E^{\prime}},\lambda^{\prime})$ into $(N,{E},\lambda)$.
\\
\\
Hence, $(N,{E},\lambda){\sim}(N^{\prime},{E^{\prime}},\lambda^{\prime})$.
\medskip
\end{proof}
%
%
%
%
\begin{lemma}\label{lemma:about:nice:A:trees:and:relation:equivalence:sim:third}
For all strict $P$-labelled trees $(N,{E},\lambda)$, there exists a nice $P$-labelled tree $(N^{\prime},{E^{\prime}},\lambda^{\prime})$ such that $(N,{E},\lambda){\sim}(N^{\prime},{E^{\prime}},\lambda^{\prime})$.
\end{lemma}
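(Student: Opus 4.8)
The plan is to argue by induction on $\card(N)$, using Lemma~\ref{lemma:transforming:strict:labelled:trees:into:nice:labelled:trees} as the single inductive step and the fact that $\sim$ is an equivalence relation (hence reflexive and transitive) to glue the pieces together. This mirrors exactly the way Lemma~\ref{lemma:about:nice:A:trees:and:relation:equivalence:sim} was obtained from Lemma~\ref{lemma:transforming:labelled:trees:into:strict:labelled:trees}, so I expect a short proof.

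First I would fix a strict $P$-labelled tree $(N,{E},\lambda)$ and split into two cases. If $(N,{E},\lambda)$ is already nice, there is nothing to do: taking $(N^{\prime},{E^{\prime}},\lambda^{\prime}){=}(N,{E},\lambda)$ works, since $\sim$ is reflexive.

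If instead $(N,{E},\lambda)$ is non-nice, then I would invoke Lemma~\ref{lemma:transforming:strict:labelled:trees:into:nice:labelled:trees} to produce a strict $P$-labelled tree $(N^{\prime\prime},{E^{\prime\prime}},\lambda^{\prime\prime})$ with $N^{\prime\prime}{\subset}N$ and $(N,{E},\lambda){\sim}(N^{\prime\prime},{E^{\prime\prime}},\lambda^{\prime\prime})$. Since $N^{\prime\prime}{\subset}N$ forces $\card(N^{\prime\prime}){<}\card(N)$, the induction hypothesis applies to the strict tree $(N^{\prime\prime},{E^{\prime\prime}},\lambda^{\prime\prime})$, yielding a nice $P$-labelled tree $(N^{\prime},{E^{\prime}},\lambda^{\prime})$ with $(N^{\prime\prime},{E^{\prime\prime}},\lambda^{\prime\prime}){\sim}(N^{\prime},{E^{\prime}},\lambda^{\prime})$. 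Transitivity of $\sim$ then gives $(N,{E},\lambda){\sim}(N^{\prime},{E^{\prime}},\lambda^{\prime})$, as required.

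The only point deserving attention — and, I suspect, the reason the supporting lemma was deliberately phrased to preserve strictness rather than mere tree-hood — is that the object handed to the induction hypothesis must again be a \emph{strict} $P$-labelled tree, for otherwise the hypothesis would not apply; Lemma~\ref{lemma:transforming:strict:labelled:trees:into:nice:labelled:trees} guarantees exactly this, so no genuine obstacle arises. Termination of the induction is ensured because each application of the lemma strictly decreases the finite number of nodes, and the whole construction can equivalently be presented as iterating the elimination of triplicates until a nice tree is reached.
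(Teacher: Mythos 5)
Your proof is correct and follows the same route as the paper: the paper's own proof is just the one-line citation ``By Lemma~\ref{lemma:transforming:strict:labelled:trees:into:nice:labelled:trees}'', and your induction on $\card(N)$ (with reflexivity of $\sim$ in the nice case and transitivity to chain the steps) is precisely the argument that citation leaves implicit. Your remark that the supporting lemma must output a \emph{strict} tree for the induction to go through is also exactly why the paper phrased it that way.
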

\begin{proof}
By Lemma~\ref{lemma:transforming:strict:labelled:trees:into:nice:labelled:trees}.
\medskip
\end{proof}
\begin{lemma}\label{lemma:the:quotient:by:simeq:of:the:set:of:all:nice:trees:is:finite}
Modulo $\simeq$, there exists only a finite number of nice $P$-labelled trees.
\end{lemma}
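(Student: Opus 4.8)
The plan is to exploit the recursive structure of nice trees together with the height bound coming from strictness. First I would record that for every node $i$ of a nice $P$-labelled tree $(N,{E},\lambda)$, the restriction $(N^{i},{E^{i}},\lambda^{i})$ is again a nice $P$-labelled tree: it is strict because $(N,{E},\lambda)$ is, and the niceness condition at any node of the subtree is inherited from the niceness condition at that same node of the whole tree, since a node's children and the subtrees hanging at them coincide in $(N,{E},\lambda)$ and in $(N^{i},{E^{i}},\lambda^{i})$.

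Next, for each $a{\in}P$ let me write $S_{a}$ for the set of $\simeq$-classes of nice $P$-labelled trees whose root is labelled $a$, and set $U_{a}{=}\bigcup_{b>a}S_{b}$, a union of pairwise disjoint sets since the label of the root is an $\simeq$-invariant. I would then establish the central bijection: the map sending a nice tree whose root is labelled $a$ to the collection of $\simeq$-classes of the subtrees hanging at the children of its root is a bijection between $S_{a}$ and the set of finite subsets of $U_{a}$. Well-definedness and injectivity rest on the standard fact that a rooted tree is determined up to isomorphism by its root label together with the multiset of isomorphism types of its root-subtrees; niceness at the root forces these subtrees to be pairwise non-$\simeq$, so the multiset is a genuine set, i.e.\ a finite subset of $U_{a}$ (finite because $N$ is finite). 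Surjectivity is immediate: given any finite subset of $U_{a}$, attaching one representative of each class as a child of a fresh root labelled $a$ yields a strict tree (each edge out of the root goes from $a$ to some $b{>}a$) that is nice, since the attached subtrees are pairwise non-isomorphic and individually nice.

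The finiteness then follows by induction along $\leq$, run downward from the maximal elements, using as induction parameter the length of the longest strictly increasing chain of $P$ starting at $a$. For a maximal $a$ there is no $b{>}a$, so $U_{a}{=}\emptyset$ and $\card(S_{a}){=}1$, namely the one-node tree labelled $a$. For the inductive step, every $b{>}a$ has a strictly shorter such chain, so by the induction hypothesis each $S_{b}$ is finite; hence $U_{a}$ is finite, every finite subset of $U_{a}$ is a subset of $U_{a}$, and $\card(S_{a}){=}2^{\card(U_{a})}$ is finite. Since every element of $P$ is reached and $P$ is finite, the total number of $\simeq$-classes of nice $P$-labelled trees equals $\sum_{a{\in}P}\card(S_{a})$, a finite number.

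I expect the only delicate point to be the verification of the central bijection, and specifically the careful formulation that niceness at the root turns the multiset of root-subtrees into a set and that this set, together with the label $a$, recovers the tree up to $\simeq$. Everything else, namely the inheritance of niceness under restriction and the fact that strictness bounds the height by $\card(P)$ so that the downward induction is well founded, is routine.
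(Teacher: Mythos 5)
Your proof is correct, and it shares with the paper the central combinatorial observation: by niceness, the subtrees hanging at the children of the root are pairwise non-$\simeq$, so a nice tree is determined up to $\simeq$ by its root label together with a \emph{set} (not a multiset) of $\simeq$-classes of smaller nice trees, whence a power-set count. Where you differ is in how the induction is organized. The paper counts by height: it lets $\nPlt(h)$ be the number of classes of nice trees of height at most $h$, observes $\nPlt(0){=}\card(P)$ and $\nPlt(h){=}\card(P){\times}2^{\nPlt(h{-}1)}$, and terminates using the fact that strictness bounds the height of the tree by $\card(P)$. You instead stratify by the root label and induct downward along the partial order on $P$ (well founded because $P$ is finite); strictness enters only through the fact that the root-subtrees of a tree with root label $a$ carry labels $b{>}a$, so the height bound is never needed. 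A small dividend of your route is exactness: your map from $S_{a}$ to the subsets of $U_{a}$ is a genuine bijection, giving $\card(S_{a}){=}2^{\card(U_{a})}$, whereas the paper's recurrence read as an equality slightly overcounts, since not every set of classes of nice trees of height at most $h{-}1$ can be attached under a given root label (again because of strictness); read as the inequality $\nPlt(h){\leq}\card(P){\times}2^{\nPlt(h{-}1)}$ it is all that the finiteness conclusion requires. Both arguments are equally elementary; yours makes explicit the verification of the bijection (invariance under $\simeq$, injectivity via the multiset-to-set reduction, surjectivity by attaching representatives of the given classes to a fresh root) that the paper delegates to the reader.
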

\begin{proof}
Reminding that for all strict $P$-labelled trees $(N,{E},\lambda)$, the height of the tree $(N,E)$ does not exceed $\card(P)$, the reader may easily verify that for all $h{\in}\N$, if $h{\leq}\card(P)$ then modulo $\simeq$, the number $\nPlt(h)$ of nice $P$-labelled trees of height at most $h$ is such that if $h{=}0$ then $\nPlt(h){=}\card(P)$ else $\nPlt(h){=}\card(P){\times}2^{\nPlt(h{-}1)}$.
Hence, modulo $\simeq$, there exists only a finite number of nice $P$-labelled trees.
\medskip
\end{proof}
\begin{lemma}\label{lemma:about:nice:A:trees:quotient:is:finite}
Modulo $\sim$, there exists only a finite number of $P$-labelled trees.
\end{lemma}
\begin{proof}
By Lemmas~\ref{lemma:about:nice:A:trees:and:relation:equivalence:sim}, \ref{lemma:about:nice:A:trees:and:relation:equivalence:sim:third} and~\ref{lemma:the:quotient:by:simeq:of:the:set:of:all:nice:trees:is:finite} and the fact that $\simeq$ is finer than $\sim$.
\medskip
\end{proof}
\begin{lemma}\label{proposition:about:labelled:trees:first}
For all countably indexed families $(N_{m},{E_{m}},\lambda_{m})_{m{\in}\N}$ of $P$-label\-led trees, there exists a strictly increasing function $\theta$: $\N{\longrightarrow}\N$ such that for all $m,n{\in}\N$, $(N_{\theta(m)},{E_{\theta(m)}},\lambda_{\theta(m)}){\sim}(N_{\theta(n)},{E_{\theta(n)}},\lambda_{\theta(n)})$.
\end{lemma}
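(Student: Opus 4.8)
The plan is to reduce the statement to the infinite pigeonhole principle, relying on the finiteness of the quotient established in Lemma~\ref{lemma:about:nice:A:trees:quotient:is:finite}. Since that lemma tells us that modulo $\sim$ there exist only finitely many $P$-labelled trees, the equivalence relation $\sim$ partitions the set of all $P$-labelled trees into finitely many classes, say $C_{1},\ldots,C_{k}$ for some $k{\in}\N$.

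First I would use this partition to colour the index set $\N$: to each $m{\in}\N$, I assign the unique $a{\in}(k)$ such that $(N_{m},{E_{m}},\lambda_{m})$ lies in $C_{a}$. This defines a function from $\N$ into the finite set $(k)$. Since $\N$ is infinite and $(k)$ is finite, the pigeonhole principle yields an $a{\in}(k)$ whose preimage $I{=}\{m{\in}\N$: $(N_{m},{E_{m}},\lambda_{m}){\in}C_{a}\}$ is infinite.

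Next I would enumerate $I$ in increasing order. Since $I$ is an infinite subset of $\N$, listing its elements in increasing order yields a strictly increasing function $\theta$: $\N{\longrightarrow}\N$ whose range is exactly $I$; hence $\theta$ is strictly increasing by construction and, for every $m{\in}\N$, we have $(N_{\theta(m)},{E_{\theta(m)}},\lambda_{\theta(m)}){\in}C_{a}$.

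Finally, since all the trees $(N_{\theta(m)},{E_{\theta(m)}},\lambda_{\theta(m)})$ lie in the single $\sim$-class $C_{a}$, and $\sim$ is an equivalence relation, any two of them are related by $\sim$: for all $m,n{\in}\N$, $(N_{\theta(m)},{E_{\theta(m)}},\lambda_{\theta(m)}){\sim}(N_{\theta(n)},{E_{\theta(n)}},\lambda_{\theta(n)})$, as required. I expect no real obstacle here, because the entire combinatorial difficulty has already been absorbed into Lemma~\ref{lemma:about:nice:A:trees:quotient:is:finite}; the only point demanding a modicum of care is verifying that the enumeration $\theta$ of the infinite set $I$ is genuinely strictly increasing, which is immediate.
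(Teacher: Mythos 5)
Your proof is correct and is essentially the paper's own argument: the paper proves this lemma simply by citing Lemma~\ref{lemma:about:nice:A:trees:quotient:is:finite}, leaving implicit exactly the pigeonhole-plus-increasing-enumeration reasoning that you spell out. No gap; your write-up just makes the elided details explicit.
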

\begin{proof}
By Lemma~\ref{lemma:about:nice:A:trees:quotient:is:finite}.
\medskip
\end{proof}
Now, we present our promised result about $P$-labelled trees.
\begin{lemma}\label{proposition:about:labelled:trees:to:be:dreary}
For all countably indexed families $(N_{m},{E_{m}},\lambda_{m})_{m{\in}\N}$ of $P$-label\-led trees, there exists $n{\in}\N$ such that the finitely indexed family $(N_{m},{E_{m}},\lambda_{m})_{m{\in}(n)}$ of $P$-labelled trees is dreary.
\end{lemma}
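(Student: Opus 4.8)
The plan is to deduce this statement directly from Lemma~\ref{proposition:about:labelled:trees:first}, which already supplies an infinite $\sim$-equivalent subfamily; all that then remains is to select two indices of the right shape to meet the definition of drearyness.

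First I would apply Lemma~\ref{proposition:about:labelled:trees:first} to the given countably indexed family $(N_{m},{E_{m}},\lambda_{m})_{m{\in}\N}$, obtaining a strictly increasing function $\theta$: $\N{\longrightarrow}\N$ such that $(N_{\theta(m)},{E_{\theta(m)}},\lambda_{\theta(m)}){\sim}(N_{\theta(p)},{E_{\theta(p)}},\lambda_{\theta(p)})$ for all $m,p{\in}\N$. Since $\theta$ is strictly increasing, a routine induction gives $\theta(k){\geq}k$ for every $k{\in}\N$; in particular $\theta(1){\geq}1$, and again by strict monotonicity $\theta(1){<}\theta(2)$.

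Next I would set $n{=}\theta(2)$ and put forward $m{=}\theta(1)$ as the witness demanded by the definition. Indeed, $1{\leq}\theta(1){<}\theta(2){=}n$ shows at once that $m{\in}(n)$ and that $m{<}n$, while the choice of $\theta$ yields $(N_{\theta(1)},{E_{\theta(1)}},\lambda_{\theta(1)}){\sim}(N_{\theta(2)},{E_{\theta(2)}},\lambda_{\theta(2)})$, that is $(N_{m},{E_{m}},\lambda_{m}){\sim}(N_{n},{E_{n}},\lambda_{n})$. Hence the finitely indexed family $(N_{m},{E_{m}},\lambda_{m})_{m{\in}(n)}$ is dreary, as required.

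There is essentially no obstacle at this last step: the whole combinatorial weight of the argument has already been spent in establishing that there are only finitely many $P$-labelled trees modulo $\sim$ (Lemma~\ref{lemma:about:nice:A:trees:quotient:is:finite}) and in the consequent pigeonhole extraction of the infinite homogeneous subfamily (Lemma~\ref{proposition:about:labelled:trees:first}). The only point asking for a little care is that the index set $(n)$ starts at $1$, so one must avoid invoking $\theta(0)$, whose value might be $0$ and hence fall outside $(n)$; choosing the pair $\theta(1){<}\theta(2)$ rather than $\theta(0){<}\theta(1)$ sidesteps this and guarantees the witness genuinely lies in $(n)$.
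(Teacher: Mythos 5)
Your proof is correct and takes essentially the same approach as the paper: the paper's own proof of this lemma consists of the single line ``By Lemma~\ref{proposition:about:labelled:trees:first}'', and your argument simply makes explicit the deduction left to the reader, including the sensible precaution of choosing the pair $\theta(1){<}\theta(2)$ so that the witness index is guaranteed to lie in $(n)$.
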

\begin{proof}
By Lemma~\ref{proposition:about:labelled:trees:first}.
\medskip
\end{proof}
\section{Syntax}\label{section:closed:sets:of:formulas}
Let $\At$ be a countably infinite set (with typical members called {\em atoms}\/ and denoted $p$, $q$, etc).
\\
\\
Let $\Fo$ be the countably infinite set (with typical members called {\em formulas}\/ and denoted $A$, $B$, etc) of finite words over $\At{\cup}\{{\rightarrow},{\top},{\bot},{\wedge},{\vee},{\square},{\lozenge},(,)\}$ defined by\footnote{Notice that each connective but $\rightarrow$ has a dual one.
Imitating the ``Brouwerian implication'' introduced by Rauszer~\cite{Rauszer:1980} within the context $\IPL$, a dual implication could be added to the language of $\LIK4$ as well.
We will address this possibility in future work.}
$$A\ {::=}\ p{\mid}(A{\rightarrow}A){\mid}{\top}{\mid}{\bot}{\mid}(A{\wedge}A){\mid}(A{\vee}A){\mid}{\square}A{\mid}{\lozenge}A$$
where $p$ ranges over $\At$.
\\
\\
For all $A{\in}\Fo$, the {\em length of $A$}\/ (denoted ${\parallel}A{\parallel}$) is the number of symbols in $A$.
\\
\\
We follow the standard rules for omission of the parentheses.
\\
\\
For all $A{\in}\Fo$, when we write ``$\neg A$'', we mean ``$A{\rightarrow}{\bot}$''.
\\
\\
Let ${\bowtie}$ be the {\em binary relation of accessibility between sets of formulas}\/ such that for all sets $\Gamma,\Delta$ of formulas, $\Gamma{\bowtie}\Delta$ if and only if for all $A{\in}\Fo$, the following conditions hold:\footnote{Obviously, there exist sets $\Gamma,\Delta$ of formulas such that $\Gamma{\bowtie}\Delta$ and $\Delta{\not\bowtie}\Gamma$.
However, we use a symmetric symbol to denote the binary relation of accessibility between sets of formulas, seeing that in its definition, $\square$-formulas and $\lozenge$-formulas play symmetric roles.}
\begin{itemize}
\item if ${\square}A{\in}\Gamma$ then $A{\in}\Delta$,
\item if $A{\in}\Delta$ then ${\lozenge}A{\in}\Gamma$.
\end{itemize}
A set $\Sigma$ of formulas is {\em closed}\/ if for all $A,B{\in}\Fo$,
\begin{itemize}
\item if $A{\rightarrow}B{\in}\Sigma$ then $A{\in}\Sigma$ and $B{\in}\Sigma$,
\item if $A{\wedge}B{\in}\Sigma$ then $A{\in}\Sigma$ and $B{\in}\Sigma$,
\item if $A{\vee}B{\in}\Sigma$ then $A{\in}\Sigma$ and $B{\in}\Sigma$,
\item if ${\square}A{\in}\Sigma$ then $A{\in}\Sigma$,
\item if ${\lozenge}A{\in}\Sigma$ then $A{\in}\Sigma$.
\end{itemize}
For all $A{\in}\Fo$, let $\Sigma_{A}$ be the least closed set of formulas containing $A$.
\begin{lemma}\label{lemma:cardinal:sigma:A}
For all $A{\in}\Fo$, $\card(\Sigma_{A}){\leq}{\parallel}A{\parallel}$.
\end{lemma}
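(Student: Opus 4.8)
The plan is to proceed by induction on the structure of $A$ (equivalently, on ${\parallel}A{\parallel}$), exploiting the fact that ``closed'' here means nothing more than ``closed under immediate subformulas'', so that $\Sigma_{A}$ is just the set of subformulas of $A$. The key preliminary observation, which I would establish first, is an explicit upper description of $\Sigma_{A}$ in terms of the analogous sets for the immediate subformulas of $A$: namely $\Sigma_{B{\rightarrow}C}{\subseteq}\{B{\rightarrow}C\}{\cup}\Sigma_{B}{\cup}\Sigma_{C}$, and likewise for ${\wedge}$ and ${\vee}$, while $\Sigma_{{\square}B}{\subseteq}\{{\square}B\}{\cup}\Sigma_{B}$ and $\Sigma_{{\lozenge}B}{\subseteq}\{{\lozenge}B\}{\cup}\Sigma_{B}$. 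To justify such an inclusion it suffices to check that the right-hand side is itself a closed set containing the formula on the left; since $\Sigma_{A}$ is by definition the \emph{least} closed set containing $A$, the inclusion then follows by minimality. Verifying closedness is immediate: the only genuinely new member, say $B{\rightarrow}C$, demands $B$ and $C$, which lie in $\Sigma_{B}$ and $\Sigma_{C}$ respectively, whereas every other member already sits in one of the closed sets $\Sigma_{B}$, $\Sigma_{C}$.

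For the base cases $A{=}p$, $A{=}{\top}$ and $A{=}{\bot}$, I would simply note that $\Sigma_{A}{=}\{A\}$, whence $\card(\Sigma_{A}){=}1{=}{\parallel}A{\parallel}$. For the inductive step I would combine the inclusions above with the induction hypothesis and the elementary bound $\card(U{\cup}V){\leq}\card(U){+}\card(V)$. In the case $A{=}B{\rightarrow}C$ this yields $\card(\Sigma_{A}){\leq}1{+}\card(\Sigma_{B}){+}\card(\Sigma_{C}){\leq}1{+}{\parallel}B{\parallel}{+}{\parallel}C{\parallel}$, and since ${\parallel}A{\parallel}{=}{\parallel}B{\parallel}{+}{\parallel}C{\parallel}{+}3$ (the two parentheses and the arrow), the desired inequality $\card(\Sigma_{A}){\leq}{\parallel}A{\parallel}$ holds with room to spare. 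The cases of ${\wedge}$ and ${\vee}$ are identical, and in the modal cases $A{=}{\square}B$ and $A{=}{\lozenge}B$ one gets $\card(\Sigma_{A}){\leq}1{+}\card(\Sigma_{B}){\leq}1{+}{\parallel}B{\parallel}{=}{\parallel}A{\parallel}$.

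I do not anticipate any genuine obstacle, as this is a routine structural induction; the only points requiring care are bookkeeping ones. First, one must confirm that the asserted superset is both closed \emph{and} contains $A$ before invoking the minimality of $\Sigma_{A}$. Second, one must count the symbols of each compound formula correctly, remembering that the grammar parenthesizes every binary connective, so that a slack of $3$ (rather than $1$) appears in the binary cases; consequently the inequality is tight only at atoms and the constants ${\top}$ and ${\bot}$. Any overlap between $\Sigma_{B}$ and $\Sigma_{C}$ only sharpens the bound, so the crude union estimate is all that is needed.
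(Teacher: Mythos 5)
Your proposal is correct and takes essentially the same approach as the paper, whose entire proof is ``By induction on $A{\in}\Fo$''; you have simply filled in the details (the minimality argument for the inclusions, the union bound, and the symbol count). One negligible side remark is off: equality can also hold in the modal cases (e.g. $\card(\Sigma_{{\square}p}){=}2{=}{\parallel}{\square}p{\parallel}$), not only at atoms and the constants, but this does not affect the validity of the bound.
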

\begin{proof}
By induction on $A{\in}\Fo$.
\medskip
\end{proof}
For all $A{\in}\Fo$, let $P_{A}{=}\{{-}1\}{\cup}\wp(\Sigma_{A})$.
\begin{lemma}
For all $A{\in}\Fo$, $\card(P_{A}){\leq}1{+}2^{{\parallel}A{\parallel}}$.
\end{lemma}
\begin{proof}
By Lemma~\ref{lemma:cardinal:sigma:A}.
\medskip
\end{proof}
For all $A{\in}\Fo$, let $\leq_{A}$ be the least partial order on $P_{A}$ such that
\begin{itemize}
\item for all $\Delta{\in}\wp(\Sigma_{A})$, ${-}1{\leq_{A}}\Delta$,
\item for all $\Gamma,\Delta{\in}\wp(\Sigma_{A})$, if $\Gamma{\subseteq}\Delta$ then $\Gamma{\leq_{A}}\Delta$.
\end{itemize}
\section{Relational semantics}\label{section:semantics}
A {\em frame}\/ is a relational structure of the form $(W,{\leq},{R})$ where $W$ is a nonempty set, $\leq$ is a preorder on $W$ and ${R}$ is a binary relation on $W$.
\\
\\
A frame $(W,{\leq},{R})$ is {\em finite}\/ if $W$ is finite.
\\
\\
A frame $(W,{\leq},{R})$ is {\em transitive}\/ if $R$ is transitive.
\\
\\
A frame $(W,{\leq},{R})$ is {\em upward confluent}\/ if ${R}{\circ}{\geq}{\subseteq}{\geq}{\circ}{R}$, that is to say for all $s,t,u{\in}W$, if $s{R}t$ and $u{\leq}t$ then there exists $v{\in}W$ such that $v{\leq}s$ and $v{R}u$.
\\
\\
A frame $(W,{\leq},{R})$ is {\em downward confluent}\/ if ${\leq}{\circ}{R}{\subseteq}{R}{\circ}{\leq}$, that is to say for all $s,t,u{\in}W$, if $s{\leq}t$ and $t{R}u$ then there exists $v{\in}W$ such that $s{R}v$ and $v{\leq}u$.
\\
\\
A frame $(W,{\leq},{R})$ is {\em forward confluent}\/ if ${\geq}{\circ}{R}{\subseteq}{R}{\circ}{\geq}$, that is to say for all $s,t,u{\in}W$, if $t{\leq}s$ and $t{R}u$ then there exists $v{\in}W$ such that $s{R}v$ and $u{\leq}v$.
\\
\\
The elementary conditions characterizing upward confluent frames and downward confluent frames have been considered in~\cite{Bozic:Dosen:1984} where they have received no specific name.
The elementary condition characterizing forward confluent frames has been considered in~\cite{FischerServi:1984} where it has been called ``connecting property'' and in~\cite[Chapter~$3$]{Simpson:1994} where it has been called ``$(\mathbf{F}1)$''.
These $3$~elementary conditions have also been considered in~\cite{Plotkin:Stirling:1986}.
\\
\\
A {\em valuation on a frame $(W,{\leq},{R})$}\/ is a function $V$: $\At\longrightarrow\wp(W)$ such that for all $p{\in}\At$, $V(p)$ is $\leq$-closed.
\\
\\
A {\em model}\/ is a $4$-tuple consisting of the $3$ components of a frame and a valuation on that frame.
\\
\\
With respect to a model $(W,{\leq},{R},V)$, for all $s{\in}W$ and for all $A{\in}\Fo$, the {\em satisfiability of $A$ at $s$ in $(W,{\leq},{R},V)$}\/ (in symbols $(W,{\leq},{R},V),s{\models}A$) is inductively defined as follows:
\begin{itemize}
\item $(W,{\leq},{R},V),s{\models}p$ if and only if $s{\in}V(p)$,
\item $(W,{\leq},{R},V),s{\models}A{\rightarrow}B$ if and only if for all $t{\in}W$, if $s{\leq}t$ and $(W,{\leq},{R},V),t
$\linebreak$
{\models}A$ then $(W,{\leq},{R},V),t{\models}B$,
\item $(W,{\leq},{R},V),s{\models}{\top}$,
\item $(W,{\leq},{R},V),s{\not\models}{\bot}$,
\item $(W,{\leq},{R},V),s{\models}A{\wedge}B$ if and only if $(W,{\leq},{R},V),s{\models}A$ and $(W,{\leq},{R},V),s
$\linebreak$
{\models}B$,
\item $(W,{\leq},{R},V),s{\models}A{\vee}B$ if and only if either $(W,{\leq},{R},V),s{\models}A$, or $(W,{\leq},{R},
$\linebreak$
V),s{\models}B$,
\item $(W,{\leq},{R},V),s{\models}{\square}A$ if and only if for all $t{\in}W$, if $s{R}t$ then $(W,{\leq},{R},V),t{\models}
$\linebreak$
A$,
\item $(W,{\leq},{R},V),s{\models}{\lozenge}A$ if and only if there exists $t{\in}W$ such that $s{R}t$ and $(W,{\leq},{R},V),t{\models}A$.
\end{itemize}
For all models $(W,{\leq},{R},V)$, for all $s{\in}W$ and for all $A{\in}\Fo$, if $(W,{\leq},{R},V)$ is clear from the context then when we write ``$s{\models}A$'', we mean ``$(W,{\leq},{R},V),s{\models}A$''.
\begin{lemma}[Heredity Property]\label{lemma:heredity:property}
Let $(W,{\leq},{R},V)$ be a model based on a
\linebreak
downward confluent and forward confluent frame.
For all $A{\in}\Fo$ and for all $s,t{\in}W$, if $s{\models}A$ and $s{\leq}t$ then $t{\models}A$.\footnote{The heredity property is probably the most fundamental property of the relational semantics of intuitionistically-based logics.}
\end{lemma}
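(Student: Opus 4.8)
The plan is to proceed by induction on the structure of the formula $A$, keeping the model $(W,{\leq},{R},V)$ and the nodes $s,t{\in}W$ with $s{\leq}t$ fixed, assuming $s{\models}A$ and aiming at $t{\models}A$. The atomic base case is immediate: if $s{\models}p$ then $s{\in}V(p)$, and since $V(p)$ is $\leq$-closed by the definition of a valuation, $s{\leq}t$ yields $t{\in}V(p)$, i.e. $t{\models}p$. The cases $A{=}{\top}$ and $A{=}{\bot}$ are trivial, the former holding always and the latter being vacuous since $s{\not\models}{\bot}$.

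The propositional connectives $\wedge$ and $\vee$ are handled directly by the induction hypothesis applied to the immediate subformulas: from $s{\models}B$ (and/or $s{\models}C$) one gets $t{\models}B$ (and/or $t{\models}C$), whence the relevant satisfaction clause gives $t{\models}A$. The implication case $A{=}B{\rightarrow}C$ does not require the confluence hypotheses at all; it rests only on the transitivity of the preorder $\leq$. Indeed, for any $u$ with $t{\leq}u$ and $u{\models}B$, transitivity gives $s{\leq}u$, and then $s{\models}B{\rightarrow}C$ forces $u{\models}C$, so that $t{\models}B{\rightarrow}C$.

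The modal cases are where the frame conditions are used, and I expect the only delicate point to be the bookkeeping about the orientation of the relations. For $A{=}{\square}B$, take any $u$ with $t{R}u$; from $s{\leq}t$ and $t{R}u$, downward confluence (${\leq}{\circ}{R}{\subseteq}{R}{\circ}{\leq}$) produces some $v$ with $s{R}v$ and $v{\leq}u$. The assumption $s{\models}{\square}B$ gives $v{\models}B$, and the induction hypothesis applied along $v{\leq}u$ delivers $u{\models}B$; hence $t{\models}{\square}B$. For $A{=}{\lozenge}B$, the assumption $s{\models}{\lozenge}B$ provides some $u$ with $s{R}u$ and $u{\models}B$; from $s{\leq}t$ and $s{R}u$, forward confluence (${\geq}{\circ}{R}{\subseteq}{R}{\circ}{\geq}$) yields some $v$ with $t{R}v$ and $u{\leq}v$, and the induction hypothesis along $u{\leq}v$ gives $v{\models}B$, so $t{\models}{\lozenge}B$. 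The main thing to check carefully is that the two confluence conditions are invoked with the correct orientation: downward confluence for $\square$, to pull an $R$-successor of $t$ back to an $R$-successor of $s$, and forward confluence for $\lozenge$, to push an $R$-successor of $s$ forward to an $R$-successor of $t$. This is precisely what renders both modal clauses monotone along $\leq$, and it explains why exactly these two frame conditions (and not the upward confluence condition) are the hypotheses of the lemma.
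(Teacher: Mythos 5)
Your proposal is correct and takes exactly the approach of the paper, which proves the lemma ``by induction on $A{\in}\Fo$'' and leaves all details to the reader: your case analysis supplies those details, and in particular you invoke downward confluence for $\square$ and forward confluence for $\lozenge$ with the right orientation. One small caveat: despite your opening phrase about keeping $s,t$ fixed, the induction statement must be read as universally quantified over the pair of worlds, since in the modal cases you apply the induction hypothesis to the new pairs $v{\leq}u$ and $u{\leq}v$ --- which is indeed how your argument proceeds.
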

\begin{proof}
By induction on $A{\in}\Fo$.
\medskip
\end{proof}
Our definition of the satisfiability of formulas is the one introduced by Bo\v{z}i\'{c} and Do\v{s}en~\cite{Bozic:Dosen:1984}.
\\
\\
Fischer Servi~\cite{FischerServi:1984} defines the satisfiability of $\square$-formulas and $\lozenge$-formulas as follows: $s{\models_{\mathtt{FS}}}{\square}A$ if and only if for all $t{\in}W$, if $s{\leq}t$ then for all $u{\in}W$, if $t{R}u$ then $u{\models_{\mathtt{FS}}}A$; $s{\models_{\mathtt{FS}}}{\lozenge}A$ if and only if there exists $t{\in}W$ such that $s{R}t$ and $t{\models_{\mathtt{FS}}}A$.
\\
\\
P\v{r}enosil~\cite{Prenosil:2014} defines the satisfiability of $\square$-formulas and $\lozenge$-formulas as follows: $s{\models_{\mathtt{P}}}{\square}A$ if and only if for all $t{\in}W$, if $s{\leq}t$ then for all $u{\in}W$, if $t{R}u$ then $u{\models_{\mathtt{P}}}A$; $s{\models_{\mathtt{P}}}{\lozenge}A$ if and only if there exists $t{\in}W$ such that $s{\geq}t$ and there exists $u{\in}W$ such that $t{R}u$ and $u{\models_{\mathtt{P}}}A$.
\\
\\
Wijesekera~\cite{Wijesekera:1990} defines the satisfiability of $\square$-formulas and $\lozenge$-formulas as follows: $s{\models_{\mathtt{W}}}{\square}A$ if and only if for all $t{\in}W$, if $s{\leq}t$ then for all $u{\in}W$, if $t{R}u$ then $u{\models_{\mathtt{W}}}A$; $s{\models_{\mathtt{W}}}{\lozenge}A$ if and only if for all $t{\in}W$, if $s{\leq}t$ then there exists $u{\in}W$ such that $t{R}u$ and $u{\models_{\mathtt{W}}}A$.
\\
\\
In the class of all downward confluent and forward confluent frames, the definition of the satisfiability of formulas considered by Fischer Servi, the definition of the satisfiability of formulas considered by P\v{r}enosil, the definition of the satisfiability of formulas considered by Wijesekera and the definition of the satisfiability of formulas considered by Bo\v{z}i\'{c} and Do\v{s}en give rise to the same relation of satisfiability.
\\
\\
A formula $A$ is {\em true in a model $(W,{\leq},{R},V)$}\/ (in symbols $(W,{\leq},{R},V){\models}A$) if for all $s{\in}W$, $s{\models}A$.
\\
\\
A formula $A$ is {\em valid in a frame $(W,{\leq},{R})$}\/ (in symbols $(W,{\leq},{R}){\models}A$) if for all models $(W,{\leq},{R},V)$ based on $(W,{\leq},{R})$, $(W,{\leq},{R},V){\models}A$.
\section{Axiomatization}\label{section:axiomatization}
See~\cite[Chapter~$2$]{Chagrov:Zakharyaschev:1997} for an introduction to the standard axioms of $\IPL$ and the standard inference rules of $\IPL$.
\\
\\
Let $\LIK4$ be the least set of formulas closed under uniform substitution, containing the standard axioms of $\IPL$, closed under the standard inference rules of $\IPL$, containing the following axioms and closed under the following inference rules:\footnote{The reason why we have separated the condition of closure under uniform substitution and the condition of closure under the standard inference rules of $\IPL$ is simply the following: uniform substitution is not an inference rule.}
\begin{description}
\item[$(\mathbf{C}\square)$] ${\square}p{\wedge}{\square}q{\rightarrow}{\square}(p{\wedge}q)$,
\item[$(\mathbf{C}\lozenge)$] ${\lozenge}(p{\vee}q){\rightarrow}{\lozenge}p{\vee}{\lozenge}q$,
\item[$(\axiomN{\square})$] ${\square}{\top}$,
\item[$(\axiomN{\lozenge})$] ${\neg}{\lozenge}{\bot}$,
\item[$(\Rule{\square})$] $\frac{p{\rightarrow}q}{{\square}p{\rightarrow}{\square}q}$,
\item[$(\Rule{\lozenge})$] $\frac{p{\rightarrow}q}{{\lozenge}p{\rightarrow}{\lozenge}q}$,
\item[$(4\square)$] ${\square}p{\rightarrow}{\square}{\square}p$,
\item[$(4\lozenge)$] ${\lozenge}{\lozenge}p{\rightarrow}{\lozenge}p$,
\item[$(\Axiom\mathbf{d})$] ${\square}(p{\vee}q){\rightarrow}{\lozenge}p{\vee}{\square}q$,
\item[$(\Axiom\mathbf{f})$] ${\lozenge}(p{\rightarrow}q){\rightarrow}({\square}p{\rightarrow}{\lozenge}q)$.
\end{description}
The axioms $(\mathbf{C}\square)$, $(\mathbf{C}\lozenge)$, $(\axiomN{\square})$ and $(\axiomN{\lozenge})$ and the inference rules $(\Rule{\square})$ and $(\Rule{\lozenge})$ are well-known.
They are sometimes associated to the concept of normality in classical modal logics.
The axioms $(4\square)$ and $(4\lozenge)$ are also well-known.
They are classically related to the elementary condition of transitivity.
The axiom $(\Axiom\mathbf{d})$ has been already considered by Bo\v{z}i\'{c} and Do\v{s}en~\cite{Bozic:Dosen:1984}.
It is related to the elementary condition of downward confluence.
The axiom $(\Axiom\mathbf{f})$ has been already considered by Fischer Servi~\cite{FischerServi:1984}.
It is related to the elementary condition of forward confluence.
\\
\\
The reader may easily see that the axiom ${\square}(p{\vee}q){\rightarrow}(({\lozenge}p{\rightarrow}{\square}q){\rightarrow}{\square}q)$ and the inference rule $\frac{{\lozenge}p{\rightarrow}q{\vee}{\square}(p{\rightarrow}r)}{{\lozenge}p{\rightarrow}q{\vee}{\lozenge}r}$ considered in Lemmas~\ref{lemma:about:wCD} and~\ref{lemma:about:the:special:rule:of:inference} have similarities with the equation ${\lozenge}a{\rightarrow}{\square}b{\leq}{\square}(a{\vee}b){\rightarrow}{\square}b$ and the law ${\lozenge}b{\leq}{\square}a{\vee}c{\Rightarrow}{\lozenge}b
$\linebreak$
{\leq}{\lozenge}(a{\wedge}b){\vee}c$ considered by P\v{r}enosil~\cite{Prenosil:2014}.
Together, they will allow us to use the canonical model construction developed in~\cite{Balbiani:Gencer:preliminary:draft:SL}.
\begin{lemma}\label{lemma:about:wCD}
The axiom ${\square}(p{\vee}q){\rightarrow}(({\lozenge}p{\rightarrow}{\square}q){\rightarrow}{\square}q)$ is derivable in $\LIK4$.
\end{lemma}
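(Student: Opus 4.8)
The plan is to obtain the target formula as a straightforward intuitionistic-propositional consequence of the single modal axiom $(\Axiom\mathbf{d})$. The key observation is that if we abbreviate $X{=}{\square}(p{\vee}q)$, $Y{=}{\lozenge}p$ and $Z{=}{\square}q$, then $(\Axiom\mathbf{d})$ is exactly $X{\rightarrow}(Y{\vee}Z)$, whereas the formula to be derived is $X{\rightarrow}((Y{\rightarrow}Z){\rightarrow}Z)$. Hence the whole task reduces to checking that the implication $(X{\rightarrow}(Y{\vee}Z)){\rightarrow}(X{\rightarrow}((Y{\rightarrow}Z){\rightarrow}Z))$ is a theorem of $\IPL$.

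To see that this propositional implication is intuitionistically valid, I would argue informally by the usual proof-by-cases discipline: assuming $X{\rightarrow}(Y{\vee}Z)$ together with $X$ and $Y{\rightarrow}Z$, one first obtains $Y{\vee}Z$; in the case $Y$ one applies $Y{\rightarrow}Z$ to reach $Z$, and in the case $Z$ one already has $Z$, so $Z$ holds in either case. Since no appeal to the law of excluded middle or to double-negation elimination is made, this reasoning is constructively correct and the implication belongs to $\IPL$.

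The formal derivation then assembles these pieces using only the machinery already available for $\LIK4$. The propositional implication above is first recognized as an $\IPL$ theorem and imported into $\LIK4$ by the appropriate uniform substitution of $X,Y,Z$ by ${\square}(p{\vee}q)$, ${\lozenge}p$ and ${\square}q$; the axiom $(\Axiom\mathbf{d})$ supplies the antecedent $X{\rightarrow}(Y{\vee}Z)$; and a single application of modus ponens (a standard inference rule of $\IPL$) yields the desired ${\square}(p{\vee}q){\rightarrow}(({\lozenge}p{\rightarrow}{\square}q){\rightarrow}{\square}q)$.

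I do not anticipate any genuine obstacle here: the only modal ingredient is $(\Axiom\mathbf{d})$, and everything else is elementary intuitionistic propositional logic. The one point requiring a little care is purely bureaucratic, namely to phrase the combination of uniform substitution and closure under the $\IPL$ rules so that it respects the clause-by-clause definition of $\LIK4$ --- in particular, to treat uniform substitution separately from the inference rules, exactly as the paper deliberately does.
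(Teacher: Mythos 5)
Your proposal is correct and follows essentially the same route as the paper: both derivations use only the axiom $(\Axiom\mathbf{d})$, i.e. ${\square}(p{\vee}q){\rightarrow}{\lozenge}p{\vee}{\square}q$, together with the intuitionistically valid principle that a disjunction $Y{\vee}Z$ yields $(Y{\rightarrow}Z){\rightarrow}Z$, combined by purely $\IPL$-reasoning. The only difference is bookkeeping --- the paper states the $\IPL$ step as ${\lozenge}A{\vee}{\square}B{\rightarrow}(({\lozenge}A{\rightarrow}{\square}B){\rightarrow}{\square}B)$ and then chains it with $(\Axiom\mathbf{d})$ by transitivity, whereas you fold the chaining into one larger $\IPL$ theorem discharged by a single modus ponens --- which is not a substantive difference.
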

\begin{proof}
By considering for all $A,B{\in}\Fo$, the following sequence of formulas, the reader may easily construct a proof of the derivability of the axiom ${\square}(p{\vee}q){\rightarrow}
$\linebreak$
(({\lozenge}p{\rightarrow}{\square}q){\rightarrow}{\square}q)$ in $\LIK4$:\footnote{We invite the reader to notice where axiom $(\Axiom\mathbf{d})$ is used in this sequence.}
\begin{enumerate}
\item ${\lozenge}A{\vee}{\square}B{\rightarrow}(({\lozenge}A{\rightarrow}{\square}B){\rightarrow}{\square}B)$ ($\IPL$-reasoning),
\item ${\square}(A{\vee}B){\rightarrow}{\lozenge}A{\vee}{\square}B$ (axiom $(\Axiom\mathbf{d})$),
\item ${\square}(A{\vee}B){\rightarrow}(({\lozenge}A{\rightarrow}{\square}B){\rightarrow}{\square}B)$ ($\IPL$-reasoning on~$\mathbf{(1)}$ and~$\mathbf{(2)}$).
\end{enumerate}
\medskip
\end{proof}
\begin{lemma}\label{lemma:about:the:special:rule:of:inference}
The inference rule $\frac{{\lozenge}p{\rightarrow}q{\vee}{\square}(p{\rightarrow}r)}{{\lozenge}p{\rightarrow}q{\vee}{\lozenge}r}$ is derivable in $\LIK4$.
\end{lemma}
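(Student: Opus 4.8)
The plan is to reduce the derivability of the rule to establishing a single auxiliary theorem, namely ${\lozenge}p{\wedge}{\square}(p{\rightarrow}r){\rightarrow}{\lozenge}r$ (equivalently, in curried form, ${\lozenge}p{\rightarrow}({\square}(p{\rightarrow}r){\rightarrow}{\lozenge}r)$). Once this formula is available as a theorem of $\LIK4$, the rule follows by pure $\IPL$-reasoning. Assuming the premise ${\lozenge}p{\rightarrow}q{\vee}{\square}(p{\rightarrow}r)$ is a theorem, I would reason under the antecedent ${\lozenge}p$: the premise furnishes the disjunction $q{\vee}{\square}(p{\rightarrow}r)$; in the left disjunct $q$ we immediately obtain $q{\vee}{\lozenge}r$, and in the right disjunct we feed ${\square}(p{\rightarrow}r)$ together with the standing assumption ${\lozenge}p$ into the auxiliary theorem to obtain ${\lozenge}r$, hence again $q{\vee}{\lozenge}r$. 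Discharging the antecedent yields the conclusion ${\lozenge}p{\rightarrow}q{\vee}{\lozenge}r$. This combination of a theorem premise with derived theorems is legitimate because $\LIK4$ contains $\IPL$ and is closed under the standard inference rules, in particular modus ponens.

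The heart of the matter is therefore to derive ${\lozenge}p{\wedge}{\square}(p{\rightarrow}r){\rightarrow}{\lozenge}r$. My key observation is that the forward-confluence axiom $(\Axiom\mathbf{f})$, i.e. ${\lozenge}(p{\rightarrow}q){\rightarrow}({\square}p{\rightarrow}{\lozenge}q)$, delivers precisely this principle under the substitution of $p{\rightarrow}r$ for $p$ and of $r$ for $q$. That instance reads ${\lozenge}((p{\rightarrow}r){\rightarrow}r){\rightarrow}({\square}(p{\rightarrow}r){\rightarrow}{\lozenge}r)$. It then suffices to reach its antecedent from ${\lozenge}p$: since $p{\rightarrow}((p{\rightarrow}r){\rightarrow}r)$ is an $\IPL$-theorem, the inference rule $(\Rule{\lozenge})$ gives ${\lozenge}p{\rightarrow}{\lozenge}((p{\rightarrow}r){\rightarrow}r)$, and chaining the two implications produces ${\lozenge}p{\rightarrow}({\square}(p{\rightarrow}r){\rightarrow}{\lozenge}r)$, as desired. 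Concretely, I would present this as a short numbered sequence of formulas in the same style as the proof of Lemma~\ref{lemma:about:wCD}: first the $(\Axiom\mathbf{f})$-instance, then the $(\Rule{\lozenge})$-consequence of the $\IPL$-theorem $p{\rightarrow}((p{\rightarrow}r){\rightarrow}r)$, and finally the $\IPL$-chaining of the two, after which the $\IPL$-reasoning of the previous paragraph is appended.

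The step I expect to be the genuine insight---rather than a technical difficulty---is the choice of substitution in $(\Axiom\mathbf{f})$. Intuitionistically, the combination principle ${\lozenge}p{\wedge}{\square}(p{\rightarrow}r){\rightarrow}{\lozenge}r$ does not follow from the mere normality of the modalities (the axioms $(\mathbf{C}\square)$, $(\mathbf{C}\lozenge)$, $(\axiomN{\square})$, $(\axiomN{\lozenge})$ together with the rules $(\Rule{\square})$, $(\Rule{\lozenge})$), since one cannot invoke the classical duality between ${\square}$ and ${\lozenge}$; it is exactly the forward-confluence axiom $(\Axiom\mathbf{f})$ that bridges the gap. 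Recognising that the apparently unrelated shape of $(\Axiom\mathbf{f})$ becomes the sought principle once $p{\rightarrow}r$ is substituted for $p$ and the trivial $\IPL$-theorem $p{\rightarrow}((p{\rightarrow}r){\rightarrow}r)$ is pushed through $(\Rule{\lozenge})$ is the only non-mechanical ingredient; everything else is routine propositional manipulation within $\IPL$.
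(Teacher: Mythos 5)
Your proposal is correct and follows essentially the same route as the paper's own proof: the same instance of $(\Axiom\mathbf{f})$ (with $p{\rightarrow}r$ for $p$ and $r$ for $q$), the same use of $(\Rule{\lozenge})$ on the $\IPL$-theorem $p{\rightarrow}((p{\rightarrow}r){\rightarrow}r)$, and the same final $\IPL$-chaining with the hypothesis. Your packaging of the intermediate result as the auxiliary theorem ${\lozenge}p{\rightarrow}({\square}(p{\rightarrow}r){\rightarrow}{\lozenge}r)$ is only a cosmetic reorganization of the paper's five-line derivation.
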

\begin{proof}
By considering for all $A,B,C{\in}\Fo$, the following sequence of formulas, the reader may easily construct a proof of the derivability of the inference rule $\frac{{\lozenge}p{\rightarrow}q{\vee}{\square}(p{\rightarrow}r)}{{\lozenge}p{\rightarrow}q{\vee}{\lozenge}r}$ in $\LIK4$:\footnote{We invite the reader to notice where axiom $(\Axiom\mathbf{f})$ is used in this sequence.}
\begin{enumerate}
\item ${\lozenge}A{\rightarrow}B{\vee}{\square}(A{\rightarrow}C)$ (hypothesis),
\item $A{\rightarrow}((A{\rightarrow}C){\rightarrow}C)$ ($\IPL$-reasoning),
\item ${\lozenge}A{\rightarrow}{\lozenge}((A{\rightarrow}C){\rightarrow}C)$ (inference rule $(\Rule{\lozenge})$ on~$\mathbf{(2)}$),
\item ${\lozenge}((A{\rightarrow}C){\rightarrow}C){\rightarrow}({\square}(A{\rightarrow}C){\rightarrow}{\lozenge}C)$ (axiom $(\Axiom\mathbf{f})$),
\item ${\lozenge}A{\rightarrow}B{\vee}{\lozenge}C$ ($\IPL$-reasoning on~$\mathbf{(1)}$, $\mathbf{(3)}$ and~$\mathbf{(4)}$).
\end{enumerate}
\medskip
\end{proof}
\begin{lemma}[Soundness]\label{proposition:soundness:of:the:logics}
Let $A$ be a formula.
If $A{\in}\LIK4$ then for all transitive, downward confluent and forward confluent frames $(W,{\leq},{R})$, $(W,{\leq},{R}){\models}A$.
\end{lemma}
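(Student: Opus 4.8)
The plan is to prove soundness via the minimality of $\LIK4$. Let $\Lambda$ be the set of all formulas that are valid in every transitive, downward confluent and forward confluent frame. Since $\LIK4$ is defined as the \emph{least} set of formulas that is closed under uniform substitution, contains the standard axioms of $\IPL$, is closed under the standard inference rules of $\IPL$, contains the eight modal axioms and is closed under $(\Rule{\square})$ and $(\Rule{\lozenge})$, it suffices to check that $\Lambda$ enjoys each of these closure properties; then $\LIK4{\subseteq}\Lambda$, which is exactly the assertion. Throughout I fix an arbitrary transitive, downward confluent and forward confluent frame $(W,{\leq},{R})$, a valuation $V$ on it and a world $t{\in}W$, and I appeal freely to the Heredity Property (Lemma~\ref{lemma:heredity:property}), whose proof already consumes downward and forward confluence.

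\emph{Intuitionistic fragment and substitution.} The reduct $(W,{\leq},V)$ is an ordinary intuitionistic Kripke model, so every standard axiom of $\IPL$ is true at every world and the $\IPL$ rules (modus ponens) preserve truth in a model, hence preserve membership in $\Lambda$. For closure under uniform substitution I would prove the usual substitution lemma: given a substitution $\sigma$, the function $V^{\sigma}$ sending each atom $p$ to the set of worlds satisfying $\sigma(p)$ is again a valuation, because by heredity that set is $\leq$-closed, and an induction on $B$ gives $w{\models_{V}}\sigma(B)$ if and only if $w{\models_{V^{\sigma}}}B$ for every $w{\in}W$. Consequently validity is invariant under substitution, so $\Lambda$ is substitution-closed.

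\emph{Routine modal cases.} The axioms $(\mathbf{C}\square)$, $(\mathbf{C}\lozenge)$, $(\axiomN{\square})$ and $(\axiomN{\lozenge})$ follow immediately from the clauses for ${\square}$ and ${\lozenge}$ on any frame, and the rules $(\Rule{\square})$ and $(\Rule{\lozenge})$ preserve validity by the same clauses (the hypothesis $p{\rightarrow}q$ being valid yields $w{\models}p\Rightarrow w{\models}q$ at every $w$). The axioms $(4\square)$ and $(4\lozenge)$ are obtained by chaining two $R$-steps into one via transitivity of $R$. For $(\Axiom\mathbf{d})$, i.e. ${\square}(p{\vee}q){\rightarrow}{\lozenge}p{\vee}{\square}q$, I would argue at each $t'{\geq}t$ satisfying the antecedent by a classical case distinction: either some $R$-successor of $t'$ forces $p$, giving ${\lozenge}p$, or none does, whence every $R$-successor (forcing $p{\vee}q$ but not $p$) forces $q$, giving ${\square}q$.

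The step I expect to be the main obstacle is the soundness of $(\Axiom\mathbf{f})$, i.e. ${\lozenge}(p{\rightarrow}q){\rightarrow}({\square}p{\rightarrow}{\lozenge}q)$, since it is the only case where a confluence condition is invoked head-on. To verify it at $t$ I would take $t'{\geq}t$ with $t'{\models}{\lozenge}(p{\rightarrow}q)$ and $t''{\geq}t'$ with $t''{\models}{\square}p$, pick an $R$-witness $u$ for ${\lozenge}(p{\rightarrow}q)$ at $t'$, and then use forward confluence $({\geq}{\circ}{R}{\subseteq}{R}{\circ}{\geq})$ to push the edge $t'{R}u$ past the $\leq$-step $t'{\leq}t''$, obtaining $v$ with $t''{R}v$ and $u{\leq}v$; heredity transports $p{\rightarrow}q$ from $u$ up to $v$, $t''{\models}{\square}p$ gives $v{\models}p$, hence $v{\models}q$, and $t''{R}v$ then yields $t''{\models}{\lozenge}q$. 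The delicate bookkeeping here is precisely the realignment of an $R$-edge across a $\leq$-step, which is the semantic content of forward confluence; downward confluence, by contrast, enters the argument only indirectly, through its use in establishing the Heredity Property on which the substitution lemma and this last case both rest.
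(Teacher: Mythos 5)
Your proof is correct. The paper actually states this soundness lemma without any proof at all, so there is no argument of the authors' to contrast with yours: your verification — minimality of $\LIK4$ against the set $\Lambda$ of formulas valid on all transitive, downward confluent and forward confluent frames, with the Heredity Property underwriting closure under uniform substitution, transitivity of $R$ handling $(4\square)$/$(4\lozenge)$, a classical case split on successors for $(\Axiom\mathbf{d})$, and forward confluence doing the real work for $(\Axiom\mathbf{f})$ — is precisely the routine argument the authors take for granted (one could even note that transporting $p{\rightarrow}q$ from $u$ to $v$ in your $(\Axiom\mathbf{f})$ case needs no confluence at all, since implications between atoms are hereditary on any frame by transitivity of $\leq$).
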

\section{Canonical model}\label{section:canonical:model}
A {\em theory}\/ is a set of formulas containing $\LIK4$ and closed with respect to the inference rule of modus ponens.
\\
\\
A theory $s$ is {\em proper}\/ if ${\bot}{\not\in}s$.
\\
\\
A proper theory $s$ is {\em prime}\/ if for all $A,B{\in}\Fo$, if $A{\vee}B{\in}s$ then either $A{\in}s$, or $B{\in}s$.
\begin{lemma}[Lindenbaum Lemma]\label{lemma:almost:completeness}
Let $A{\in}\Fo$.
If $A{\not\in}\LIK4$ then there exists a prime theory $s$ such that $A{\not\in}s$.
\end{lemma}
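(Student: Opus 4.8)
The plan is to run a Lindenbaum-style construction: enlarge $\LIK4$ to a maximal theory that still avoids $A$, and then exploit maximality to force primeness. First I would observe that $\LIK4$ is itself a theory, since it trivially contains $\LIK4$ and is closed under modus ponens (modus ponens being one of the standard inference rules of $\IPL$), and by hypothesis $A{\not\in}\LIK4$. Hence the collection $\mathcal{T}$ of all theories $s$ with $A{\not\in}s$ is nonempty. Ordering $\mathcal{T}$ by inclusion, the union of any chain in $\mathcal{T}$ is again a theory (it contains $\LIK4$, and it is closed under modus ponens because each instance of that rule involves only two premises, which both already lie in a single member of the chain) and it still omits $A$, as $A$ lies in no member of the chain. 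By Zorn's Lemma, $\mathcal{T}$ has a maximal element $s$. Properness of $s$ is then immediate: if ${\bot}{\in}s$, then since ${\bot}{\rightarrow}A$ is an intuitionistic theorem and hence lies in $\LIK4{\subseteq}s$, closure under modus ponens would force $A{\in}s$, contradicting $s{\in}\mathcal{T}$.

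For primeness, the technical heart of the argument is an intuitionistic deduction theorem at the level of theories: for any theory $s$ and any formulas $B,D$, writing $[s{\cup}\{B\}]$ for the least theory containing $s{\cup}\{B\}$, one has $D{\in}[s{\cup}\{B\}]$ if and only if $B{\rightarrow}D{\in}s$. I would establish this by checking that the set $\{D:B{\rightarrow}D{\in}s\}$ is itself a theory containing $s{\cup}\{B\}$: it contains every theorem $D$ because $D{\rightarrow}(B{\rightarrow}D){\in}\IPL{\subseteq}s$; it contains $B$ because $B{\rightarrow}B{\in}\IPL{\subseteq}s$; and it is closed under modus ponens by the intuitionistic schema $(B{\rightarrow}(D{\rightarrow}E)){\rightarrow}((B{\rightarrow}D){\rightarrow}(B{\rightarrow}E)){\in}s$. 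Since $[s{\cup}\{B\}]$ is the least theory containing $s{\cup}\{B\}$, this gives the forward direction, the converse being a single application of modus ponens.

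Granting this lemma, suppose toward a contradiction that $B{\vee}C{\in}s$ while $B{\not\in}s$ and $C{\not\in}s$. Then $[s{\cup}\{B\}]$ and $[s{\cup}\{C\}]$ both strictly extend $s$ (each contains a formula missing from $s$), so by maximality neither belongs to $\mathcal{T}$; that is, $A{\in}[s{\cup}\{B\}]$ and $A{\in}[s{\cup}\{C\}]$. The deduction theorem then yields $B{\rightarrow}A{\in}s$ and $C{\rightarrow}A{\in}s$. Using the intuitionistic disjunction-elimination theorem $(B{\rightarrow}A){\rightarrow}((C{\rightarrow}A){\rightarrow}((B{\vee}C){\rightarrow}A)){\in}s$, three applications of modus ponens (with $B{\rightarrow}A$, then $C{\rightarrow}A$, then $B{\vee}C$, all in $s$) deliver $A{\in}s$, contradicting $A{\not\in}s$. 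Hence $s$ is prime, and $s$ is the desired prime theory with $A{\not\in}s$. The main obstacle is the deduction-theorem lemma; everything else is routine. One point worth flagging is that the modal rules $(\Rule{\square})$ and $(\Rule{\lozenge})$ never intervene, because a theory is required to be closed under modus ponens only, so all reasoning carried out inside $s$ is purely intuitionistic and the implicational schemas invoked above are already available in $\LIK4$ by closure under uniform substitution.
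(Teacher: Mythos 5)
Your proof is correct, and it is worth noting that the paper itself contains no argument for this lemma at all: its ``proof'' is a citation to Lemma~24 of \cite{Balbiani:Gencer:preliminary:draft:SL}. Your construction --- Zorn's Lemma applied to the inclusion-ordered family of theories omitting $A$, a deduction theorem for the closure operation $[s{\cup}\{B\}]$, and the intuitionistic disjunction-elimination schema to convert maximality into primeness --- is the standard Lindenbaum argument for intuitionistically-based logics whose theories are closed under modus ponens only, and is presumably what the cited reference does; so your proposal supplies, self-contained, exactly what the paper outsources. Your closing remark is the key structural point: since theories are required to be closed under modus ponens but not under $(\Rule{\square})$ or $(\Rule{\lozenge})$, the deduction theorem cannot be broken by the modal rules, and all schemas you invoke are instances of $\IPL$ theorems available in $\LIK4$ by uniform substitution. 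Two trivial repairs for completeness: in checking that $T{=}\{D: B{\rightarrow}D{\in}s\}$ is a theory containing $s{\cup}\{B\}$, you verify that $T$ contains every theorem and contains $B$, but you also need $s{\subseteq}T$; the same schema $D{\rightarrow}(B{\rightarrow}D)$ together with $D{\in}s$ gives this at once, so nothing is missing in substance. And in the Zorn step, the union of the \emph{empty} chain is not a theory, so one should either restrict attention to nonempty chains and use $\LIK4{\in}\mathcal{T}$ as an upper bound for the empty one, or simply say so; this is the usual pedantic footnote and does not affect the argument.
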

\begin{proof}
See~\cite[Lemma~$24$]{Balbiani:Gencer:preliminary:draft:SL}.
\medskip
\end{proof}
Let $(W_{c},{\leq_{c}},{R_{c}})$ be the frame such that
\begin{itemize}
\item $W_{c}$ is the nonempty set of all prime theories,\footnote{By Lemma~\ref{proposition:soundness:of:the:logics}, ${\bot}{\not\in}\LIK4$.
As a result, by Lemma~\ref{lemma:almost:completeness}, there exists a prime theory.
Therefore, $W_{c}$ is a nonempty set.}
\item $\leq_{c}$ is the preorder on $W_{c}$ such that for all $s,t{\in}W_{c}$, $s{\leq_{c}}t$ if and only if $s{\subseteq}t$,\footnote{Obviously, $\leq_{c}$ is a reflexive and transitive binary relation on $W_{c}$.
As a result, $\leq_{c}$ is a preorder on $W_{c}$.}
\item $R_{c}$ is the binary relation on $W_{c}$ such that for all $s,t{\in}W_{c}$, $s{R_{c}}t$ if and only if $s{\bowtie}t$.
\end{itemize}
The frame $(W_{c},{\leq_{c}},{R_{c}})$ is called {\em canonical frame of $\LIK4$.}
\begin{lemma}\label{lemma:canonical:frame:confluences}
$(W_{c},{\leq_{c}},{R_{c}})$ is transitive, downward confluent and forward confluent.
\end{lemma}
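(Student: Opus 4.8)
The plan is to establish the three properties separately, since each is governed by a different group of axioms. Transitivity is the most self-contained and I would prove it directly from $(4\square)$ and $(4\lozenge)$; the two confluence properties require constructing, for a given configuration of prime theories, a new prime theory witnessing the confluence, and there the derived results of Lemmas~\ref{lemma:about:wCD} and~\ref{lemma:about:the:special:rule:of:inference}, together with the Lindenbaum Lemma~\ref{lemma:almost:completeness}, do the work, following the canonical model construction of~\cite{Balbiani:Gencer:preliminary:draft:SL}.

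For transitivity, suppose $s{R_{c}}t$ and $t{R_{c}}u$, i.e. $s{\bowtie}t$ and $t{\bowtie}u$, and I would check $s{\bowtie}u$. If $\square A{\in}s$ then by $(4\square)$ and closure of the theory $s$ under modus ponens, $\square\square A{\in}s$; applying $s{\bowtie}t$ gives $\square A{\in}t$, and applying $t{\bowtie}u$ gives $A{\in}u$. Dually, if $A{\in}u$ then $t{\bowtie}u$ gives $\lozenge A{\in}t$, whence $s{\bowtie}t$ gives $\lozenge\lozenge A{\in}s$, and $(4\lozenge)$ together with closure under modus ponens yields $\lozenge A{\in}s$. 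Hence $s{\bowtie}u$ and $R_{c}$ is transitive.

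For downward confluence I would take $s{\leq_{c}}t$ and $t{R_{c}}u$ --- that is $s{\subseteq}t$ and $t{\bowtie}u$ --- and search for a prime theory $v$ with $s{\bowtie}v$ and $v{\subseteq}u$. The two requirements in $s{\bowtie}v$ force $\{A: \square A{\in}s\}{\subseteq}v$ and $v{\subseteq}\{A: \lozenge A{\in}s\}$, so the target is a prime theory squeezed between the theory generated by $\{A: \square A{\in}s\}$ and the set $u{\cap}\{A: \lozenge A{\in}s\}$. I would first note that $\{A: \square A{\in}s\}{\subseteq}u$ (using $s{\subseteq}t$ and $t{\bowtie}u$), extend the generated lower theory to a prime theory inside $u$ by the Lindenbaum Lemma, and finally check the backward condition $v{\subseteq}\{A: \lozenge A{\in}s\}$. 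For forward confluence the symmetric setup is $t{\subseteq}s$ and $t{\bowtie}u$, and one seeks prime $v$ with $u{\subseteq}v$ and $s{\bowtie}v$, i.e. a prime theory between the theory generated by $u{\cup}\{A: \square A{\in}s\}$ and $\{A: \lozenge A{\in}s\}$.

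The main obstacle in both confluence cases is the consistency-and-separation step. In the downward case the lower theory is automatically contained in $u$, hence proper, so the delicate point is verifying that the backward condition $v{\subseteq}\{A: \lozenge A{\in}s\}$ survives the Lindenbaum extension; this is controlled by the derived axiom $\square(p{\vee}q){\rightarrow}(({\lozenge}p{\rightarrow}{\square}q){\rightarrow}{\square}q)$ of Lemma~\ref{lemma:about:wCD}, a consequence of $(\Axiom\mathbf{d})$. In the forward case the delicate point is instead that the theory generated by $u{\cup}\{A: \square A{\in}s\}$ is proper and can be kept inside $\{A: \lozenge A{\in}s\}$; this is controlled by the derived inference rule of Lemma~\ref{lemma:about:the:special:rule:of:inference}, a consequence of $(\Axiom\mathbf{f})$. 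Since these are exactly the ingredients isolated for the canonical construction of~\cite{Balbiani:Gencer:preliminary:draft:SL}, I expect the cleanest write-up to invoke that construction once the two derived results are in hand, rather than repeating the Lindenbaum bookkeeping for each confluence property.
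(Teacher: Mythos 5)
Your proposal is correct and takes essentially the same route as the paper: for downward and forward confluence the paper's proof consists precisely in invoking Lemmas~\ref{lemma:about:wCD} and~\ref{lemma:about:the:special:rule:of:inference} together with Lemmas~27 and~29 of the canonical model construction of~\cite{Balbiani:Gencer:preliminary:draft:SL}, which is exactly the outsourcing you opt for after sketching the Lindenbaum/Zorn bookkeeping. The one difference is in your favour: the paper's proof is silent about transitivity, whereas your direct argument from $(4\square)$ and $(4\lozenge)$ (via uniform substitution and closure of theories under modus ponens) is correct and explicitly discharges that part of the statement.
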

\begin{proof}
About downward confluence, it follows from Lemmas~\ref{lemma:about:wCD} and~\ref{lemma:about:the:special:rule:of:inference} and~\cite[Lemma~$29$]{Balbiani:Gencer:preliminary:draft:SL}.
As for forward confluence, it follows from Lemmas~\ref{lemma:about:wCD} and~\ref{lemma:about:the:special:rule:of:inference} and~\cite[Lemma~$27$]{Balbiani:Gencer:preliminary:draft:SL}.
\medskip
\end{proof}
The {\em canonical valuation of $\LIK4$}\/ is the valuation $V_{c}$: $\At\longrightarrow\wp(W_{c})$ on $(W_{c},{\leq_{c}},
$\linebreak$
{R_{c}})$ such that for all $p{\in}\At$, $V_{c}(p){=}\{s{\in}W_{c}$: $p{\in}s\}$.\footnote{Obviously, for all $p{\in}\At$, $V_{c}(p)$ is $\leq_{c}$-closed.}
\\
\\
The {\em canonical model of $\LIK4$}\/ is the model $(W_{c},{\leq_{c}},{R_{c}},V_{c})$.
\begin{lemma}[Canonical Truth Lemma]\label{lemma:truth:lemma}
For all $A{\in}\Fo$ and for all $s{\in}W_{c}$, $A{\in}s$ if and only if $(W_{c},{\leq_{c}},{R_{c}},V_{c}),s{\models}A$.
\end{lemma}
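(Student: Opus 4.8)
The plan is to argue by induction on the structure of $A{\in}\Fo$, exploiting throughout that every $s{\in}W_{c}$ is a prime theory: it contains $\LIK4$, is closed under modus ponens, is proper, and enjoys the disjunction property. The Boolean and intuitionistic connectives are dispatched by the routine arguments familiar from the completeness proof for $\IPL$, so the genuine work lies in the two modal cases, which rest on a pair of successor-existence lemmas for the relation $R_{c}$.

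For the base cases: when $A{=}p$, the equivalence $p{\in}s$ iff $s{\models}p$ is immediate from the definition of $V_{c}$; when $A{=}\top$ both sides hold (since $\top{\in}\LIK4{\subseteq}s$); when $A{=}\bot$ both sides fail (since $s$ is proper). For $A{=}B{\wedge}C$ and $A{=}B{\vee}C$ I would invoke the induction hypothesis together with closure under modus ponens (for conjunction and for the easy direction of disjunction) and primeness (for the hard direction of disjunction). For $A{=}B{\rightarrow}C$, the forward direction uses ${\leq_{c}}{=}{\subseteq}$: if $B{\rightarrow}C{\in}s$, $s{\subseteq}t$, and $t{\models}B$, then $B{\in}t$ by the induction hypothesis, whence $C{\in}t$ by modus ponens and $t{\models}C$. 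The backward direction is the standard prime-extension step: if $B{\rightarrow}C{\notin}s$ then $C$ is not derivable from $s{\cup}\{B\}$, so the prime-extension method underlying the Lindenbaum Lemma (Lemma~\ref{lemma:almost:completeness}) yields a prime theory $t{\supseteq}s{\cup}\{B\}$ with $C{\notin}t$; the induction hypothesis then gives $t{\models}B$ and $t{\not\models}C$, witnessing $s{\not\models}B{\rightarrow}C$.

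The modal cases turn on the unfolding of $s{R_{c}}t$ through $\bowtie$ into the two conditions $\{D:{\square}D{\in}s\}{\subseteq}t$ and $t{\subseteq}\{D:{\lozenge}D{\in}s\}$. For $A{=}{\square}B$, the forward direction is immediate from the first condition and the induction hypothesis; the backward direction demands, whenever ${\square}B{\notin}s$, a prime theory $t$ with $s{R_{c}}t$ and $B{\notin}t$. Dually, for $A{=}{\lozenge}B$ the backward direction is immediate from the second condition and the induction hypothesis, while the forward direction demands, whenever ${\lozenge}B{\in}s$, a prime theory $t$ with $s{R_{c}}t$ and $B{\in}t$. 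These two successor-existence claims are where the modal axioms act: $(\mathbf{C}\square)$, $(\axiomN\square)$ and $(\Rule\square)$ make $\{D:{\square}D{\in}s\}$ behave correctly under the $\bowtie$-constraints, $(\mathbf{C}\lozenge)$, $(\axiomN\lozenge)$ and $(\Rule\lozenge)$ control the $\lozenge$-side dually, and the confluence axioms $(\Axiom\mathbf{d})$ and $(\Axiom\mathbf{f})$, repackaged as the derived axiom and rule of Lemmas~\ref{lemma:about:wCD} and~\ref{lemma:about:the:special:rule:of:inference}, are what permit the forced and forbidden formulas to be separated by a prime theory, exactly as in the canonical construction of~\cite{Balbiani:Gencer:preliminary:draft:SL}.

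I expect the main obstacle to be precisely these two existence lemmas, and in particular the ${\lozenge}$-case: given ${\lozenge}B{\in}s$, one must build a prime $t$ that simultaneously contains every $D$ with ${\square}D{\in}s$, contains $B$, and contains nothing outside $\{D:{\lozenge}D{\in}s\}$; the delicate point is showing this combination of positive and negative requirements is jointly consistent so that Lemma~\ref{lemma:almost:completeness} can be applied. Here axiom $(\Axiom\mathbf{f})$ and the derived rule of Lemma~\ref{lemma:about:the:special:rule:of:inference} are indispensable, with the ${\square}$-case being dual and leaning on $(\Axiom\mathbf{d})$ and the derived axiom of Lemma~\ref{lemma:about:wCD}. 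As these arguments are the substance of the cited construction, I would invoke the corresponding existence lemmas of~\cite{Balbiani:Gencer:preliminary:draft:SL} rather than reprove them.
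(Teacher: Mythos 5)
Your proposal is correct and takes essentially the same route as the paper: the paper's entire proof of Lemma~\ref{lemma:truth:lemma} is an appeal to \cite[Lemma~$26$]{Balbiani:Gencer:preliminary:draft:SL}, whose content is precisely the induction you sketch, with the modal successor-existence lemmas made available by the derived axiom and rule of Lemmas~\ref{lemma:about:wCD} and~\ref{lemma:about:the:special:rule:of:inference}, exactly as the paper itself remarks in Section~\ref{section:axiomatization}. Since you likewise defer those two existence lemmas to the same reference rather than reproving them, the two proofs coincide in substance.
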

\begin{proof}
See~\cite[Lemma~$26$]{Balbiani:Gencer:preliminary:draft:SL}.
\medskip
\end{proof}
\section{Maximality}\label{section:maximal:worlds}
We say that $s{\in}W_{c}$ is {\em maximal with respect to $B{\in}\Fo$}\/ if for all $t{\in}W_{c}$, if $s{<_{c}}t$ then $t{\models}B$.
\begin{lemma}\label{lemma:maximal}
Let $s{\in}W_{c}$ and $B{\in}\Fo$.
If $s$ is not maximal with respect to $B$ then there exists $t{\in}W_{c}$ such that $s{<_{c}}t$, $t{\not\models}B$ and $t$ is maximal with respect to $B$.
\end{lemma}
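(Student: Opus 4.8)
The plan is to obtain $t$ as a $\subseteq$-maximal prime theory that properly extends $s$ while still refuting $B$, and then to verify that $\subseteq$-maximality among such theories is the same thing as maximality with respect to $B$. Throughout, I would freely use the Canonical Truth Lemma~\ref{lemma:truth:lemma} to pass between the semantic statement $u{\models}B$ and the membership statement $B{\in}u$ for prime theories $u{\in}W_{c}$, recalling also that $s{<_{c}}u$ means exactly $s{\subset}u$.

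First I would unpack the hypothesis. Since $s$ is not maximal with respect to $B$, there is a prime theory $t_{0}$ with $s{<_{c}}t_{0}$ and $t_{0}{\not\models}B$; by Lemma~\ref{lemma:truth:lemma} the latter amounts to $B{\not\in}t_{0}$. I would then consider the collection $X{=}\{u{\in}W_{c}$: $s{<_{c}}u$ and $B{\not\in}u\}$, which is nonempty because $t_{0}{\in}X$.

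Next I would apply Zorn's Lemma to the partially ordered set $(X,{\subseteq})$. The crucial point is that every nonempty $\subseteq$-chain $C$ in $X$ possesses an upper bound lying in $X$, the natural candidate being $\bigcup C$. The conditions $B{\not\in}\bigcup C$ and $s{\subset}\bigcup C$ follow at once from $B{\not\in}u$ and $s{\subset}u$ for each $u{\in}C$. The real work, and the step I expect to be the main obstacle, is checking that $\bigcup C$ is a prime theory: it contains $\LIK4$ and is proper because each member of $C$ is; and it is closed under modus ponens as well as prime because the one or two premises witnessing either closure condition already belong to a single member of the chain, to which that member's own theoryhood or primeness can be applied. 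Hence $\bigcup C{\in}X$, and Zorn's Lemma supplies a $\subseteq$-maximal element $t{\in}X$.

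Finally I would confirm that this $t$ is as required. From $t{\in}X$ we get $s{<_{c}}t$ and $B{\not\in}t$, i.e. $t{\not\models}B$ by Lemma~\ref{lemma:truth:lemma}. To see that $t$ is maximal with respect to $B$, suppose towards a contradiction that some $u{\in}W_{c}$ satisfies $t{<_{c}}u$ and $u{\not\models}B$. Then $B{\not\in}u$, and $s{<_{c}}u$ since $s{<_{c}}t$ and $t{<_{c}}u$, so $u{\in}X$ with $t{\subset}u$, contradicting the maximality of $t$ in $X$. Therefore every $u{\in}W_{c}$ with $t{<_{c}}u$ satisfies $u{\models}B$, which is precisely the statement that $t$ is maximal with respect to $B$.
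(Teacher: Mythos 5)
Your proposal is correct and follows essentially the same route as the paper: both apply Zorn's Lemma to the set of prime theories properly extending $s$ and refuting $B$, take unions of chains as upper bounds, and read off maximality with respect to $B$ from $\subseteq$-maximality. The only difference is one of detail: you explicitly verify that the union of a chain of prime theories is again a prime theory, a point the paper's proof treats as obvious in a footnote.
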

\begin{proof}
Suppose $s$ is not maximal with respect to $B$.
Let $S=\{t{\in}W_{c}$: $s{<_{c}}t$ and $t{\not\models}B\}$.
Since $s$ is not maximal with respect to $B$, then $S$ is nonempty.
Moreover, obviously, for all nonempty chains $(t_{a})_{a{\in}I}$ of elements of $S$, $\bigcup\{t_{a}$: $a{\in}I\}$ is in $S$.\footnote{Remind that for all $a{\in}I$, $s{<_{c}}t_{a}$ and $t_{a}{\not\models}B$, i.e. $s{\subseteq}t_{a}$, $t_{a}{\not\subseteq}s$ and $B{\not\in}t_{a}$.
Moreover, for all $b,c{\in}I$, either $t_{b}{\subseteq}t_{c}$, or $t_{c}{\subseteq}t_{b}$.
As a result, $s{\subseteq}\bigcup\{t_{a}$: $a{\in}I\}$, $\bigcup\{t_{a}$: $a{\in}I\}{\not\subseteq}s$ and $B{\not\in}\bigcup\{t_{a}$: $a{\in}I\}$, i.e. $s{<_{c}}\bigcup\{t_{a}$: $a{\in}I\}$ and $\bigcup\{t_{a}$: $a{\in}I\}{\not\models}B$.}
Hence, by Zorn's Lemma, $S$ possesses a maximal element $t$.\footnote{See~\cite[Chapter~$10$]{Davey:Priestley:2002} and~\cite[Chapter~$1$]{Wechler:1992} for details about Zorn's Lemma.}
Obviously, $s{<_{c}}t$, $t{\not\models}B$ and $t$ is maximal with respect to $B$.
\medskip
\end{proof}
\begin{lemma}\label{lemma:maximal:rightarrow}
Let $s{\in}W_{c}$ and $B,C{\in}\Fo$.
If $s{\not\models}B{\rightarrow}C$ and $s$ is maximal with respect to $B{\rightarrow}C$ then $s{\models}B$ and $s{\not\models}C$.
\end{lemma}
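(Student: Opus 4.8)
The plan is to turn the failure of $s{\models}B{\rightarrow}C$ into a concrete witness in $W_{c}$ and then show that this witness cannot strictly extend $s$, so that it must coincide with $s$. First, I would apply the semantic clause for $\rightarrow$ to the hypothesis $s{\not\models}B{\rightarrow}C$. By that clause, $s{\models}B{\rightarrow}C$ would mean that for all $t{\in}W_{c}$, if $s{\leq_{c}}t$ and $t{\models}B$ then $t{\models}C$; since this fails, there exists $t{\in}W_{c}$ such that $s{\leq_{c}}t$, $t{\models}B$ and $t{\not\models}C$.

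Next, I would argue that this $t$ cannot satisfy $s{<_{c}}t$. Suppose for contradiction that $s{<_{c}}t$. Since $s$ is maximal with respect to $B{\rightarrow}C$, the definition of maximality gives $t{\models}B{\rightarrow}C$. Now $\leq_{c}$ is a preorder, hence reflexive, so $t{\leq_{c}}t$; combining this with $t{\models}B$ and the semantic clause for $\rightarrow$ applied at $t$, I obtain $t{\models}C$. This contradicts $t{\not\models}C$. Therefore $s{<_{c}}t$ does not hold.

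Finally, I would collapse $t$ onto $s$. Recall that $\leq_{c}$ is the inclusion order on prime theories, so it is antisymmetric. From $s{\leq_{c}}t$ together with the failure of $s{<_{c}}t$ I conclude $t{\leq_{c}}s$, that is $t{\subseteq}s$; with $s{\subseteq}t$ this yields $s{=}t$ by antisymmetry. Substituting $s$ for $t$ in $t{\models}B$ and $t{\not\models}C$ gives exactly $s{\models}B$ and $s{\not\models}C$, as required.

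The argument is essentially immediate once the witness $t$ has been extracted from $s{\not\models}B{\rightarrow}C$; no appeal to the Heredity Property or to the confluence conditions of the canonical frame is needed. The only point requiring a moment's care is the use of reflexivity of $\leq_{c}$ to turn $t{\models}B{\rightarrow}C$ into $t{\models}C$, which is precisely what makes the strict-extension case self-contradictory. I therefore expect no serious obstacle in carrying out the proof.
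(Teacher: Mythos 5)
Your proof is correct and takes essentially the same approach as the paper: both extract a witness $t$ with $s{\leq_{c}}t$, $t{\models}B$, $t{\not\models}C$ from $s{\not\models}B{\rightarrow}C$, both use maximality (plus reflexivity of $\leq_{c}$) to kill the case $s{<_{c}}t$, and both use antisymmetry of inclusion to collapse $t$ onto $s$. The only difference is organizational — you argue directly, ruling out $s{<_{c}}t$ and then concluding $s{=}t$, whereas the paper wraps the same case analysis in an outer proof by contradiction — which changes nothing of substance.
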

\begin{proof}
Suppose $s{\not\models}B{\rightarrow}C$ and $s$ is maximal with respect to $B{\rightarrow}C$.
For the sake of the contradiction, suppose either $s{\not\models}B$, or $s{\models}C$.
Since $s{\not\models}B{\rightarrow}C$, then there exists $t{\in}W_{c}$ such that $s{\leq_{c}}t$, $t{\models}B$ and $t{\not\models}C$.
Hence, either $s{=}t$, or $s{<_{c}}t$.
In the former case, since $t{\models}B$ and $t{\not\models}C$, then $s{\models}B$ and $s{\not\models}C$.
Since either $s{\not\models}B$, or $s{\models}C$, then $s{\models}C$: a contradiction.
In the latter case, since $s$ is maximal with respect to $B{\rightarrow}C$, then $t{\models}B{\rightarrow}C$.
Since $t{\models}B$, then $t{\models}C$: a contradiction.
\medskip
\end{proof}
\section{Tips and clips}\label{section:tips}
From now on in this note and up to Lemma~\ref{lemma:truth:lemma:at:the:end:of:the:procedure}, let $A{\in}\Fo$ and $s_{0}$ be a prime theory.
\\
\\
A {\em tip}\/ is a $4$-tuple of the form $(i,s,\alpha,X)$ where $i{\in}\N$, $s{\in}W_{c}$, $\alpha{\in}\N$ and $X{\in}\N$.
\\
\\
The tip $(0,s_{0},0,0)$ is called {\em initial tip of $s_{0}$.}
\\
\\
The {\em rank of the tip $(i,s,\alpha,X)$}\/ is $\alpha$.
\\
\\
The {\em height of the tip $(i,s,\alpha,X)$}\/ is $X$.
\\
\\
The {\em degree of the tip $(i,s,\alpha,X)$}\/ (denoted $\degre(i,s,\alpha,X)$) is the number of $B{\in}\Sigma_{A}$ such that $s{\not\models}B$ and $s$ is not maximal with respect to $B$.
\begin{lemma}\label{lemma:about:degree:and:Sigma:A}
For all tips $(i,s,\alpha,X)$, $\degre(i,s,\alpha,X){\leq}\card(\Sigma_{A})$.
\end{lemma}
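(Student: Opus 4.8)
The plan is to observe that the quantity $\degre(i,s,\alpha,X)$ is, by its very definition, the cardinality of a subset of $\Sigma_{A}$, whence the bound is immediate. Concretely, for the given tip $(i,s,\alpha,X)$ I would introduce the set
$$D{=}\{B{\in}\Sigma_{A}\text{: }s{\not\models}B\text{ and }s\text{ is not maximal with respect to }B\}.$$
By the definition of the degree of a tip, $\degre(i,s,\alpha,X){=}\card(D)$.

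Next I would note that $D{\subseteq}\Sigma_{A}$, since every element of $D$ is required to belong to $\Sigma_{A}$. Because the cardinality function is monotone with respect to set inclusion, $\card(D){\leq}\card(\Sigma_{A})$. Combining the two observations yields $\degre(i,s,\alpha,X){=}\card(D){\leq}\card(\Sigma_{A})$, as desired. No use is made of the components $i$, $\alpha$ or $X$ of the tip, nor of any particular property of $s{\in}W_{c}$ beyond the fact that the satisfiability relation $s{\models}B$ and the notion of maximality of $s$ with respect to $B$ are well defined; the argument goes through uniformly for every tip.

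There is no real obstacle here: the statement is essentially a restatement of the definition together with the monotonicity of cardinality under inclusion, so the only thing worth stating explicitly is the containment $D{\subseteq}\Sigma_{A}$. The lemma is recorded at this point presumably because the bound $\card(\Sigma_{A}){\leq}{\parallel}A{\parallel}$ from Lemma~\ref{lemma:cardinal:sigma:A} will later let one control the degree of a tip in terms of the length of $A$, which matters for the termination argument of the decidability proof; but that use lies outside the present statement, and the proof itself requires nothing more than the subset observation.
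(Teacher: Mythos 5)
Your proof is correct and matches the paper's treatment: the paper states this lemma without any proof at all, regarding it as immediate from the definition of degree, and your argument simply makes explicit the obvious observation that the degree is the cardinality of a subset of $\Sigma_{A}$.
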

For all finite and nonempty sets ${\mathcal T}$ of tips and for all $\alpha{\in}\N$, let $\hauteur_{\alpha}({\mathcal T}){=}\max\{X{\in}\N$: $(i,s,\alpha,X){\in}{\mathcal T}\}$.
\\
\\
A {\em clip}\/ is a relational structure of the form $({\mathcal T},{\ll},{\triangleright})$ where ${\mathcal T}$ is a finite and nonempty set of tips and $\ll$ and $\triangleright$ are binary relations on $\mathcal T$.
\\
\\
The clip $(\{(0,s_{0},0,0)\},\emptyset,\emptyset)$ is called {\em initial clip of $s_{0}$.}
\\
\\
We say that the clip $({\mathcal T},{\ll},{\triangleright})$ is {\em coherent}\/ if for all $(i,s,\alpha,X),(j,t,\beta,Y){\in}{\mathcal T}$,
\begin{itemize}
\item if $i{=}j$ then $s{=}t$, $\alpha{=}\beta$ and $X{=}Y$,
\item if $(i,s,\alpha,X){\ll}(j,t,\beta,Y)$ then $j{\not=}i$, $s{\leq_{c}}t$, $\beta{=}\alpha$ and $Y{=}X{+}1$,
\item if $(i,s,\alpha,X){\triangleright}(j,t,\beta,Y)$ then $j{\not=}i$, $t{\in}R_{c}(s)$, $\beta{=}\alpha{+}1$ and $Y{=}X$.
\end{itemize}
\begin{lemma}\label{lemma:about:initial:clip:coherent}
The initial clip of $s_{0}$ is coherent.
\end{lemma}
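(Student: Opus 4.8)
The plan is to verify directly the three clauses in the definition of coherence for the initial clip $(\{(0,s_{0},0,0)\},\emptyset,\emptyset)$, which I shall write as $(\mathcal{T},\ll,\triangleright)$. Since $\mathcal{T}$ is the singleton $\{(0,s_{0},0,0)\}$ and both $\ll$ and $\triangleright$ are empty, each clause collapses to a triviality, and there is essentially nothing to compute.

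First I would treat the clause concerning tips with equal first component. The only way to choose $(i,s,\alpha,X),(j,t,\beta,Y){\in}\mathcal{T}$ is to take both of them equal to the single tip $(0,s_{0},0,0)$, whence $i{=}j{=}0$; but then $s{=}t{=}s_{0}$, $\alpha{=}\beta{=}0$ and $X{=}Y{=}0$ hold simultaneously, so the three required equalities are immediate.

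Next I would dispose of the two clauses concerning the relations $\ll$ and $\triangleright$. Since $\ll{=}\emptyset$ and $\triangleright{=}\emptyset$, there exist no pairs of tips related by $\ll$ or by $\triangleright$, and hence both clauses are satisfied vacuously. Putting the three verifications together yields that the initial clip of $s_{0}$ is coherent.

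There is no genuine obstacle here: the lemma is the base case of an inductive construction of clips, and its proof amounts to unwinding the definition of coherence on a one-element tip set with empty relations. The real work is deferred to the subsequent lemmas, which will have to establish that coherence is preserved under the operations that add new tips and new $\ll$- and $\triangleright$-edges to a clip.
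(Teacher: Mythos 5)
Your proposal is correct and matches the paper's treatment: the paper states this lemma without any proof, regarding it as immediate from the definitions, and your explicit unwinding (the single tip forces the equal-index clause trivially, and the two relational clauses hold vacuously since $\ll$ and $\triangleright$ are empty) is precisely the routine check the paper leaves to the reader.
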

Obviously, for all clips $({\mathcal T},{\ll},{\triangleright})$, the frame $({\mathcal T},{\ll^{\star}},{\triangleright^{+}})$ is finite and transitive.
\begin{lemma}\label{lemma:about:coherence:and:morphisms}
Let $({\mathcal T},{\ll},{\triangleright})$ be a clip.
If $({\mathcal T},{\ll},{\triangleright})$ is coherent then the function $f$: ${\mathcal T}{\longrightarrow}W_{c}$ such that for all $(i,s,\alpha,X){\in}{\mathcal T}$, $f(i,s,\alpha,X){=}s$ is a homomorphism from the frame $({\mathcal T},{\ll^{\star}},{\triangleright^{+}})$ to the canonical frame of $\LIK4$.
\end{lemma}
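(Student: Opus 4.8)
The plan is to reduce the two homomorphism conditions to the minimality of the closure operators $(\cdot)^{\star}$ and $(\cdot)^{+}$, so that no induction on the length of $\ll$- or $\triangleright$-paths is actually needed. Let $\sqsubseteq$ and $\rightsquigarrow$ be the binary relations on ${\mathcal T}$ such that for all $x,y{\in}{\mathcal T}$, $x{\sqsubseteq}y$ if and only if $f(x){\leq_{c}}f(y)$, and $x{\rightsquigarrow}y$ if and only if $f(x){R_{c}}f(y)$; in other words, $\sqsubseteq$ and $\rightsquigarrow$ are the pullbacks along $f$ of the two relations of the canonical frame. With this notation, the assertion that $f$ is a homomorphism from $({\mathcal T},{\ll^{\star}},{\triangleright^{+}})$ to $(W_{c},{\leq_{c}},{R_{c}})$ is exactly the conjunction of the two inclusions ${\ll^{\star}}{\subseteq}{\sqsubseteq}$ and ${\triangleright^{+}}{\subseteq}{\rightsquigarrow}$.

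First I would treat the order component. Since $\leq_{c}$ is a preorder on $W_{c}$, it is reflexive and transitive, hence its pullback $\sqsubseteq$ is reflexive and transitive as well. By coherence of $({\mathcal T},{\ll},{\triangleright})$, whenever $(i,s,\alpha,X){\ll}(j,t,\beta,Y)$ we have $s{\leq_{c}}t$, that is to say ${\ll}{\subseteq}{\sqsubseteq}$. Because $\ll^{\star}$ is by definition the least reflexive and transitive relation on ${\mathcal T}$ containing $\ll$, and $\sqsubseteq$ is one such relation, minimality gives ${\ll^{\star}}{\subseteq}{\sqsubseteq}$, which is precisely the first homomorphism condition.

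The accessibility component is entirely parallel, this time using transitivity of the canonical frame. By Lemma~\ref{lemma:canonical:frame:confluences}, $R_{c}$ is transitive, so its pullback $\rightsquigarrow$ is transitive. Coherence yields ${\triangleright}{\subseteq}{\rightsquigarrow}$, since $(i,s,\alpha,X){\triangleright}(j,t,\beta,Y)$ forces $t{\in}R_{c}(s)$, i.e. $s{R_{c}}t$. As $\triangleright^{+}$ is the least transitive relation containing $\triangleright$ and $\rightsquigarrow$ is transitive, minimality delivers ${\triangleright^{+}}{\subseteq}{\rightsquigarrow}$, which is the second homomorphism condition.

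No step presents any genuine difficulty; the whole lemma is a routine application of the universal property of the reflexive-transitive and transitive closures to the two pullback relations. The only two points that warrant a moment's attention are the use of reflexivity of $\leq_{c}$ to absorb the reflexive part of $\ll^{\star}$ (no analogue is required on the accessibility side, matching the fact that $\triangleright^{+}$ is merely the transitive, and not the reflexive-transitive, closure), and the explicit appeal to Lemma~\ref{lemma:canonical:frame:confluences} to license the transitivity of $R_{c}$ rather than taking it for granted.
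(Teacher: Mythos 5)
Your proof is correct, and it reaches the conclusion by a genuinely different (though closely related) route from the paper's. The paper unfolds $(i,s,\alpha,X){\ll^{\star}}(j,t,\beta,Y)$ and $(i,s,\alpha,X){\triangleright^{+}}(j,t,\beta,Y)$ into finite chains of single $\ll$-steps (respectively $\triangleright$-steps) and then argues by induction on the length of the chain, using coherence to handle each individual step; you instead observe that the pullback relations $\sqsubseteq$ and $\rightsquigarrow$ are respectively a reflexive-and-transitive relation containing $\ll$ and a transitive relation containing $\triangleright$, and then invoke the minimality clauses in the paper's definitions of $(\cdot)^{\star}$ and $(\cdot)^{+}$. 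Both arguments rest on exactly the same three facts---coherence for single steps, $\leq_{c}$ being a preorder, and $R_{c}$ being transitive---so the difference is one of packaging: the paper's path induction is more elementary but is left to the reader ("the reader may easily verify"), whereas your universal-property formulation eliminates the induction altogether and, as a side benefit, makes explicit the appeal to Lemma~\ref{lemma:canonical:frame:confluences} for the transitivity of $R_{c}$, a dependency that the paper's induction on $n{\geq}1$ uses silently.
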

\begin{proof}
Suppose $({\mathcal T},{\ll},{\triangleright})$ is coherent.
Let $(i,s,\alpha,X),(j,t,\beta,Y){\in}{\mathcal T}$.
\\
\\
Firstly, suppose $(i,s,\alpha,X){\ll^{\star}}(j,t,\beta,Y)$.
Hence, there exists $n{\in}\N$ and there exists $(k_{0},u_{0},\gamma_{0},Z_{0}),\ldots,(k_{n},u_{n},\gamma_{n},Z_{n}){\in}{\mathcal T}$ such that $(k_{0},u_{0},\gamma_{0},Z_{0}){=}(i,s,\alpha,X)$, $(k_{n},u_{n},\gamma_{n},Z_{n}){=}(j,t,\beta,Y)$ and for all $a{\in}(n)$, $(k_{a{-}1},u_{a{-}1},\gamma_{a{-}1},Z_{a{-}1}){\ll}(k_{a},u_{a},
$\linebreak$
\gamma_{a},Z_{a})$.
By induction on $n{\in}\N$, the reader may easily verify that $s{\leq_{c}}t$.
\\
\\
Secondly, suppose $(i,s,\alpha,X){\triangleright^{+}}(j,t,\beta,Y)$.
Thus, there exists $n{\geq}1$ and there exists $(k_{0},u_{0},\gamma_{0},Z_{0}),\ldots,(k_{n},u_{n},\gamma_{n},Z_{n}){\in}{\mathcal T}$ such that $(k_{0},u_{0},\gamma_{0},Z_{0}){=}(i,s,\alpha,X)$, $(k_{n},u_{n},\gamma_{n},Z_{n}){=}(j,t,\beta,Y)$ and for all $a{\in}(n)$, $(k_{a{-}1},u_{a{-}1},\gamma_{a{-}1},Z_{a{-}1}){\triangleright}(k_{a},u_{a},
$\linebreak$
\gamma_{a},Z_{a})$.
By induction on $n{\geq}1$, the reader may easily verify that $s{R_{c}}t$.
\medskip
\end{proof}
We say that the coherent clip $({\mathcal T},{\ll},{\triangleright})$ is {\em regular}\/ if for all $(i,s,\alpha,X),(j,t,\beta,Y),
$\linebreak$
(k,u,\gamma,Z){\in}{\mathcal T}$,
\begin{itemize}
\item if $(i,s,\alpha,X){\ll}(k,u,\gamma,Z)$ and $(j,t,\beta,Y){\ll}(k,u,\gamma,Z)$ then $i{=}j$,
\item if $(i,s,\alpha,X){\triangleright}(k,u,\gamma,Z)$ and $(j,t,\beta,Y){\triangleright}(k,u,\gamma,Z)$ then $i{=}j$,
\item if $(i,s,\alpha,X){\triangleright}(j,t,\beta,Y)$ and $(k,u,\gamma,Z){\ll}(j,t,\beta,Y)$ then there exists $(l,v,
$\linebreak$
\delta,T){\in}{\mathcal T}$ such that $(l,v,\delta,T){\ll}(i,s,\alpha,X)$ and $(l,v,\delta,T){\triangleright}(k,u,\gamma,Z)$.
\end{itemize}
\begin{lemma}\label{lemma:about:initial:clip:regular}
The initial clip of $s_{0}$ is regular.
\end{lemma}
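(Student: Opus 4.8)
The plan is to reduce everything to the fact that the initial clip of $s_{0}$ is $(\{(0,s_{0},0,0)\},\emptyset,\emptyset)$, whose two relations $\ll$ and $\triangleright$ are both empty. First I would note that regularity is, by definition, a property only of \emph{coherent} clips, so the prerequisite is to know that the initial clip is coherent; this is exactly the content of Lemma~\ref{lemma:about:initial:clip:coherent}. With coherence in hand, it remains to check the three defining clauses of regularity.

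Next I would run through the three clauses in order. Each clause is a conditional whose antecedent asserts the presence of one or two pairs of tips related by $\ll$ or by $\triangleright$ in ${\mathcal T}$: the first clause presupposes two $\ll$-edges, the second two $\triangleright$-edges, and the third one $\triangleright$-edge together with one $\ll$-edge. Since ${\ll}{=}\emptyset$ and ${\triangleright}{=}\emptyset$ in the initial clip, no such edges exist, so the antecedent of each clause is false and the clause holds vacuously. Hence all three clauses are satisfied and the initial clip is regular.

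I do not anticipate any real obstacle: the whole argument rests on the emptiness of $\ll$ and $\triangleright$ together with the coherence supplied by the earlier lemma. The only point that warrants a moment's attention is to confirm that the antecedent of \emph{every} regularity clause genuinely mentions at least one of the two empty relations, so that vacuous truth applies uniformly across all three; a quick inspection of the definition of regularity confirms this.
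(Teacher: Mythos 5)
Your proof is correct: the paper states this lemma without any proof, treating it as obvious, and your argument---coherence from Lemma~\ref{lemma:about:initial:clip:coherent} plus vacuous satisfaction of all three regularity clauses because ${\ll}$ and ${\triangleright}$ are empty in $(\{(0,s_{0},0,0)\},\emptyset,\emptyset)$---is exactly the intended reasoning.
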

Obviously, for all regular clips $({\mathcal T},{\ll},{\triangleright})$, the relational structure $({\mathcal T},{\ll})$ is a disjoint union of trees.
\\
\\
Moreover, for all regular clips $({\mathcal T},{\ll},{\triangleright})$, the frame $({\mathcal T},{\ll^{\star}},{\triangleright^{+}})$ is upward confluent.
\\
\\
For all $\alpha{\in}\N$, the {\em $\alpha$-slice of a regular clip $({\mathcal T},{\ll},{\triangleright})$}\/ is the $P_{A}$-labelled tree $({\mathcal T}_{\alpha},{\ll_{\alpha}},\lambda_{\alpha})$ where\footnote{The reader may easily verify that $({\mathcal T}_{\alpha},{\ll_{\alpha}})$ is a tree and $\lambda_{\alpha}$ is $\leq_{A}$-monotone.}
\begin{itemize}
\item ${\mathcal T}_{\alpha}{=}\{{-}1\}{\cup}\{(j,t,\beta,Y){\in}{\mathcal T}$: $\alpha{=}\beta\}$,
\item $\ll_{\alpha}$ is the least binary relation on ${\mathcal T}_{\alpha}$ such that
\begin{itemize}
\item for all $(k,u,\alpha,Z){\in}{\mathcal T}_{\alpha}$, if for all $(j,t,\alpha,Y){\in}{\mathcal T}_{\alpha}$, $(j,t,\alpha,Y){\not\ll}(k,u,\alpha,Z)$ then ${-}1{\ll_{\alpha}}(k,u,\alpha,Z)$,
\item for all $(j,t,\alpha,Y),(k,u,\alpha,Z){\in}{\mathcal T}_{\alpha}$, if $(j,t,\alpha,Y){\ll}(k,u,\alpha,Z)$ then $(j,t,
$\linebreak$
\alpha,Y){\ll_{\alpha}}(k,u,\alpha,Z)$,
\end{itemize}
\item $\lambda_{\alpha}$: ${\mathcal T}_{\alpha}\longrightarrow P_{A}$ is the function such that
\begin{itemize}
\item $\lambda_{\alpha}({-}1){=}{-}1$,
\item for all $(j,t,\alpha,Y){\in}{\mathcal T}_{\alpha}$, $\lambda_{\alpha}(j,t,\alpha,Y){=}t{\cap}\Sigma_{A}$.
\end{itemize}
\end{itemize}
The {\em finitely indexed family of $P_{A}$-labelled trees associated to a regular clip $({\mathcal T},{\ll},
$\linebreak$
{\triangleright})$ and $\alpha{\in}\N$}\/ is the finitely indexed family $({\mathcal T}_{\beta},{\ll_{\beta}},\lambda_{\beta})_{\beta{\in}(\alpha)}$.
\section{Defects}\label{section:defects}
Let $({\mathcal T},{\ll},{\triangleright})$ be a regular clip.
\\
\\
It might be the case that for some $(i,s,\alpha,X){\in}{\mathcal T}$ and for some $B{\rightarrow}C{\in}\Sigma_{A}$, $s{\not\models}B{\rightarrow}C$ and $s$ is not maximal with respect to $B{\rightarrow}C$ whereas there is no witness in ${\mathcal T}$ of this fact, i.e. there is no $(j,t,\beta,Y){\in}{\mathcal T}$ such that $(i,s,\alpha,X){\ll}(j,t,\beta,Y)$, $t{\models}B$ and $t{\not\models}C$.
Such situation will correspond to a defect of maximality.
\\
\\
Moreover, it might also be the case that for some $(i,s,\alpha,X){\in}{\mathcal T}$ and for some ${\square}B{\in}\Sigma_{A}$, $s{\not\models}{\square}B$ whereas there is no witness in ${\mathcal T}$ of this fact, i.e. there is no $(j,t,\beta,Y){\in}{\mathcal T}$ such that $(i,s,\alpha,X){\triangleright}(j,t,\beta,Y)$ and $t{\not\models}B$.
Such situation will correspond to a defect of $\square$-accessibility.
\\
\\
In other respect, it might also be the case that for some $(i,s,\alpha,X){\in}{\mathcal T}$ and for some ${\lozenge}B{\in}\Sigma_{A}$, $s{\models}{\lozenge}B$ whereas there is no witness in ${\mathcal T}$ of this fact, i.e. there is no $(j,t,\beta,Y){\in}{\mathcal T}$ such that $(i,s,\alpha,X){\triangleright}(j,t,\beta,Y)$ and $t{\models}B$.
Such situation will correspond to a defect of $\lozenge$-accessibility.
\\
\\
Finally, it might as well be the case that either the frame $({\mathcal T},{\ll^{\star}},{\triangleright^{+}})$ is not downward confluent, or the frame $({\mathcal T},{\ll^{\star}},{\triangleright^{+}})$ is not forward confluent.
Such situation will correspond to defects of downward confluence and defects of forward confluence.
\\
\\
A {\em defect of maximality of a regular clip $({\mathcal T},{\ll},{\triangleright})$}\/ is a triple $((i,s,\alpha,X),B,C)$ where $(i,s,\alpha,X){\in}{\mathcal T}$ and $B,C{\in}\Fo$ are such that\footnote{Here, when we write ``$s{\not\models}B{\rightarrow}C$'' and ``$t{\models}B{\rightarrow}C$'', we mean ``$(W_{c},{\leq_{c}},{R_{c}},V_{c}),s{\not\models}B{\rightarrow}C$'' and ``$(W_{c},{\leq_{c}},{R_{c}},V_{c}),t{\models}B{\rightarrow}C$''.}
\begin{itemize}
\item $B{\rightarrow}C{\in}\Sigma_{A}$,
\item $s{\not\models}B{\rightarrow}C$ and $s$ is not maximal with respect to $B{\rightarrow}C$,
\item for all $(j,t,\beta,Y){\in}{\mathcal T}$, if $(i,s,\alpha,X){\ll}(j,t,\beta,Y)$ then $t{\models}B{\rightarrow}C$.
\end{itemize}
The {\em rank of the defect $((i,s,\alpha,X),B,C)$ of maximality}\/ is $\alpha$.
\\
\\
The {\em height of the defect $((i,s,\alpha,X),B,C)$ of maximality}\/ is $X$.
\\
\\
A {\em defect of $\square$-accessibility of a regular clip $({\mathcal T},{\ll},{\triangleright})$}\/ is a couple $((i,s,\alpha,X),B)$ where $(i,s,\alpha,X){\in}{\mathcal T}$ and $B{\in}\Fo$ are such that\footnote{Here, when we write ``$s{\not\models}{\square}B$'' and ``$t{\models}B$'', we mean ``$(W_{c},{\leq_{c}},{R_{c}},V_{c}),s{\not\models}{\square}B$'' and ``$(W_{c},{\leq_{c}},{R_{c}},V_{c}),t{\models}B$''.}
\begin{itemize}
\item ${\square}B{\in}\Sigma_{A}$,
\item $s{\not\models}{\square}B$,
\item for all $(j,t,\beta,Y){\in}{\mathcal T}$, if $(i,s,\alpha,X){\triangleright}(j,t,\beta,Y)$ then $t{\models}B$.
\end{itemize}
The {\em rank of the defect $((i,s,\alpha,X),B)$ of $\square$-accessibility}\/ is $\alpha$.
\\
\\
The {\em height of the defect $((i,s,\alpha,X),B)$ of $\square$-accessibility}\/ is $X$.
\\
\\
A {\em defect of $\lozenge$-accessibility of a regular clip $({\mathcal T},{\ll},{\triangleright})$}\/ is a couple $((i,s,\alpha,X),B)$ where $(i,s,\alpha,X){\in}{\mathcal T}$ and $B{\in}\Fo$ are such that\footnote{Here, when we write ``$s{\models}{\lozenge}B$'' and ``$t{\not\models}B$'', we mean ``$(W_{c},{\leq_{c}},{R_{c}},V_{c}),s{\models}{\lozenge}B$'' and ``$(W_{c},{\leq_{c}},{R_{c}},V_{c}),t{\not\models}B$''.}
\begin{itemize}
\item ${\lozenge}B{\in}\Sigma_{A}$,
\item $s{\models}{\lozenge}B$,
\item for all $(j,t,\beta,Y){\in}{\mathcal T}$, if $(i,s,\alpha,X){\triangleright}(j,t,\beta,Y)$ then $t{\not\models}B$.
\end{itemize}
The {\em rank of the defect $((i,s,\alpha,X),B)$ of $\lozenge$-accessibility}\/ is $\alpha$.
\\
\\
The {\em height of the defect $((i,s,\alpha,X),B)$ of $\lozenge$-accessibility}\/ is $X$.
\\
\\
A {\em defect of downward confluence of a regular clip $({\mathcal T},{\ll},{\triangleright})$}\/ is a triple $((i,s,\alpha,X),
$\linebreak$
(j,t,\beta,Y),(k,u,\gamma,Z))$ where $(i,s,\alpha,X),(j,t,\beta,Y),(k,u,\gamma,Z){\in}{\mathcal T}$ are such that
\begin{itemize}
\item $(i,s,\alpha,X){\ll}(j,t,\beta,Y)$,
\item $(j,t,\beta,Y){\triangleright}(k,u,\gamma,Z)$,
\item for all $(l,v,\delta,T){\in}{\mathcal T}$, either $(i,s,\alpha,X){\not\triangleright}(l,v,\delta,T)$, or $(l,v,\delta,T){\not\ll}(k,u,\gamma,
$\linebreak$
Z)$.
\end{itemize}
The {\em rank of the defect $((i,s,\alpha,X),(j,t,\beta,Y),(k,u,\gamma,Z))$ of downward confluence}\/ is $\alpha$.
\\
\\
The {\em height of the defect $((i,s,\alpha,X),(j,t,\beta,Y),(k,u,\gamma,Z))$ of downward confluence}\/ is $X$.
\begin{lemma}\label{lemma:about:defects:of:downward:confluence}
Let $({\mathcal T},{\ll},{\triangleright})$ be a regular clip.
If $((i,s,\alpha,X),(j,t,\beta,Y),(k,u,\gamma,
$\linebreak$
Z))$ is a defect of downward confluence of $({\mathcal T},{\ll},{\triangleright})$ then for all $(l,v,\delta,T){\in}{\mathcal T}$, $(l,v,\delta,T){\not\ll}(k,u,\gamma,Z)$.
\end{lemma}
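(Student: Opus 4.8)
The plan is to argue by contradiction, using the three conditions packaged into regularity together with the third clause in the definition of a defect of downward confluence. Suppose, towards a contradiction, that some tip $(l,v,\delta,T){\in}{\mathcal T}$ satisfies $(l,v,\delta,T){\ll}(k,u,\gamma,Z)$.

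Since $((i,s,\alpha,X),(j,t,\beta,Y),(k,u,\gamma,Z))$ is a defect of downward confluence of $({\mathcal T},{\ll},{\triangleright})$, we have $(j,t,\beta,Y){\triangleright}(k,u,\gamma,Z)$. Feeding this, together with the assumed $(l,v,\delta,T){\ll}(k,u,\gamma,Z)$, into the third regularity condition (with $(j,t,\beta,Y)$, $(k,u,\gamma,Z)$ and $(l,v,\delta,T)$ playing the roles of the $\triangleright$-predecessor, the common successor and the $\ll$-predecessor respectively) yields a tip $(m,w,\epsilon,U){\in}{\mathcal T}$ such that $(m,w,\epsilon,U){\ll}(j,t,\beta,Y)$ and $(m,w,\epsilon,U){\triangleright}(l,v,\delta,T)$.

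Next I would pin down this witness. The defect also provides $(i,s,\alpha,X){\ll}(j,t,\beta,Y)$, so both $(m,w,\epsilon,U)$ and $(i,s,\alpha,X)$ lie $\ll$-below $(j,t,\beta,Y)$. The first regularity condition forces their first components to coincide, i.e.\ $m{=}i$, whereupon the first coherence clause upgrades this to the full identity $(m,w,\epsilon,U){=}(i,s,\alpha,X)$. Substituting back gives $(i,s,\alpha,X){\triangleright}(l,v,\delta,T)$.

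Finally, $(i,s,\alpha,X){\triangleright}(l,v,\delta,T)$ together with the standing assumption $(l,v,\delta,T){\ll}(k,u,\gamma,Z)$ contradicts the third clause in the definition of a defect of downward confluence, which demands that for every tip $(l,v,\delta,T)$ at least one of $(i,s,\alpha,X){\not\triangleright}(l,v,\delta,T)$ and $(l,v,\delta,T){\not\ll}(k,u,\gamma,Z)$ hold. This contradiction establishes the claim. I expect no genuine obstacle here: the argument is pure relational bookkeeping, and the only point requiring care is the correct instantiation of the three-place regularity condition and the subsequent identification of its witness with $(i,s,\alpha,X)$ via uniqueness of $\ll$-predecessors; there are no estimates or inductions to carry out.
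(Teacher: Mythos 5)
Your proof is correct and follows essentially the same route as the paper's: contradiction, application of the third regularity condition with $(j,t,\beta,Y)$, $(k,u,\gamma,Z)$, $(l,v,\delta,T)$ to obtain a witness, identification of that witness with $(i,s,\alpha,X)$ via the uniqueness of $\ll$-predecessors plus coherence, and a final clash with the third clause of the defect. No gaps; the instantiations are exactly the ones the paper uses.
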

\begin{proof}
Suppose $((i,s,\alpha,X),(j,t,\beta,Y),(k,u,\gamma,Z))$ is a defect of downward confluence of $({\mathcal T},{\ll},{\triangleright})$.
For the sake of the contradiction, suppose there exists $(l,v,\delta,T){\in}{\mathcal T}$ such that $(l,v,\delta,T){\ll}(k,u,\gamma,Z)$.
Since $((i,s,\alpha,X),(j,t,\beta,Y),(k,
$\linebreak$
u,\gamma,Z))$ is a defect of downward confluence of $({\mathcal T},{\ll},{\triangleright})$, then $(i,s,\alpha,X){\ll}(j,t,
$\linebreak$
\beta,Y)$ and $(j,t,\beta,Y){\triangleright}(k,u,\gamma,Z)$.
Since $({\mathcal T},{\ll},{\triangleright})$ is regular and $(l,v,\delta,T){\ll}(k,u,
$\linebreak$
\gamma,Z)$, then there exists $(m,w,\epsilon,U){\in}{\mathcal T}$ such that $(m,w,\epsilon,U){\ll}(j,t,\beta,Y)$ and $(m,w,\epsilon,U){\triangleright}(l,v,\delta,T)$.
Since $({\mathcal T},{\ll},{\triangleright})$ is regular and $(i,s,\alpha,X){\ll}(j,t,\beta,Y)$, then $m{=}i$.
Hence, $w{=}s$, $\epsilon{=}\alpha$ and $U{=}X$.
Since $(m,w,\epsilon,U){\triangleright}(l,v,\delta,T)$ and $m{=}i$, then $(i,s,\alpha,X){\triangleright}(l,v,\delta,T)$.
Since $((i,s,\alpha,X),(j,t,\beta,Y),(k,u,\gamma,Z))$ is a defect of downward confluence of $({\mathcal T},{\ll},{\triangleright})$, then either $(i,s,\alpha,X){\not\triangleright}(l,v,\delta,T)$, or $(l,v,\delta,T){\not\ll}(k,u,\gamma,Z)$.
Since $(i,s,\alpha,X){\triangleright}(l,v,\delta,T)$, then $(l,v,\delta,T){\not\ll}(k,u,\gamma,
$\linebreak$
Z)$: a contradiction.
\medskip
\end{proof}
A {\em defect of forward confluence of a regular clip $({\mathcal T},{\ll},{\triangleright})$}\/ is a triple $((i,s,\alpha,X),
$\linebreak$
(j,t,\beta,Y),(k,u,\gamma,Z))$ where $(i,s,\alpha,X),(j,t,\beta,Y),(k,u,\gamma,Z){\in}{\mathcal T}$ are such that
\begin{itemize}
\item $(j,t,\beta,Y){\ll}(i,s,\alpha,X)$,
\item $(j,t,\beta,Y){\triangleright}(k,u,\gamma,Z)$,
\item for all $(l,v,\delta,T){\in}{\mathcal T}$, either $(i,s,\alpha,X){\not\triangleright}(l,v,\delta,T)$, or $(k,u,\gamma,Z){\not\ll}(l,v,\delta,
$\linebreak$
T)$.
\end{itemize}
The {\em rank of the defect $((i,s,\alpha,X),(j,t,\beta,Y),(k,u,\gamma,Z))$ of forward confluence}\/ is $\alpha$.
\\
\\
The {\em height of the defect $((i,s,\alpha,X),(j,t,\beta,Y),(k,u,\gamma,Z))$ of forward confluence}\/ is $X$.
\\
\\
For all $\alpha{\in}\N$, we say that a regular clip $({\mathcal T},{\ll},{\triangleright})$ is
\begin{itemize}
\item {\em $\alpha$-clean for maximality}\/ if for all $\beta{\in}\N$, if $\beta{<}\alpha$ then $({\mathcal T},{\ll},{\triangleright})$ contains no defect of maximality of rank $\beta$,
\item {\em $\alpha$-clean for accessibility}\/ if for all $\beta{\in}\N$, if $\beta{<}\alpha$ then $({\mathcal T},{\ll},{\triangleright})$ contains no defect of accessibility of rank $\beta$,
\item {\em $\alpha$-clean for downward confluence}\/ if for all $\beta{\in}\N$, if $\beta{<}\alpha$ then $({\mathcal T},{\ll},{\triangleright})$ contains no defect of downward confluence of rank $\beta$,
\item {\em $\alpha$-clean for forward confluence}\/ if for all $\beta{\in}\N$, if $\beta{<}\alpha$ then $({\mathcal T},{\ll},{\triangleright})$ contains no defect of forward confluence of rank $\beta$.
\end{itemize}
We say that a regular clip is {\em clean}\/ if it contains no defect.
\begin{lemma}\label{lemma:about:the:cleanness:of:initial:clips}
The initial clip of $s_{0}$ is $0$-clean for maximality, $0$-clean for accessibility, $0$-clean for downward confluence and $0$-clean for forward confluence.
\end{lemma}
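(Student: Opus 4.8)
The plan is to treat this as a vacuous-truth statement, so that almost nothing needs to be computed. First I would recall that, by Lemma~\ref{lemma:about:initial:clip:regular}, the initial clip of $s_{0}$ is a regular clip. This matters because the four notions of $0$-cleanness are only defined for regular clips; establishing regularity is what makes it meaningful to ask whether the initial clip is $0$-clean for maximality, for accessibility, for downward confluence and for forward confluence.

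Next I would simply unfold each of the four definitions in the case $\alpha{=}0$. In every case the requirement takes the shape ``for all $\beta{\in}\N$, if $\beta{<}0$ then $({\mathcal T},{\ll},{\triangleright})$ contains no defect of the relevant kind of rank $\beta$''. Since the ranks of defects are drawn from $\N{=}\{0,1,2,\ldots\}$ and no natural number is strictly smaller than $0$, the antecedent $\beta{<}0$ is never met. Hence each of the four implications holds vacuously, and the initial clip of $s_{0}$ is $0$-clean for maximality, $0$-clean for accessibility, $0$-clean for downward confluence and $0$-clean for forward confluence.

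There is no genuine obstacle here: the only point to verify is that ranks range over natural numbers and that nothing lies strictly below $0$ in $\N$, which is guaranteed by the convention $0{\in}\N$ adopted throughout (the same convention under which the initial tip $(0,s_{0},0,0)$ has rank $0$). The lemma is isolated mainly because the subsequent development will proceed by induction on the rank $\alpha$, and this provides the base case; all the substantive work—actually exhibiting and repairing defects of a given rank, and controlling the associated $P_{A}$-labelled trees—will occur in the inductive steps rather than here.
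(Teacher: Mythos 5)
Your proof is correct and matches the paper's treatment: the paper states this lemma without proof, regarding it as immediate, and the implicit justification is exactly your vacuity argument (no $\beta{\in}\N$ satisfies $\beta{<}0$, so all four cleanness conditions hold trivially for the initial clip, whose regularity is Lemma~\ref{lemma:about:initial:clip:regular}).
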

Defects are not welcome.
Luckily, they are repairable.
In Sections~\ref{section:reparations:maximality:defects}, \ref{section:reparations:accessibility:defects} and~\ref{section:reparations:confluence:defects}, we are showing how defects can be repaired.
\section{Repair of maximality defects}\label{section:reparations:maximality:defects}
In this section, we are showing how maximality defects can be repaired.
\\
\\
The {\em repair of a defect $((i,s,\alpha,X),B,C)$ of maximality of a regular clip $({\mathcal T},{\ll},{\triangleright})$}\/ consists in sequentially executing the following actions:
\begin{itemize}
\item add a tip $(j,t,\beta,Y)$ to ${\mathcal T}$ such that $j$ is new, $s{<_{c}}t$, $t{\not\models}B{\rightarrow}C$, $t$ is maximal with respect to $B{\rightarrow}C$, $\beta{=}\alpha$ and $Y{=}X{+}1$,\footnote{The existence of $t{\in}W_{c}$ such that $s{<_{c}}t$, $t{\not\models}B{\rightarrow}C$ and $t$ is maximal with respect to $B{\rightarrow}C$ is an immediate consequence of Lemma~\ref{lemma:maximal} and the fact that $((i,s,\alpha,X),B,C)$ is a defect of maximality.}
\item add the couple $((i,s,\alpha,X),(j,t,\beta,Y))$ to $\ll$.
\end{itemize}
Obviously, the resulting clip is coherent.
\\
\\
Moreover, since the resulting clip is obtained by adding the tip $(j,t,\beta,Y)$ to ${\mathcal T}$ and the couple $((i,s,\alpha,X),(j,t,\beta,Y))$ to $\ll$, then the resulting clip is regular.
\\
\\
In other respect, notice that this repair is the repair of a defect of rank $\alpha$ and height $X$ that only introduces in ${\mathcal T}$ a tip of rank $\alpha$ and height $X{+}1$.
\\
\\
As well, notice that $\degre(j,t,\beta,Y){<}\degre(i,s,\alpha,X)$.
\\
\\
Given $\alpha{\in}\N$ and a regular clip $({\mathcal T},{\ll},{\triangleright})$, the {\em maximality procedure}\/ is defined as follows:
\begin{enumerate}
\item $x:=({\mathcal T},{\ll},{\triangleright})$,
\item $X:=0$,
\item while $x$ contains defects of maximality of rank $\alpha$ do
\begin{enumerate}
\item repair in $x$ all defects of maximality of rank $\alpha$ and height $X$,
\item $X:=X{+}1$.
\end{enumerate}
\end{enumerate}
Given $\alpha{\in}\N$ and a regular clip $({\mathcal T},{\ll},{\triangleright})$, the role of the maximality procedure is to iteratively repair all defects of maximality of $({\mathcal T},{\ll},{\triangleright})$ of rank $\alpha$.
\begin{lemma}\label{lemma:about:height:maximal:in:tips:repairing:max:defects}
Given $\alpha{\in}\N$ and a regular clip $({\mathcal T},{\ll},{\triangleright})$, at any moment of the execution of the maximality procedure, for all tips $(j,t,\beta,Y)$ occurring in $x$, $Y{\leq}\max\{Z{\in}\N$: $(k,u,\gamma,Z){\in}{\mathcal T}\}{+}\card(\Sigma_{A})$.
\end{lemma}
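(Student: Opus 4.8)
The plan is to track how the maximality procedure modifies the heights of tips, and to bound the maximal height reachable from the combination of (a) the heights already present in the initial clip $({\mathcal T},{\ll},{\triangleright})$ and (b) the additional height that the repairs themselves can introduce. I would argue by induction on the iterations of the while-loop, exploiting the two key local facts already recorded just before the statement: each repair of a maximality defect of rank $\alpha$ and height $X$ only introduces a single tip of rank $\alpha$ and height $X{+}1$, and moreover the degree strictly drops, $\degre(j,t,\beta,Y){<}\degre(i,s,\alpha,X)$.

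First I would set $M{=}\max\{Z{\in}\N$: $(k,u,\gamma,Z){\in}{\mathcal T}\}$, the largest height present in the starting clip, and I would observe that the claimed bound is $M{+}\card(\Sigma_{A})$. The degree bound from Lemma~\ref{lemma:about:degree:and:Sigma:A} tells us that $\degre$ never exceeds $\card(\Sigma_{A})$; since each maximality repair creates a child of strictly smaller degree, no $\ll$-chain of repair-created tips emanating from a given tip can be longer than $\card(\Sigma_{A})$ steps. Each such step raises the height by exactly $1$ (from $X$ to $X{+}1$). Hence, starting from any tip whose height is at most $M$, the successive repairs can raise the height by at most $\card(\Sigma_{A})$ before the degree is forced down to $0$ and no further maximality defect can be repaired above it.

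The induction I would run is on the loop variable $X$ in the maximality procedure: I would show the invariant that at the start of each iteration, every tip $(j,t,\beta,Y)$ occurring in $x$ satisfies $Y{\leq}M{+}\card(\Sigma_{A})$. The base case is immediate from the definition of $M$. For the inductive step, when we repair a defect of rank $\alpha$ and height $X$, the new tip has height $X{+}1$; I would use the degree-decrease property to show that any tip of height $X{+}1$ created in this way sits at the end of a chain of repairs whose total length is bounded by $\card(\Sigma_{A})$, so that $X{+}1$ cannot exceed $M{+}\card(\Sigma_{A})$. The point is that a tip of height $X{+}1$ arising from a repair has a $\ll$-ancestor of height $X$ and strictly larger degree, and iterating backwards along this ancestry reaches, within at most $\card(\Sigma_{A})$ steps, a tip already present in ${\mathcal T}$ (of height at most $M$), because degrees are nonnegative integers bounded by $\card(\Sigma_{A})$ and drop by at least one at every repair step.

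The main obstacle, I expect, is making precise the bookkeeping that links the height increment to the degree decrement along the $\ll$-relation. Concretely, one must verify that the chain of repairs responsible for producing a high tip really does correspond to a strictly descending chain of degrees, which in turn requires checking that the repair operation does not reset or re-raise degrees along the way and that the ancestry used in counting genuinely consists of repair-created edges rather than edges already present in the initial clip. Once this correspondence between height growth and degree decay is nailed down, the bound $M{+}\card(\Sigma_{A})$ follows directly from $\degre{\leq}\card(\Sigma_{A})$, and the invariant is preserved through the loop.
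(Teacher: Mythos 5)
Your proposal is correct and follows essentially the same route as the paper's proof: the paper likewise combines the two observations that each repair of a maximality defect of height $X$ only introduces a tip of height $X{+}1$ whose degree is strictly smaller than that of the defective tip, and then invokes Lemma~\ref{lemma:about:degree:and:Sigma:A} to conclude that repair chains cannot raise heights by more than $\card(\Sigma_{A})$ above the maximum height in ${\mathcal T}$. Your additional bookkeeping (the explicit loop invariant and the backwards walk along the repair ancestry with strictly descending degrees) is exactly the detail the paper leaves implicit.
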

\begin{proof}
It suffices to notice that for all $X{\in}\N$, the repair of a defect of maximality of rank $\alpha$ and height $X$ only introduces a tip of rank $\alpha$ and height $X{+}1$.
Moreover, as noticed above, the degree of the introduced tip is strictly smaller than the degree of the tip that has caused this introduction.
Hence, by Lemma~\ref{lemma:about:degree:and:Sigma:A}, at any moment of the execution of the maximality procedure, for all tips $(j,t,\beta,Y)$ occurring in $x$, $Y{\leq}\max\{Z{\in}\N$: $(k,u,\gamma,Z){\in}{\mathcal T}\}{+}\card(\Sigma_{A})$.
\medskip
\end{proof}
\begin{lemma}\label{lemma:procedure:defects:maximality:terminates}
Given $\alpha{\in}\N$ and a regular clip $({\mathcal T},{\ll},{\triangleright})$, the maximality procedure terminates.
\end{lemma}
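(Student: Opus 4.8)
The plan is to establish termination in two layers: first that the body of the while loop --- the instruction ``repair all defects of maximality of rank $\alpha$ and height $X$'' --- is itself a finite task, and second that the while loop is entered only finitely often. Throughout I would fix $M{=}\max\{Z{\in}\N$: $(k,u,\gamma,Z){\in}{\mathcal T}\}$, the largest height occurring in the \emph{input} clip, and keep in mind that by Lemma~\ref{lemma:about:height:maximal:in:tips:repairing:max:defects} every tip ever occurring in $x$ has height at most $M{+}\card(\Sigma_{A})$.

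For the first layer, I would fix a value of the counter $X$ and observe that, at the moment step~3(a) is executed, ${\mathcal T}$ is finite, so there are only finitely many tips of height $X$ in $x$; since $\Sigma_{A}$ is finite by Lemma~\ref{lemma:cardinal:sigma:A}, each such tip contributes only finitely many defects $((i,s,\alpha,X),B,C)$ of maximality, whence there are finitely many defects of rank $\alpha$ and height $X$ to treat. The key point is that a single repair does not manufacture a fresh defect of height $X$: it adds one tip $(j,t,\alpha,X{+}1)$ of height $X{+}1$ together with the couple $((i,s,\alpha,X),(j,t,\alpha,X{+}1))$, so the only tip whose set of $\ll$-successors changes is $(i,s,\alpha,X)$, and enlarging the set of $\ll$-successors of a tip can only make the universally quantified clause ``every $\ll$-successor satisfies $B^{\prime}{\rightarrow}C^{\prime}$'' harder to satisfy, never easier. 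Hence repairs can only eliminate defects of height $X$, never create them; the collection of defects of rank $\alpha$ and height $X$ therefore strictly shrinks as they are repaired one by one, and step~3(a) halts after finitely many repairs.

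For the second layer, I would exploit the height bound. Each pass through the loop increments $X$, and the repairs performed during that pass introduce only tips of height $X{+}1$ while, as just argued, creating no defect of height $\leq X$. Processing heights in strictly increasing order thus guarantees that after the pass with counter value $X$ no defect of rank $\alpha$ of height $\leq X$ remains. By Lemma~\ref{lemma:about:height:maximal:in:tips:repairing:max:defects} no tip ever has height exceeding $M{+}\card(\Sigma_{A})$, so every defect of maximality of rank $\alpha$ has height at most $M{+}\card(\Sigma_{A})$. Consequently, once the counter reaches $X{=}M{+}\card(\Sigma_{A}){+}1$, all defects of rank $\alpha$ have already been repaired in earlier passes, the guard of the while loop is false, and the procedure stops.

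The step I expect to be most delicate is the interaction between the two layers: I must make sure the bound of Lemma~\ref{lemma:about:height:maximal:in:tips:repairing:max:defects} is read with respect to the \emph{input} clip ${\mathcal T}$, so that $M$ is fixed once and for all and is not inflated by the tips added during execution, and that the bookkeeping ``repairs at height $X$ create tips only at height $X{+}1$ and never defects at height $\leq X$'' is airtight. Granting this, the bound caps the number of loop passes by $M{+}\card(\Sigma_{A}){+}1$ which, combined with the finiteness of each pass, yields termination.
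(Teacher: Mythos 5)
Your proof is correct and takes essentially the same route as the paper: the paper's own proof is a one-line appeal to Lemma~\ref{lemma:about:height:maximal:in:tips:repairing:max:defects}, exactly the height bound you use to cap the number of loop passes at $\max\{Z{\in}\N$: $(k,u,\gamma,Z){\in}{\mathcal T}\}{+}\card(\Sigma_{A})$. The extra bookkeeping you supply (finiteness of each pass, and the fact that a repair at height $X$ introduces only a tip of height $X{+}1$ and never creates a defect of height ${\leq}X$) is precisely what the paper leaves implicit.
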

\begin{proof}
By Lemma~\ref{lemma:about:height:maximal:in:tips:repairing:max:defects}.
\medskip
\end{proof}
\begin{lemma}\label{lemma:about:alpha:clean:coherent:clips:repairing:maximality}
Let $\alpha{\in}\N$ and $({\mathcal T},{\ll},{\triangleright})$ be a regular clip.
If $({\mathcal T},{\ll},{\triangleright})$ is $\alpha$-clean for maximality, $\alpha$-clean for accessibility, $\alpha$-clean for downward confluence and $\alpha$-clean for forward confluence then the regular clip obtained from $({\mathcal T},{\ll},{\triangleright})$ after the execution of the maximality procedure is $(\alpha{+}1)$-clean for maximality, $\alpha$-clean for accessibility, $\alpha$-clean for downward confluence and $\alpha$-clean for forward confluence.
\end{lemma}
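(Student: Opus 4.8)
The plan is to exploit how local the maximality repairs are: every repair performed by the maximality procedure adds only a fresh tip of rank $\alpha$ together with a single new $\ll$-edge whose source is an already-present tip and whose target is that fresh tip, while the relation $\triangleright$ is left completely untouched. First I would record that by Lemma~\ref{lemma:procedure:defects:maximality:terminates} the procedure terminates, so the resulting structure is obtained from $({\mathcal T},{\ll},{\triangleright})$ by a finite sequence of repairs; since each individual repair preserves coherence and regularity (as observed when the repair was defined), the resulting clip is indeed a regular clip, as the statement presupposes.

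For $(\alpha{+}1)$-cleanness for maximality I would split on the rank of a hypothetical surviving defect. A defect of maximality of rank $\alpha$ cannot survive, because the \emph{while}-loop of the maximality procedure exits exactly when $x$ contains no defect of maximality of rank $\alpha$, and this includes any defect anchored at a freshly added tip; here termination (Lemma~\ref{lemma:procedure:defects:maximality:terminates}, resting on the height bound of Lemma~\ref{lemma:about:height:maximal:in:tips:repairing:max:defects}) is what I use. For a defect of rank $\beta{<}\alpha$ I would argue that none can be created: its anchor would have to be an already-present tip of rank $\beta$; its data $s{\not\models}B{\rightarrow}C$ and the non-maximality of $s$ depend only on $s$ and are therefore unchanged; and, crucially, since $\ll$ is rank-preserving and every new outgoing $\ll$-edge issues from a rank-$\alpha$ tip, no rank-$\beta$ tip ever acquires a new $\ll$-successor. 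Hence the family of $\ll$-successors of the anchor, together with their satisfaction data, is exactly as before, so the $\alpha$-cleanness for maximality of the original clip propagates to every rank $\beta{<}\alpha$.

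For the three remaining properties I would invoke the uniform observation that a defect of accessibility, of downward confluence, or of forward confluence is a \emph{missing-witness} condition, so that adding tips and edges can only destroy such a defect (by supplying a witness) or create one whose anchor is a freshly added tip. As every new tip has rank $\alpha$, no such defect of rank $<\alpha$ can be manufactured. For accessibility this is immediate, since $\triangleright$ is unchanged and existing tips keep their satisfaction data, so the $\triangleright$-successors relevant to a $\square$- or $\lozenge$-formula at a rank-$\beta$ tip are precisely the old ones. For the two confluence properties I would check the only delicate case, a putative defect of rank $\alpha{-}1$, whose $\triangleright$-leg lands at rank $\alpha$: the target of that $\triangleright$-edge must be an old tip, because a freshly added tip never receives an incoming $\triangleright$-edge, and any candidate new witness would likewise require a $\triangleright$-edge into a fresh tip, which does not exist. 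Thus no new downward- or forward-confluence defect of rank $<\alpha$ appears, and the corresponding $\alpha$-cleanness is preserved.

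The main obstacle is precisely this last point: one must verify that the newly inserted $\ll$-edges cannot conspire with the unchanged $\triangleright$ to produce a lower-rank confluence defect. The decisive structural facts that make the argument go through are that $\ll$ preserves rank, that $\triangleright$ increments rank and is never modified during maximality repairs, and that freshly added tips never acquire incoming $\triangleright$-edges; assembling these three facts is the crux, while everything else reduces to bookkeeping over the coherence constraints already fixed by the definition of a coherent clip.
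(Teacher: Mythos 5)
Your proposal is correct and takes essentially the same approach as the paper: the paper's own proof consists exactly of invoking termination (Lemma~\ref{lemma:procedure:defects:maximality:terminates}) together with the observation that the maximality procedure only introduces tips of rank $\alpha$, from which all four cleanness claims follow. Your write-up simply makes explicit the bookkeeping the paper leaves to the reader (new $\ll$-edges issue only from rank-$\alpha$ tips, $\triangleright$ is untouched, existing witnesses persist, and the while-loop's exit condition eliminates all rank-$\alpha$ maximality defects).
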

\begin{proof}
Suppose $({\mathcal T},{\ll},{\triangleright})$ is $\alpha$-clean for maximality, $\alpha$-clean for accessibility, $\alpha$-clean for downward confluenceand $\alpha$-clean for forward confluence.
Notice that by Lemma~\ref{lemma:procedure:defects:maximality:terminates}, the maximality procedure terminates.
Since the execution of the maximality procedure only introduces tips of rank $\alpha$, then the regular clip obtained from $({\mathcal T},{\ll},{\triangleright})$ after the execution of the maximality procedure is $(\alpha{+}1)$-clean for maximality, $\alpha$-clean for accessibility, $\alpha$-clean for downward confluence and $\alpha$-clean for forward confluence.
\medskip
\end{proof}
\section{Repair of accessibility defects}\label{section:reparations:accessibility:defects}
In this section, we are showing how defects of $\square$-accessibility and defects of $\lozenge$-accessibility can be repaired.
\\
\\
The {\em repair of a defect $((i,s,\alpha,X),B)$ of $\square$-accessibility of a regular clip $({\mathcal T},{\ll},{\triangleright})$}\/ consists in sequentially executing the following actions:
\begin{itemize}
\item add a tip $(j,t,\beta,Y)$ to ${\mathcal T}$ such that $j$ is new, $t{\in}R_{c}(s)$, $t{\not\models}B$, $\beta{=}\alpha{+}1$ and $Y{=}X$,\footnote{The existence of $t{\in}W_{c}$ such that $t{\in}R_{c}(s)$ and $t{\not\models}B$ is an immediate consequence of the fact that $((i,s,\alpha,X),B)$ is a defect of $\square$-accessibility.}
\item add the couple $((i,s,\alpha,X),(j,t,\beta,Y))$ to $\triangleright$.
\end{itemize}
Obviously, the resulting clip is coherent.
\\
\\
Moreover, since the resulting clip is obtained by adding the tip $(j,t,\beta,Y)$ to ${\mathcal T}$ and the couple $((i,s,\alpha,X),(j,t,\beta,Y))$ to $\triangleright$, then the resulting clip is regular.
\\
\\
In other respect, notice that this repair is the repair of a defect of rank $\alpha$ and height $X$ that only introduces in ${\mathcal T}$ a tip of rank $\alpha{+}1$ and height $X$.
\\
\\
The {\em repair of a defect $((i,s,\alpha,X),B)$ of $\lozenge$-accessibility of a regular clip $({\mathcal T},{\ll},{\triangleright})$}\/ consists in sequentially executing the following actions:
\begin{itemize}
\item add a tip $(j,t,\beta,Y)$ to ${\mathcal T}$ such that $j$ is new, $t{\in}R_{c}(s)$, $t{\models}B$, $\beta{=}\alpha{+}1$ and $Y{=}X$,\footnote{The existence of $t{\in}W_{c}$ such that $t{\in}R_{c}(s)$ and $t{\models}B$ is an immediate consequence of the fact that $((i,s,\alpha,X),B)$ is a defect of $\lozenge$-accessibility.}
\item add the couple $((i,s,\alpha,X),(j,t,\beta,Y))$ to $\triangleright$.
\end{itemize}
Obviously, the resulting clip is coherent.
\\
\\
Moreover, since the resulting clip is obtained by adding the tip $(j,t,\beta,Y)$ to ${\mathcal T}$ and the couple $((i,s,\alpha,X),(j,t,\beta,Y))$ to $\triangleright$, then the resulting clip is regular.
\\
\\
In other respect, notice that this repair is the repair of a defect of rank $\alpha$ and height $X$ that only introduces in ${\mathcal T}$ a tip of rank $\alpha{+}1$ and height $X$.
\\
\\
Given $\alpha{\in}\N$ and a regular clip $({\mathcal T},{\ll},{\triangleright})$, the {\em accessibility procedure}\/ is defined as follows:
\begin{enumerate}
\item $x:=({\mathcal T},{\ll},{\triangleright})$,
\item $X:=0$,
\item while $x$ contains defects of accessibility of rank $\alpha$ do
\begin{enumerate}
\item repair in $x$ all defects of accessibility of rank $\alpha$ and height $X$,
\item $X:=X{+}1$.
\end{enumerate}
\end{enumerate}
Given $\alpha{\in}\N$ and a regular clip $({\mathcal T},{\ll},{\triangleright})$, the role of the accessibility procedure is to iteratively repair all defects of accessibility of $({\mathcal T},{\ll},{\triangleright})$ of rank $\alpha$.
\begin{lemma}\label{lemma:about:height:maximal:in:tips:repairing:accessibility:defects}
Given $\alpha{\in}\N$ and a regular clip $({\mathcal T},{\ll},{\triangleright})$, at any moment of the execution of the accessibility procedure, for all tips $(j,t,\beta,Y)$ occurring in $x$, $Y{\leq}\max\{Z{\in}\N$: $(k,u,\gamma,Z){\in}{\mathcal T}\}$.
\end{lemma}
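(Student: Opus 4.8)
The plan is to argue by induction on the number of repair steps executed so far, exploiting the single observation recorded just before the lemma: the repair of a defect of $\square$-accessibility, respectively $\lozenge$-accessibility, of rank $\alpha$ and height $X$ introduces into ${\mathcal T}$ exactly one new tip, and this tip has rank $\alpha{+}1$ and height $X$. In contrast with the maximality repairs of the previous section, which raise the height by one and therefore required Lemma~\ref{lemma:about:degree:and:Sigma:A} in order to be bounded, the accessibility repairs leave the height of the newly introduced tip equal to the height of the tip that has triggered them. This is the whole point of the lemma, and it is why the bound here carries no additive term.

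First I would fix $M{=}\max\{Z{\in}\N$: $(k,u,\gamma,Z){\in}{\mathcal T}\}$, where $({\mathcal T},{\ll},{\triangleright})$ denotes the input clip, and prove by induction on the number of repairs performed that every tip $(j,t,\beta,Y)$ occurring in $x$ satisfies $Y{\leq}M$. For the base case, at initialization we have $x{=}({\mathcal T},{\ll},{\triangleright})$, so every tip occurring in $x$ is a tip of ${\mathcal T}$ and hence has height at most $M$ by the very definition of $M$.

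For the inductive step, assume the bound $Y{\leq}M$ holds for all tips occurring in the current value of $x$. A single repair acts on a defect whose distinguished tip $(i,s,\alpha,X)$ is already present in $x$, so by the induction hypothesis $X{\leq}M$. The repair leaves all existing tips untouched and adds one new tip $(j,t,\beta,Y)$ with $Y{=}X$, whence $Y{=}X{\leq}M$. Therefore every tip occurring in $x$ after the repair still has height at most $M$, which closes the induction and yields the claim at any moment of the execution.

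The argument is essentially bookkeeping, and I expect no genuine obstacle. The only substantive point is the contrast with the maximality case: because both accessibility repairs preserve the height of the triggering tip, the bound is the clean $\max\{Z{\in}\N$: $(k,u,\gamma,Z){\in}{\mathcal T}\}$, with no additive $\card(\Sigma_{A})$ term and no appeal to degrees whatsoever.
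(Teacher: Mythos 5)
Your proof is correct and takes essentially the same approach as the paper: the paper's proof consists exactly of the observation that an accessibility repair of rank $\alpha$ and height $X$ only introduces a tip of rank $\alpha{+}1$ and height $X$, and your induction on the number of repairs is just the explicit bookkeeping behind that remark. The contrast you draw with the maximality case (no additive $\card(\Sigma_{A})$ term, no appeal to degrees) is also exactly the paper's point.
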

\begin{proof}
It suffices to notice that for all $X{\in}\N$, the repair of a defect of accessibility of rank $\alpha$ and height $X$ only introduces a tip of rank $\alpha{+}1$ and height $X$.
\medskip
\end{proof}
\begin{lemma}\label{lemma:procedure:defects:accessibility:terminates}
Given $\alpha{\in}\N$ and a regular clip $({\mathcal T},{\ll},{\triangleright})$, the accessibility procedure terminates.
\end{lemma}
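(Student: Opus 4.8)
The plan is to argue termination exactly as was done for the maximality procedure in Lemma~\ref{lemma:procedure:defects:maximality:terminates}, namely by combining a height bound with the finiteness of the set of tips at each stage. First I would invoke Lemma~\ref{lemma:about:height:maximal:in:tips:repairing:accessibility:defects}, which guarantees that at any moment of the execution, every tip $(j,t,\beta,Y)$ occurring in $x$ satisfies $Y{\leq}\max\{Z{\in}\N$: $(k,u,\gamma,Z){\in}{\mathcal T}\}$. The crucial point is that, in contrast with the maximality procedure, repairing a defect of accessibility of rank $\alpha$ and height $X$ introduces a new tip of rank $\alpha{+}1$ and height $X$: the rank strictly increases while the height is left unchanged. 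Consequently the heights of the tips produced during the execution never exceed the maximal height already present in the input clip $({\mathcal T},{\ll},{\triangleright})$.

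Next I would fix the outer loop variable: the procedure only ever repairs defects of the single rank $\alpha$, and each repair produces tips of rank $\alpha{+}1$ exclusively. Hence no repair can ever create a fresh defect of rank $\alpha$, since a defect of rank $\alpha$ is witnessed by a tip of rank $\alpha$ together with the absence of a suitable $\triangleright$-successor, and the newly added tips all have rank $\alpha{+}1$ and are added precisely as the missing successors. So the set of rank-$\alpha$ tips of $x$ is constant throughout the execution, and each such tip can trigger only finitely many repairs: for each tip $(i,s,\alpha,X)$ of rank $\alpha$, the number of formulas ${\square}B{\in}\Sigma_{A}$ with $s{\not\models}{\square}B$ and of formulas ${\lozenge}B{\in}\Sigma_{A}$ with $s{\models}{\lozenge}B$ is bounded by $\card(\Sigma_{A})$, which is finite by Lemma~\ref{lemma:cardinal:sigma:A}. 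Once a $\square$-defect or $\lozenge$-defect attached to $(i,s,\alpha,X)$ is repaired by adding the witnessing tip $(j,t,\beta,Y)$ together with the couple $((i,s,\alpha,X),(j,t,\beta,Y))$ in $\triangleright$, that same defect is no longer present, so each of these finitely many potential defects is repaired at most once.

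Finally I would combine these observations to bound the total number of iterations of the \textbf{while} loop. The set of rank-$\alpha$ tips is finite (the input clip is finite and no rank-$\alpha$ tips are ever added), and each contributes at most $\card(\Sigma_{A})$ repairs, so only finitely many repairs occur in total; since the loop variable $X$ is incremented at each pass and the height bound of Lemma~\ref{lemma:about:height:maximal:in:tips:repairing:accessibility:defects} caps the heights at which rank-$\alpha$ defects can occur, after finitely many passes no rank-$\alpha$ defect of accessibility remains and the loop exits. The step I expect to require the most care is justifying that repairs never regenerate rank-$\alpha$ defects, i.e. that pushing defects up one rank genuinely makes progress; but this is immediate from the bookkeeping that each repair increments the rank and simultaneously furnishes the very $\triangleright$-successor whose absence constituted the defect.
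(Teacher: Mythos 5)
Your proof is correct and follows essentially the same route as the paper: the paper's proof simply cites Lemma~\ref{lemma:about:height:maximal:in:tips:repairing:accessibility:defects} (whose own proof rests on the observation that repairing a rank-$\alpha$, height-$X$ accessibility defect only introduces a tip of rank $\alpha{+}1$ and height $X$), leaving the remaining bookkeeping implicit. Your write-up makes explicit exactly what that one-line proof suppresses---that the rank-$\alpha$ tips are fixed, that repairs never regenerate rank-$\alpha$ defects, and that the bounded heights together with the incrementing loop counter force the loop to exit---so it is a faithful, fleshed-out version of the paper's argument rather than a different one.
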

\begin{proof}
By Lemma~\ref{lemma:about:height:maximal:in:tips:repairing:accessibility:defects}.
\medskip
\end{proof}
\begin{lemma}\label{lemma:about:alpha:clean:coherent:clips:repairing:accessibility}
Let $\alpha{\in}\N$ and $({\mathcal T},{\ll},{\triangleright})$ be a regular clip.
If $({\mathcal T},{\ll},
$\linebreak$
{\triangleright})$ is $(\alpha{+}1)$-clean for maximality, $\alpha$-clean for accessibility, $\alpha$-clean for downward confluence and $\alpha$-clean for forward confluence then the regular clip obtained from $({\mathcal T},{\ll},{\triangleright})$ after the execution of the accessibility procedure is $(\alpha{+}1)$-clean for maximality, $(\alpha{+}1)$-clean for accessibility, $\alpha$-clean for downward confluence and $\alpha$-clean for forward confluence.
\end{lemma}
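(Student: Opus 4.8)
The plan is to mirror the structure of the proof of Lemma~\ref{lemma:about:alpha:clean:coherent:clips:repairing:maximality}, exploiting that the accessibility procedure terminates (Lemma~\ref{lemma:procedure:defects:accessibility:terminates}) and that its effect on the clip has a very restricted shape. First I would record the structural facts on which everything rests: each repair performed by the procedure adds a single tip of rank $\alpha{+}1$, leaves $\ll$ untouched, and adds to $\triangleright$ exactly one couple whose source is a tip of rank $\alpha$ and whose target is the freshly added tip of rank $\alpha{+}1$. Consequently $\ll$ never changes, no tip of rank ${\leq}\alpha$ is ever created, no edge is ever removed, and for every tip of rank strictly below $\alpha$ both its set of $\ll$-successors and its set of $\triangleright$-successors coincide before and after the procedure.

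I would then dispatch the preservation of $(\alpha{+}1)$-cleanness for maximality. A defect of maximality of rank $\beta{\leq}\alpha$ is determined entirely by a tip of rank $\beta$, by the canonical model (which is fixed), and by the $\ll$-successors of that tip. Since $\ll$ and the tips of rank ${\leq}\alpha$ are untouched, every such defect in the output clip would already have been present in the input; by the hypothesis of $(\alpha{+}1)$-cleanness for maximality there were none, so there are none afterwards.

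The heart of the argument is the passage to $(\alpha{+}1)$-cleanness for accessibility. For ranks $\beta{<}\alpha$ the $\triangleright$-successors of every rank-$\beta$ tip are unchanged, so the input's $\alpha$-cleanness for accessibility carries over verbatim. For rank $\alpha$ I would argue, on the one hand, that the while-loop exits only when no defect of accessibility of rank $\alpha$ remains, and, on the other hand, that a repair can never manufacture a new defect of accessibility of rank $\alpha$: the set of rank-$\alpha$ tips is frozen throughout the procedure, and adjoining a $\triangleright$-successor to a rank-$\alpha$ tip can only resolve or leave unchanged a defect of $\square$-accessibility (it supplies a candidate witness $t{\not\models}B$) and can only resolve or leave unchanged a defect of $\lozenge$-accessibility (it supplies a candidate witness $t{\models}B$); it can never create either kind, and any new tip it introduces has rank $\alpha{+}1$. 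Together with termination this yields the absence of rank-$\alpha$ accessibility defects in the output, hence $(\alpha{+}1)$-cleanness.

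Finally I would treat downward and forward confluence uniformly. Fix a rank $\beta{<}\alpha$ and suppose, for contradiction, that the output clip carries a defect of downward (resp.\ forward) confluence of rank $\beta$. Coherence pins down the ranks and heights of the three tips involved, and in both cases the two edges forming the configuration are an $\ll$-edge (never added) and a $\triangleright$-edge issued from a tip of rank $\beta{<}\alpha$ (hence never added), so the configuration already existed in the input. By the input's $\alpha$-cleanness for that confluence there was a witnessing tip, realised by two edges; since no edge is ever deleted, that witness survives into the output, contradicting the alleged defect. I expect this confluence step to demand the most care, chiefly in verifying through the coherence constraints that every edge relevant to a rank-$\beta$ defect, and to its witness, is issued from a tip of rank ${<}\alpha$ and is therefore genuinely untouched; it is the monotonicity of the construction, namely that edges are only added and never removed, that makes the witness-preservation argument go through.
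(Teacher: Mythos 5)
Your proof is correct and takes essentially the same route as the paper's: the paper's entire argument is that the procedure terminates (Lemma~\ref{lemma:procedure:defects:accessibility:terminates}) and that its execution only introduces tips of rank $\alpha{+}1$, from which all four cleanness claims are asserted to follow. Your write-up simply makes explicit the verifications the paper leaves to the reader --- that $\ll$ is frozen, that added $\triangleright$-edges issue only from rank-$\alpha$ tips, that confluence witnesses survive because nothing is ever deleted, and that the loop-exit condition disposes of rank-$\alpha$ accessibility defects.
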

\begin{proof}
Suppose $({\mathcal T},{\ll},{\triangleright})$ is $(\alpha{+}1)$-clean for maximality, $\alpha$-clean for accessibility, $\alpha$-clean for downward confluence and $\alpha$-clean for forward confluence.
Notice that by Lemma~\ref{lemma:procedure:defects:accessibility:terminates}, the accessibility procedure terminates.
Since the execution of the accessibility procedure only introduces tips of rank $\alpha{+}1$, then the regular clip obtained from $({\mathcal T},{\ll},{\triangleright})$ after the execution of the accessibility procedure is $(\alpha{+}1)$-clean for maximality, $(\alpha{+}1)$-clean for accessibility, $\alpha$-clean for downward confluence and $\alpha$-clean for forward confluence.
\medskip
\end{proof}
\section{Repair of confluence defects}\label{section:reparations:confluence:defects}
In this section, we are showing how defects of downward confluence and defects of forward confluence can be repaired.
\\
\\
The {\em repair of a defect $((i,s,\alpha,X),(j,t,\beta,Y),(k,u,\gamma,Z))$ of downward confluence of a regular clip $({\mathcal T},{\ll},{\triangleright})$}\/ consists in sequentially executing the following actions:
\begin{itemize}
\item add a tip $(l,v,\delta,T)$ to ${\mathcal T}$ such that $l$ is new, $v{\in}R_{c}(s)$, $v{\leq_{c}}u$, $\delta{=}\alpha{+}1$, $\delta{=}\gamma$, $T{=}X$ and $Z{=}T{+}1$,\footnote{The existence of $v{\in}W_{c}$ such that $v{\in}R_{c}(s)$ and $v{\leq_{c}}u$ is an immediate consequence of the fact that $((i,s,\alpha,X),(j,t,\beta,Y),(k,u,\gamma,Z))$ is a defect of downward confluence.}
\item add the couple $((i,s,\alpha,X),(l,v,\delta,T))$ to $\triangleright$,
\item add the couple $((l,v,\delta,T),(k,u,\gamma,Z))$ to $\ll$.
\end{itemize}
Obviously, the resulting clip is coherent.
\\
\\
Moreover, since the resulting clip is obtained by adding the tip $(l,v,\delta,T)$ to ${\mathcal T}$, the couple $((i,s,\alpha,X),(l,v,\delta,T))$ to $\triangleright$ and the couple $((l,v,\delta,T),(k,u,\gamma,Z))$ to $\ll$, then by Lemma~\ref{lemma:about:defects:of:downward:confluence}, the resulting clip is regular.
\\
\\
In other respect, notice that this repair is the repair of a defect of rank $\alpha$ and height $X$ that only introduces in ${\mathcal T}$ a tip of rank $\alpha{+}1$ and height $X$.
\\
\\
Given $\alpha{\in}\N$ and a regular clip $({\mathcal T},{\ll},{\triangleright})$, the {\em procedure of downward confluence}\/ is defined as follows:
\begin{enumerate}
\item $x:=({\mathcal T},{\ll},{\triangleright})$,
\item $X:=\hauteur_{\alpha}({\mathcal T})$,
\item while $x$ contains defects of downward confluence of rank $\alpha$ do
\begin{enumerate}
\item repair in $x$ all defects of downward confluence of rank $\alpha$ and height $X$,
\item $X:=X{-}1$.
\end{enumerate}
\end{enumerate}
Given $\alpha{\in}\N$ and a regular clip $({\mathcal T},{\ll},{\triangleright})$, the role of the procedure of downward confluence is to iteratively repair all defects of downward confluence of $({\mathcal T},{\ll},{\triangleright})$ of rank $\alpha$.
\begin{lemma}\label{lemma:about:height:maximal:in:tips:repairing:downward:confluence:defects}
Given $\alpha{\in}\N$ and a regular clip $({\mathcal T},{\ll},{\triangleright})$, at any moment of the execution of the procedure of downward confluence, for all tips $(j,t,\beta,Y)$ occurring in $x$, $Y{\leq}\max\{Z{\in}\N$: $(k,u,\gamma,Z){\in}{\mathcal T}\}$.
\end{lemma}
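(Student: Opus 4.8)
The plan is to prove Lemma~\ref{lemma:about:height:maximal:in:tips:repairing:downward:confluence:defects} by examining the effect of a single repair of a downward confluence defect on the heights of the tips occurring in $x$, and then arguing that the stated bound is preserved throughout the whole execution of the procedure. First I would recall from the description of the repair of a downward confluence defect that repairing a defect $((i,s,\alpha,X),(j,t,\beta,Y),(k,u,\gamma,Z))$ only adds a single new tip $(l,v,\delta,T)$ whose height $T$ equals $X$, the height of the defect being repaired. Moreover, the defect has rank $\alpha$ and height $X$, and a defect of rank $\alpha$ arises from a configuration $(i,s,\alpha,X){\ll}(j,t,\beta,Y)$ with $(j,t,\beta,Y){\triangleright}(k,u,\gamma,Z)$; by coherence, $(i,s,\alpha,X){\ll}(j,t,\beta,Y)$ forces $Y{=}X{+}1$, so the tip $(j,t,\beta,Y)$ already present in ${\mathcal T}$ has height $X{+}1$, whence $X{+}1{\leq}\max\{Z{\in}\N$: $(k,u,\gamma,Z){\in}{\mathcal T}\}$ and therefore $X{<}\max\{Z{\in}\N$: $(k,u,\gamma,Z){\in}{\mathcal T}\}$.

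The key observation is then that the height $T{=}X$ of the newly introduced tip is strictly smaller than, hence certainly bounded by, the current maximum height among the tips of ${\mathcal T}$. Consequently, a single repair introduces no tip of height exceeding $\max\{Z{\in}\N$: $(k,u,\gamma,Z){\in}{\mathcal T}\}$, and so the maximum height of the tips occurring in $x$ is not increased by any repair. Since the procedure of downward confluence begins with $x{:=}({\mathcal T},{\ll},{\triangleright})$ and proceeds only by repairing downward confluence defects, an easy induction on the number of repairs executed so far shows that at any moment of the execution, for all tips $(j,t,\beta,Y)$ occurring in $x$, $Y{\leq}\max\{Z{\in}\N$: $(k,u,\gamma,Z){\in}{\mathcal T}\}$, where the right-hand side refers to the initial clip $({\mathcal T},{\ll},{\triangleright})$ supplied to the procedure.

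The step I expect to be the main obstacle is pinning down precisely that the height $X$ of the defect being repaired is strictly below the maximum height of the tips already present, i.e. justifying $X{<}\max\{Z{\in}\N$: $(k,u,\gamma,Z){\in}{\mathcal T}\}$ rather than merely $X{\leq}$. This is what distinguishes this bound from the one for the maximality procedure (Lemma~\ref{lemma:about:height:maximal:in:tips:repairing:max:defects}), where the extra summand $\card(\Sigma_{A})$ was needed because maximality repairs increase height. Here no such summand appears, and the reason is exactly that a downward confluence repair keeps the new tip at height $X$ while the witnessing tip $(j,t,\beta,Y)$ of height $X{+}1$ is already in ${\mathcal T}$. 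Once this is in place, the bound propagates trivially, since all subsequent repairs reintroduce tips at heights already witnessed. This mirrors the short proof of Lemma~\ref{lemma:about:height:maximal:in:tips:repairing:accessibility:defects}, and I would expect the author's proof to reduce, as there, to the single remark that the repair of a downward confluence defect of rank $\alpha$ and height $X$ only introduces a tip of rank $\alpha{+}1$ and height $X$.
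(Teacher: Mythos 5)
Your proof is correct and is essentially the paper's: the paper's entire proof is the single remark you anticipate at the end, namely that the repair of a defect of downward confluence of rank $\alpha$ and height $X$ only introduces a tip of rank $\alpha{+}1$ and height $X$. Your extra detour through the strict inequality $X{<}\max\{Z{\in}\N$: $(k,u,\gamma,Z){\in}{\mathcal T}\}$ is sound but unnecessary, since the non-strict bound already follows from the fact that the defect's first component $(i,s,\alpha,X)$ is itself a tip occurring in $x$ and hence satisfies the inductive bound.
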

\begin{proof}
It suffices to notice that for all $X{\in}\N$, the repair of a defect of downward confluence of rank $\alpha$ and height $X$ only introduces a tip of rank $\alpha{+}1$ and height $X$.
\medskip
\end{proof}
\begin{lemma}\label{lemma:procedure:defects:downward:confluence:terminates}
Given $\alpha{\in}\N$ and a regular clip $({\mathcal T},{\ll},{\triangleright})$, the procedure of downward confluence terminates.
\end{lemma}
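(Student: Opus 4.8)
The plan is to argue exactly as for the maximality and accessibility procedures (Lemmas~\ref{lemma:procedure:defects:maximality:terminates} and~\ref{lemma:procedure:defects:accessibility:terminates}), deducing termination from the height control recorded in Lemma~\ref{lemma:about:height:maximal:in:tips:repairing:downward:confluence:defects}; the only genuine difference is that here the counter $X$ is \emph{decremented} from $\hauteur_{\alpha}({\mathcal T})$ rather than incremented from $0$, so the bound I would exploit is the trivial lower bound $0$ on heights instead of an upper bound. First I would record that a repair of a downward confluence defect introduces in $x$ only a tip of rank $\alpha{+}1$ (and height $X$); consequently the set of tips of rank $\alpha$ occurring in $x$ never changes during the procedure, and in particular $\hauteur_{\alpha}({\mathcal T})$ stays equal to its initial value. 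Next, by coherence, any defect of downward confluence $((i,s,\alpha,X),(j,t,\beta,Y),(k,u,\gamma,Z))$ of rank $\alpha$ has its first component $(i,s,\alpha,X)$ of rank $\alpha$, with height equal to the height $X$ of the defect; hence at the outset every such defect has height at most $\hauteur_{\alpha}({\mathcal T})$.

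The heart of the argument is the loop invariant: at the top of the iteration in which the counter equals $X$, every defect of downward confluence of rank $\alpha$ present in $x$ has height at most $X$. The base case is immediate from the initialisation $X{:=}\hauteur_{\alpha}({\mathcal T})$. For the inductive step I would inspect the three objects added by a repair of a height-$X$ defect: the fresh tip $(l,v,\alpha{+}1,X)$, the edge $(i,s,\alpha,X){\triangleright}(l,v,\alpha{+}1,X)$, and the edge $(l,v,\alpha{+}1,X){\ll}(k,u,\alpha{+}1,X{+}1)$. Reading off ranks and heights through coherence, the new $\ll$-edge can only head a defect of rank $\alpha{+}1$, since its source has rank $\alpha{+}1$, so it is irrelevant to rank $\alpha$; and the new $\triangleright$-edge can occur in a rank-$\alpha$ defect only as its middle-to-third arrow, its source $(i,s,\alpha,X)$ having rank $\alpha$ and its target the fresh tip having rank $\alpha{+}1$. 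In that case a $\ll$-arrow into $(i,s,\alpha,X)$ supplies a first component of height $X{-}1$. Since the fresh tip $(l,v,\alpha{+}1,X)$ can, for the same reason, figure in a rank-$\alpha$ defect only as a third component reached by that same new $\triangleright$-edge, every rank-$\alpha$ defect created while repairing height-$X$ defects has height exactly $X{-}1$.

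Thus repairing all height-$X$ defects both removes the pre-existing ones and spawns only defects of height $X{-}1$, so after step~(a) every remaining rank-$\alpha$ defect has height at most $X{-}1$; decrementing $X$ re-establishes the invariant. Since heights are non-negative and $X$ strictly decreases by $1$ at each pass, after at most $\hauteur_{\alpha}({\mathcal T}){+}1$ passes the counter reaches $-1$, at which point the invariant leaves no defect of non-negative height; the while-guard therefore fails and the procedure halts.

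The step I expect to be delicate is the inductive maintenance of the invariant, namely verifying that a repair at height $X$ cannot spawn a downward confluence defect of rank $\alpha$ at height $\geq X$. This is where the coherence constraints on $\ll$ and $\triangleright$ must be used with care to pin down the rank and height forced along each added arrow and to exclude every configuration other than the height-$(X{-}1)$ one; confirming in passing that the repaired clip is still regular (so that the notion of ``defect'' is applied to a legitimate object) is exactly the content of Lemma~\ref{lemma:about:defects:of:downward:confluence}.
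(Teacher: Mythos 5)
Your proposal is correct and takes essentially the same route as the paper: the paper's proof is just the citation of Lemma~\ref{lemma:about:height:maximal:in:tips:repairing:downward:confluence:defects}, whose proof records precisely the observation your argument is built on, namely that the repair of a rank-$\alpha$, height-$X$ defect of downward confluence only introduces a tip of rank $\alpha{+}1$ and height $X$. Your write-up simply makes explicit the downward-sweep invariant (newly spawned rank-$\alpha$ defects sit at height exactly $X{-}1$, and heights are bounded below by $0$) that the paper leaves implicit in that one-line citation.
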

\begin{proof}
By Lemma~\ref{lemma:about:height:maximal:in:tips:repairing:downward:confluence:defects}.
\medskip
\end{proof}
\begin{lemma}\label{lemma:about:alpha:clean:coherent:clips:repairing:downward:confluence}
Let $\alpha{\in}\N$ and $({\mathcal T},{\ll},{\triangleright})$ be a regular clip.
If $({\mathcal T},{\ll},{\triangleright})$ is $(\alpha{+}1)$-clean for maximality, $(\alpha{+}1)$-clean for accessibility, $\alpha$-clean for downward confluence and $\alpha$-clean for forward confluence then the regular clip obtained from $({\mathcal T},{\ll},{\triangleright})$ after the execution of the procedure of downward confluence is $(\alpha{+}1)$-clean for maximality, $(\alpha{+}1)$-clean for accessibility, $(\alpha{+}1)$-clean for downward confluence and $\alpha$-clean for forward confluence.
\end{lemma}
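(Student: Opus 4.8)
The plan is to follow exactly the template already used for Lemmas~\ref{lemma:about:alpha:clean:coherent:clips:repairing:maximality} and~\ref{lemma:about:alpha:clean:coherent:clips:repairing:accessibility}. First I would assume that $({\mathcal T},{\ll},{\triangleright})$ satisfies the four stated hypotheses, and invoke Lemma~\ref{lemma:procedure:defects:downward:confluence:terminates} so that ``the regular clip obtained after the execution of the procedure of downward confluence'' is well-defined. The whole argument then rests on the structural observation recorded just after the definition of the repair: each repair of a downward confluence defect of rank $\alpha$ and height $X$ adds one fresh tip of rank $\alpha{+}1$ and height $X$, one $\triangleright$-couple issuing from the rank-$\alpha$ tip $(i,s,\alpha,X)$, and one $\ll$-couple joining two rank-$(\alpha{+}1)$ tips. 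Hence throughout the procedure every newly created tip has rank $\alpha{+}1$, and this is the fact I would feed into each of the four conclusions.

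Next I would dispatch the four cleanness conclusions one at a time. For maximality: a defect of rank $\beta{\leq}\alpha$ concerns an $\ll$-edge leaving a rank-$\beta$ tip, but every $\ll$-couple added joins two rank-$(\alpha{+}1)$ tips, so the $\ll$-successor set of every rank-$({\leq}\alpha)$ tip is untouched and $(\alpha{+}1)$-cleanness for maximality survives. For accessibility: every $\triangleright$-couple added merely supplies a further $\triangleright$-successor, and since a defect of $\square$- or $\lozenge$-accessibility asserts the \emph{absence} of a suitable witness among the $\triangleright$-successors, adding successors can only repair such defects, never create them; as the only fresh tips have rank $\alpha{+}1$, no accessibility defect of rank ${\leq}\alpha$ can appear. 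For forward confluence, I need only $\alpha$-cleanness: a defect of rank $\beta{<}\alpha$ would require an $\ll$-edge or a $\triangleright$-edge issuing from a rank-$\beta$ tip, whereas the new $\ll$-edges issue from rank-$(\alpha{+}1)$ tips and the new $\triangleright$-edges from rank-$\alpha$ tips, and added tips and edges can only supply completions, not destroy them, so no forward confluence defect of rank $\beta{<}\alpha$ is introduced.

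For downward confluence itself I would check two things. First, no defect of rank $\beta{<}\alpha$ is created: the three components of such a defect sit at ranks $\beta,\beta,\beta{+}1$, all at most $\alpha$, whereas the fresh tip has rank $\alpha{+}1$ and the new edges are sourced at ranks $\alpha$ and $\alpha{+}1$, so neither the components nor a destroyed completion can come from the procedure; hence the inherited $\alpha$-cleanness for downward confluence persists. Second, \emph{all} defects of rank $\alpha$ are eliminated, which is precisely the exit condition of the while loop: upon termination $x$ contains no downward confluence defect of rank $\alpha$. Combining the two yields $(\alpha{+}1)$-cleanness for downward confluence.

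The delicate point, and the one I expect to demand the most care, is the claim that every rank-$\alpha$ downward confluence defect really is removed. The subtlety is that repairing a defect at height $X$ adds a $\triangleright$-edge out of $(i,s,\alpha,X)$, which together with an incoming $\ll$-edge from height $X{-}1$ can \emph{create} a new rank-$\alpha$ downward confluence defect at height $X{-}1$. This is exactly why the procedure sweeps heights in decreasing order from $\hauteur_{\alpha}({\mathcal T})$ downward: I would make explicit that a repair at height $X$ spawns new rank-$\alpha$ defects only at height $X{-}1$, so the cascade pushes defects strictly lower, bottoms out at height $0$ (where no incoming $\ll$-edge from height $-1$ exists), and is finite by Lemma~\ref{lemma:procedure:defects:downward:confluence:terminates}; consequently, once the sweep has cleared every height no rank-$\alpha$ defect can survive.
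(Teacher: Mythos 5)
Your proof is correct and follows essentially the same route as the paper's: invoke termination (Lemma~\ref{lemma:procedure:defects:downward:confluence:terminates}) and exploit the fact that every repair only introduces tips of rank $\alpha{+}1$, with new edges sourced at ranks $\alpha$ and $\alpha{+}1$, while never deleting tips, edges or witnesses. The paper's own proof is a terse two-sentence version of this argument; your explicit case analysis of the four cleanness properties, and in particular your treatment of the decreasing height sweep and the cascade of rank-$\alpha$ downward confluence defects to height $X{-}1$, spells out precisely what the paper leaves implicit.
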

\begin{proof}
Suppose $({\mathcal T},{\ll},{\triangleright})$ is $(\alpha{+}1)$-clean for maximality, $(\alpha{+}1)$-clean for accessibility, $\alpha$-clean for downward confluence and $\alpha$-clean for forward confluence.
Notice that by Lemma~\ref{lemma:procedure:defects:downward:confluence:terminates}, the procedure of downward confluence terminates.
Since the execution of the procedure of downward confluence only introduces tips of rank $\alpha{+}1$, then the regular clip obtained from $({\mathcal T},{\ll},{\triangleright})$ after the execution of the procedure of downward confluence is $(\alpha{+}1)$-clean for maximality, $(\alpha{+}1)$-clean for accessibility, $(\alpha{+}1)$-clean for downward confluence and $\alpha$-clean for forward confluence.
\medskip
\end{proof}
The {\em repair of a defect $((i,s,\alpha,X),(j,t,\beta,Y),(k,u,\gamma,Z))$ of forward confluence of a regular clip $({\mathcal T},{\ll},{\triangleright})$}\/ consists in sequentially executing the following actions:
\begin{itemize}
\item add a tip $(l,v,\delta,T)$ to ${\mathcal T}$ such that $l$ is new, $v{\in}R_{c}(s)$, $u{\leq_{c}}v$, $\delta{=}\alpha{+}1$, $\delta{=}\gamma$, $T{=}X$ and $T{=}Z{+}1$,\footnote{The existence of $v{\in}W_{c}$ such that $v{\in}R_{c}(s)$ and $u{\leq_{c}}v$ is an immediate consequence of the fact that $((i,s,\alpha,X),(j,t,\beta,Y),(k,u,\gamma,Z))$ is a defect of forward confluence.}
\item add the couple $((i,s,\alpha,X),(l,v,\delta,T))$ to $\triangleright$,
\item add the couple $((k,u,\gamma,Z),(l,v,\delta,T))$ to $\ll$.
\end{itemize}
Obviously, the resulting clip is coherent.
\\
\\
Moreover, since the resulting clip is obtained by adding the tip $(l,v,\delta,T)$ to ${\mathcal T}$, the couple $((i,s,\alpha,X),(l,v,\delta,T))$ to $\triangleright$ and the couple $((k,u,\gamma,Z),(l,v,\delta,T))$ to $\ll$, then the resulting clip is regular.
\\
\\
In other respect, notice that this repair is the repair of a defect of rank $\alpha$ and height $X$ that only introduces in ${\mathcal T}$ a tip of rank $\alpha{+}1$ and height $X$.
\\
\\
Given $\alpha{\in}\N$ and a regular clip $({\mathcal T},{\ll},{\triangleright})$, the {\em procedure of forward confluence}\/ is defined as follows:
\begin{enumerate}
\item $x:=({\mathcal T},{\ll},{\triangleright})$,
\item $X:=0$,
\item while $x$ contains defects of forward confluence of rank $\alpha$ do
\begin{enumerate}
\item repair in $x$ all defects of forward confluence of rank $\alpha$ and height $X$,
\item $X:=X{+}1$.
\end{enumerate}
\end{enumerate}
Given $\alpha{\in}\N$ and a regular clip $({\mathcal T},{\ll},{\triangleright})$, the role of the procedure of forward confluence is to iteratively repair all defects of forward confluence of $({\mathcal T},{\ll},{\triangleright})$ of rank $\alpha$.
\begin{lemma}\label{lemma:about:height:maximal:in:tips:repairing:forward:confluence:defects}
Given $\alpha{\in}\N$ and a regular clip $({\mathcal T},{\ll},{\triangleright})$, at any moment of the execution of the procedure of forward confluence, for all tips $(j,t,\beta,Y)$ occurring in $x$, $Y{\leq}\max\{Z{\in}\N$: $(k,u,\gamma,Z){\in}{\mathcal T}\}$.
\end{lemma}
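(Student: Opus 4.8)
The plan is to follow verbatim the template established for the analogous Lemmas~\ref{lemma:about:height:maximal:in:tips:repairing:accessibility:defects} and~\ref{lemma:about:height:maximal:in:tips:repairing:downward:confluence:defects}. The whole claim rests on a single inspection of the repair operation for forward confluence defects: reading off its definition, the repair of a defect $((i,s,\alpha,X),(j,t,\beta,Y),(k,u,\gamma,Z))$ adds exactly one new tip $(l,v,\delta,T)$, and that tip satisfies $\delta{=}\alpha{+}1$ together with $T{=}X$. Hence the repair of a forward confluence defect of rank $\alpha$ and height $X$ introduces into ${\mathcal T}$ only a tip of rank $\alpha{+}1$ and height $X$, exactly as recorded in the remark preceding the procedure.

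First I would dispatch the initialization: at the start of the procedure $x$ is the input clip $({\mathcal T},{\ll},{\triangleright})$, so every tip occurring in $x$ trivially has height at most $\max\{Z{\in}\N$: $(k,u,\gamma,Z){\in}{\mathcal T}\}$. Next I would show that this bound is preserved by every repair step, arguing by induction on the number of repairs already performed. The crucial observation is that a forward confluence defect of height $X$ can only be present when the source tip $(i,s,\alpha,X)$ already belongs to $x$, and this tip has height $X$; the single tip $(l,v,\delta,T)$ added by the repair has height $T{=}X$, hence no greater than that of a tip already present. Consequently the maximum height occurring in $x$ never increases, and the inequality $Y{\leq}\max\{Z{\in}\N$: $(k,u,\gamma,Z){\in}{\mathcal T}\}$ is maintained throughout.

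I expect no genuine obstacle here, since the argument is a faithful analog of the accessibility and downward confluence cases. The only point worth flagging --- and the reason no corrective term $\card(\Sigma_{A})$ appears, in contrast with Lemma~\ref{lemma:about:height:maximal:in:tips:repairing:max:defects} for the maximality procedure --- is that the forward confluence repair leaves the height unchanged ($T{=}X$) rather than incrementing it. It is precisely this feature that allows the bound to be stated with respect to the heights already occurring in ${\mathcal T}$, with no slack, so that the induction goes through immediately.
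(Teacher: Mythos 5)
Your proposal is correct and rests on exactly the same observation as the paper's one-line proof, namely that the repair of a forward confluence defect of rank $\alpha$ and height $X$ only introduces a tip of rank $\alpha{+}1$ and height $X$, so heights never exceed those already present. Your explicit induction on the number of repairs (and the contrast with the maximality case, where the $\card(\Sigma_{A})$ slack is needed) is just a faithful unpacking of what the paper leaves implicit.
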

\begin{proof}
It suffices to notice that for all $X{\in}\N$, the repair of a defect of forward confluence of rank $\alpha$ and height $X$ only introduces a tip of rank $\alpha{+}1$ and height $X$.
\medskip
\end{proof}
\begin{lemma}\label{lemma:procedure:defects:forward:confluence:terminates}
Given $\alpha{\in}\N$ and a regular clip $({\mathcal T},{\ll},{\triangleright})$, the procedure of forward confluence terminates.
\end{lemma}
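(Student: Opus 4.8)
The plan is to mirror the termination arguments already recorded for the maximality procedure (Lemma~\ref{lemma:procedure:defects:maximality:terminates}), the accessibility procedure (Lemma~\ref{lemma:procedure:defects:accessibility:terminates}) and the procedure of downward confluence (Lemma~\ref{lemma:procedure:defects:downward:confluence:terminates}), each of which reduces termination to the corresponding height-bounding lemma. Here the relevant bound is Lemma~\ref{lemma:about:height:maximal:in:tips:repairing:forward:confluence:defects}, which guarantees that throughout the execution every tip $(j,t,\beta,Y)$ occurring in $x$ satisfies $Y{\leq}\max\{Z{\in}\N$: $(k,u,\gamma,Z){\in}{\mathcal T}\}$. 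Writing $M$ for this maximum, no tip ever acquires a height exceeding $M$, and this bound is what I would exploit to cap the number of passes through the while-loop.

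First I would record that a defect of forward confluence of rank $\alpha$ has, as its first component, a tip $(i,s,\alpha,X)$ whose height $X$ is by definition the height of the defect. By Lemma~\ref{lemma:about:height:maximal:in:tips:repairing:forward:confluence:defects} that height is at most $M$, so every defect of forward confluence of rank $\alpha$ has height in $\{0,\ldots,M\}$. Since the loop variable $X$ starts at $0$ and is incremented by one at each pass, it only needs to range over these finitely many values: once $X{>}M$ there is no defect of rank $\alpha$ and height $X$ left to repair.

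Next I would check that each single pass does only finitely much work and does not resurrect an already-processed height. As $x$ is finite at every moment and a single repair adds exactly one tip, the set of defects of rank $\alpha$ and height $X$ present at the start of a pass is finite; repairing any one of them introduces a tip $(l,v,\alpha{+}1,X)$ of rank $\alpha{+}1$ and height $X$, as noted just after the definition of the repair and used in the proof of Lemma~\ref{lemma:about:height:maximal:in:tips:repairing:forward:confluence:defects}. The key point to verify is that this fresh tip cannot spawn a new defect of rank $\alpha$ at the current height $X$ or lower. Because the new tip has rank $\alpha{+}1$, the coherence clauses forbid it from being the first component (rank $\alpha$) or, via the $\ll$-edge to the first component, the second component (also forced to rank $\alpha$); hence it can enter a rank-$\alpha$ defect only as the third component. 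Tracing the height constraints of coherence along the $\triangleright$-edge ($Y{=}X$) and then the $\ll$-edge ($X'{=}Y{+}1$) forces the first component of any such defect to have height $X{+}1$. Thus every defect created by the repair has strictly larger height than the one just processed, so the height-$X$ defects form a fixed finite set cleared in a single pass.

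I expect the genuinely delicate step to be exactly this monotonicity bookkeeping: confirming through the coherence clauses that a newly introduced rank-$(\alpha{+}1)$ tip of height $X$ can only feed into defects of height $X{+}1$, so that the strictly increasing loop variable never leaves an unrepaired rank-$\alpha$ defect behind. Once this is in hand, termination is immediate, since $X$ is bounded above by $M$ and each of the at most $M{+}1$ passes terminates after finitely many repairs.
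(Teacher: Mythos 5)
Your proposal is correct and follows essentially the same route as the paper: the paper's proof consists precisely of an appeal to Lemma~\ref{lemma:about:height:maximal:in:tips:repairing:forward:confluence:defects}, the height bound you invoke. Your additional bookkeeping (that a repair of a rank-$\alpha$, height-$X$ defect can only create rank-$\alpha$ defects of height $X{+}1$, so the increasing loop variable clears everything below the bound) is exactly the argument the paper leaves implicit behind that citation.
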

\begin{proof}
By Lemma~\ref{lemma:about:height:maximal:in:tips:repairing:forward:confluence:defects}.
\medskip
\end{proof}
\begin{lemma}\label{lemma:about:alpha:clean:coherent:clips:repairing:forward:confluence}
Let $\alpha{\in}\N$ and $({\mathcal T},{\ll},{\triangleright})$ be a regular clip.
If $({\mathcal T},{\ll},{\triangleright})$ is $(\alpha{+}1)$-clean for maximality, $(\alpha{+}1)$-clean for accessibility, $(\alpha{+}1)$-clean for downward confluence and $\alpha$-clean for forward confluence then the regular clip obtained from $({\mathcal T},{\ll},{\triangleright})$ after the execution of the procedure of forward confluence is $(\alpha{+}1)$-clean for maximality, $(\alpha{+}1)$-clean for accessibility, $(\alpha{+}1)$-clean for downward confluence and $(\alpha{+}1)$-clean for forward confluence.
\end{lemma}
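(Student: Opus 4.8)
The plan is to follow exactly the template already used in the proofs of Lemmas~\ref{lemma:about:alpha:clean:coherent:clips:repairing:maximality}, \ref{lemma:about:alpha:clean:coherent:clips:repairing:accessibility} and~\ref{lemma:about:alpha:clean:coherent:clips:repairing:downward:confluence}. First I would assume that $({\mathcal T},{\ll},{\triangleright})$ is $(\alpha{+}1)$-clean for maximality, $(\alpha{+}1)$-clean for accessibility, $(\alpha{+}1)$-clean for downward confluence and $\alpha$-clean for forward confluence. Then, so that the ``regular clip obtained after the execution of the procedure of forward confluence'' is well-defined, I would invoke Lemma~\ref{lemma:procedure:defects:forward:confluence:terminates} to guarantee that this procedure terminates.

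The heart of the argument is the observation, already recorded in the description of the repair of a defect of forward confluence, that each such repair introduces into ${\mathcal T}$ a single tip of rank $\alpha{+}1$ (together with one $\triangleright$-edge and one $\ll$-edge), all the while preserving coherence and regularity. Consequently, no tip of rank at most $\alpha$ is ever added during the procedure. Since the rank of any defect coincides with the rank of its principal tip, and since $(\alpha{+}1)$-cleanness for maximality, accessibility and downward confluence concerns only defects of rank at most $\alpha$, the freshly added rank-$(\alpha{+}1)$ tips cannot create such defects; hence these three cleanness properties are inherited by the output clip. Finally, the \texttt{while}-loop of the procedure exits only once no defect of forward confluence of rank $\alpha$ survives, and since no defect of forward confluence of rank strictly below $\alpha$ can be created (again because only rank-$(\alpha{+}1)$ tips are added), the output clip contains no defect of forward confluence of rank at most $\alpha$, i.e. it is $(\alpha{+}1)$-clean for forward confluence.

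The step deserving the most care---and the reason the procedure iterates its height variable $X$ \emph{upward} from $0$, rather than downward as in the downward-confluence case---is verifying that repairing a forward-confluence defect does not silently leave an unrepaired forward-confluence defect of rank $\alpha$ behind. Repairing a defect of height $X$ adds a tip of height $X$ via a fresh $\triangleright$-edge issuing from the defect's first component, which, being a $\ll$-successor, may itself possess $\ll$-successors at height $X{+}1$; this configuration can in turn spawn a new defect of forward confluence of rank $\alpha$ at height $X{+}1$. Such newly created defects sit at strictly greater height, and by Lemma~\ref{lemma:about:height:maximal:in:tips:repairing:forward:confluence:defects} every height occurring during the run is bounded by the maximal height present in the initial clip; thus the upward iteration eventually reaches and repairs them, and termination---hence exhaustion of all rank-$\alpha$ forward-confluence defects---is secured. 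I expect that making this terse ``only introduces tips of rank $\alpha{+}1$'' slogan fully rigorous, rather than the routine bookkeeping about ranks, will be the main obstacle.
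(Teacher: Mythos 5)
Your proposal is correct and takes essentially the same route as the paper's proof, which consists solely of assuming the hypothesis, invoking Lemma~\ref{lemma:procedure:defects:forward:confluence:terminates} for termination, and observing that the procedure only introduces tips of rank $\alpha{+}1$, whence all four cleanness properties follow. Your extra analysis of why the height variable iterates upward (repairs at height $X$ can spawn rank-$\alpha$ forward-confluence defects only at height $X{+}1$, and heights stay bounded by Lemma~\ref{lemma:about:height:maximal:in:tips:repairing:forward:confluence:defects}) is sound, but it is material the paper folds into the termination lemma rather than spelling out in this proof.
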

\begin{proof}
Suppose $({\mathcal T},{\ll},{\triangleright})$ is $(\alpha{+}1)$-clean for maximality, $(\alpha{+}1)$-clean for accessibility, $(\alpha{+}1)$-clean for downward confluence and $\alpha$-clean for forward confluence.
Notice that by Lemma~\ref{lemma:procedure:defects:forward:confluence:terminates}, the procedure of forward confluence terminates.
Since the execution of the procedure of forward confluence only introduces tips of rank $\alpha{+}1$, then the regular clip obtained from $({\mathcal T},{\ll},{\triangleright})$ after the execution of the procedure of forward confluence is $(\alpha{+}1)$-clean for maximality, $(\alpha{+}1)$-clean for accessibility, $(\alpha{+}1)$-clean for downward confluence and $(\alpha{+}1)$-clean for forward confluence.
\medskip
\end{proof}
\section{A decision procedure}\label{section:a:decision:procedure}
The {\em saturation procedure}\/ is the following sequence of steps:
\begin{description}
\item[$(1)$] $x:=(\{(0,s_{0},0,0)\},\emptyset,\emptyset)$,
\item[$(2)$] $\alpha:=0$,
\item[$(3)$] repair in $x$ all defects of maximality of rank $\alpha$ by applying the maximality procedure,
\item[$(4)$] if the finitely indexed family of $P_{A}$-labelled trees associated to $x$ and $\alpha$ is dreary then goto $(10)$ else goto $(5)$,
\item[$(5)$] repair in $x$ all defects of accessibility of rank $\alpha$ by applying the accessibility procedure,
\item[$(6)$] repair in $x$ all defects of downward confluence of rank $\alpha$ by applying the procedure of downward confluence,
\item[$(7)$] repair in $x$ all defects of forward confluence of rank $\alpha$ by applying the procedure of forward confluence,
\item[$(8)$] $\alpha:=\alpha{+}1$,
\item[$(9)$] goto $(3)$,
\item[$(10)$] halt.
\end{description}
The role of the saturation procedure is to iteratively repair all defects of $(\{(0,s_{0},
$\linebreak$
0,0)\},\emptyset,\emptyset)$.
A few remarks must be made about its steps.
\\
\\
About step~$(1)$, by Lemma~\ref{lemma:about:the:cleanness:of:initial:clips}, the regular clip $x$ is $0$-clean for maximality, $0$-clean for accessibility, $0$-clean for downward confluence and $0$-clean for forward confluence.
\\
\\
Therefore, about step~$(2)$, the regular clip $x$ is $\alpha$-clean for maximality, $\alpha$-clean for accessibility, $\alpha$-clean for downward confluence and $\alpha$-clean for forward confluence.
\\
\\
As a result, by Lemma~\ref{lemma:procedure:defects:maximality:terminates}, the procedure used in step~$(3)$ terminates.
Moreover, since after step~$(2)$, the regular clip $x$ given as an input to the maximality procedure is $\alpha$-clean for maximality, $\alpha$-clean for accessibility, $\alpha$-clean for downward confluence and $\alpha$-clean for forward confluence, then by Lemma~\ref{lemma:about:alpha:clean:coherent:clips:repairing:maximality}, the regular clip $x$ given as an output to the procedure used in step~$(3)$ is $(\alpha{+}1)$-clean for maximality, $\alpha$-clean for accessibility, $\alpha$-clean for downward confluence and $\alpha$-clean for forward confluence.
\\
\\
Concerning step~$(4)$, obviously, for all $\beta{\in}\N$, if $\alpha{<}\beta$ then the $\beta$-slice of $x$ is empty.
\\
\\
By Lemma~\ref{lemma:procedure:defects:accessibility:terminates}, the procedure used in step~$(5)$ terminates.
Moreover, since after step~$(3)$, the regular clip $x$ given as an input to the procedure used in step~$(5)$ is $(\alpha{+}1)$-clean for maximality, $\alpha$-clean for accessibility, $\alpha$-clean for downward confluence and $\alpha$-clean for forward confluence, then by Lemma~\ref{lemma:about:alpha:clean:coherent:clips:repairing:accessibility}, the regular clip $x$ given as an output to the procedure used in step~$(5)$ is $(\alpha{+}1)$-clean for maximality, $(\alpha{+}1)$-clean for accessibility, $\alpha$-clean for downward confluence and $\alpha$-clean for forward confluence.
\\
\\
By Lemma~\ref{lemma:procedure:defects:downward:confluence:terminates}, the procedure used in step~$(6)$ terminates.
Moreover, since after step~$(5)$, the regular clip $x$ given as an input to the procedure used in step~$(6)$ is $(\alpha{+}1)$-clean for maximality, $(\alpha{+}1)$-clean for accessibility, $\alpha$-clean for downward confluence and $\alpha$-clean for forward confluence, then by Lemma~\ref{lemma:about:alpha:clean:coherent:clips:repairing:downward:confluence}, the regular clip $x$ given as an output to the procedure used in step~$(6)$ is $(\alpha{+}1)$-clean for maximality, $(\alpha{+}1)$-clean for accessibility, $(\alpha{+}1)$-clean for downward confluence and $\alpha$-clean for forward confluence.
\\
\\
By Lemma~\ref{lemma:procedure:defects:forward:confluence:terminates}, the procedure used in step~$(7)$ terminates.
Moreover, since after step~$(6)$, the regular clip $x$ given as an input to the procedure used in step~$(7)$ is $(\alpha{+}1)$-clean for maximality, $(\alpha{+}1)$-clean for accessibility, $(\alpha{+}1)$-clean for downward confluence and $\alpha$-clean for forward confluence, then by Lemma~\ref{lemma:about:alpha:clean:coherent:clips:repairing:forward:confluence}, the regular clip $x$ given as an output to the procedure used in step~$(7)$ is $(\alpha{+}1)$-clean for maximality, $(\alpha{+}1)$-clean for accessibility, $(\alpha{+}1)$-clean for downward confluence and $(\alpha{+}1)$-clean for forward confluence.
\\
\\
With respect to steps~$(8)$ and~$(9)$, there is nothing to say.
\\
\\
And in the end, when the saturation procedure arrives at step~$(10)$, the finitely indexed family of $P_{A}$-labelled trees associated to $x$ and $\alpha$ is dreary.
\begin{lemma}\label{proposition:principal}
The saturation procedure terminates.
\end{lemma}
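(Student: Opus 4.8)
The plan is to argue by contradiction, exploiting the fact that every inner repair procedure already terminates, so that the only way the outer loop could run forever is for the test at step~$(4)$ never to succeed. First I would observe that a single pass through the loop terminates: steps~$(3)$, $(5)$, $(6)$ and~$(7)$ terminate by Lemmas~\ref{lemma:procedure:defects:maximality:terminates}, \ref{lemma:procedure:defects:accessibility:terminates}, \ref{lemma:procedure:defects:downward:confluence:terminates} and~\ref{lemma:procedure:defects:forward:confluence:terminates} respectively, while steps~$(4)$, $(8)$, $(9)$ and~$(10)$ use only finitely many operations. Hence, if the saturation procedure failed to terminate, the loop variable $\alpha$ would take every value in $\N$, and at step~$(4)$ of every round the associated family $({\mathcal T}_{\beta},{\ll_{\beta}},\lambda_{\beta})_{\beta{\in}(\alpha)}$ would fail to be dreary, so that the procedure would always branch to step~$(5)$.

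The key structural fact I would then isolate is that slices become \emph{frozen}. When the loop variable equals $\alpha$, step~$(3)$ repairs all defects of maximality of rank $\alpha$, each such repair introducing only a fresh tip of rank $\alpha$; once step~$(3)$ is finished, the $\alpha$-slice $({\mathcal T}_{\alpha},{\ll_{\alpha}},\lambda_{\alpha})$ is complete. Every repair executed afterwards---the accessibility, downward-confluence and forward-confluence repairs of rank $\alpha$ performed in steps~$(5)$--$(7)$, together with every repair performed in all later rounds---introduces only tips of rank strictly greater than $\alpha$, and, for the confluence repairs, the new $\ll$-edge it creates joins two tips of the same higher rank. Consequently the $\alpha$-slice is never touched again, and for each $\beta{\geq}1$ the slice finalized at step~$(3)$ of round $\beta$ coincides with the $\beta$-slice present at step~$(4)$ of every later round. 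This yields a well-defined countably indexed family $({\mathcal T}_{\beta},{\ll_{\beta}},\lambda_{\beta})_{\beta{\in}\N}$ of $P_{A}$-labelled trees.

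With this family in hand, the contradiction is immediate. By Lemma~\ref{proposition:about:labelled:trees:to:be:dreary} applied to it, there exists $n{\in}\N$ such that $({\mathcal T}_{\beta},{\ll_{\beta}},\lambda_{\beta})_{\beta{\in}(n)}$ is dreary, i.e. there is $m{<}n$ with $({\mathcal T}_{m},{\ll_{m}},\lambda_{m}){\sim}({\mathcal T}_{n},{\ll_{n}},\lambda_{n})$. By the freezing observation, this is exactly the family examined at step~$(4)$ of round $n$, which we assumed not to be dreary---a contradiction. Hence the saturation procedure terminates.

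The main obstacle I anticipate is the freezing argument of the second paragraph: one must check, repair by repair, that after the $\alpha$-slice has been completed in step~$(3)$, no subsequent action of rank $\alpha$ or of any higher rank can add a tip of rank $\alpha$ or an $\ll$-edge lying inside the $\alpha$-slice, so that the sequence of slices inspected at successive executions of step~$(4)$ genuinely forms a single infinite sequence to which Lemma~\ref{proposition:about:labelled:trees:to:be:dreary} applies. The bookkeeping on ranks recorded in the definition of coherence and in the side-remarks accompanying each repair (each rank-$\alpha$ confluence or accessibility repair producing a tip of rank $\alpha{+}1$, and each rank-$\alpha$ maximality repair producing a tip of rank $\alpha$) is precisely what makes this go through.
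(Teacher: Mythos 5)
Your proof is correct and is essentially the paper's own argument: the paper proves this lemma by simply citing Lemma~\ref{proposition:about:labelled:trees:to:be:dreary}, and your proposal spells out exactly the implicit reasoning behind that citation (inner procedures terminate, slices of rank $\alpha$ are frozen after step~$(3)$ of round $\alpha$ since all later repairs add only higher-rank tips and higher-rank $\ll$-edges, so non-termination would yield an infinite family of $P_{A}$-labelled trees whose dreary initial segment contradicts the failure of the test at step~$(4)$). The freezing bookkeeping you flag as the main obstacle is indeed the point the paper leaves tacit, and your treatment of it is accurate.
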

\begin{proof}
By Lemma~\ref{proposition:about:labelled:trees:to:be:dreary}.
\medskip
\end{proof}
Let $({\mathcal T},{\ll},{\triangleright})$ and $\alpha_{f}$ be the values of $x$ and $\alpha$ when the saturation procedure terminates.
\\
\\
Therefore, for all $\beta{\in}\N$, if $\alpha_{f}{<}\beta$ then the $\beta$-slice of $({\mathcal T},{\ll},{\triangleright})$ is empty.\footnote{See the above remark about step~$(4)$.}
\\
\\
Moreover, $({\mathcal T},{\ll},{\triangleright})$ is $(\alpha_{f}{+}1)$-clean for maximality, $\alpha_{f}$-clean for accessibility, $\alpha_{f}$-clean for downward confluence and $\alpha_{f}$-clean for forward confluence.\footnote{See the above remarks about steps~$(3)$, $(5)$, $(6)$ and~$(7)$.}
\\
\\
In other respect, the finitely indexed family $({\mathcal T}_{\beta},{\ll_{\beta}},\lambda_{\beta})_{\beta{\in}(\alpha_{f})}$ of $P_{A}$-labelled trees associated to $({\mathcal T},{\ll},{\triangleright})$ and $\alpha_{f}$ is dreary.\footnote{See the above remark about step~$(10)$.}
As a result, there exists $\beta_{f}{\in}(\alpha_{f})$ such that $\beta_{f}{<}\alpha_{f}$ and $({\mathcal T}_{\beta_{f}},{\ll_{\beta_{f}}},\lambda_{\beta_{f}}){\sim}({\mathcal T}_{\alpha_{f}},{\ll_{\alpha_{f}}},\lambda_{\alpha_{f}})$.
Therefore, there exists a function $f$: ${\mathcal T}_{\alpha_{f}}\longrightarrow{\mathcal T}_{\beta_{f}}$ such that
\begin{itemize}
\item for all $(i,s,\alpha_{f},X),(j,t,\alpha_{f},Y){\in}{\mathcal T}_{\alpha_{f}}$, if $(i,s,\alpha_{f},X){\ll_{\alpha_{f}}}(j,t,\alpha_{f},Y)$ then $f(i,s,\alpha_{f},X){\ll_{\beta_{f}}^{\star}}f(j,t,\alpha_{f},Y)$,
\item for all $(k,u,\alpha_{f},Z){\in}{\mathcal T}_{\alpha_{f}}$, $\lambda_{\alpha_{f}}(k,u,\alpha_{f},Z){=}\lambda_{\beta_{f}}(f(k,u,\alpha_{f},Z))$.
\end{itemize}
\section{Finite frame property}\label{section:finite:model:property}
Let ${\triangleright^{\prime}}{=}\{((i,s,\alpha_{f},X),(l,v,\beta_{f}{+}1,T))$: $(i,s,\alpha_{f},X){\in}{\mathcal T}_{\alpha_{f}}$, $(l,v,\beta_{f}{+}1,T){\in}{\mathcal T}_{\beta_{f}{+}1}$ and $f(i,s,\alpha_{f},X){\triangleright}(l,v,\beta_{f}{+}1,T)\}$.
Let $(W^{\prime},{\leq^{\prime}},{R^{\prime}})$ be the frame such that
\begin{itemize}
\item $W^{\prime}{=}{\mathcal T}$,\footnote{Obviously, $W^{\prime}$ is nonempty.}
\item ${\leq^{\prime}}{=}{\ll^{\star}}$,\footnote{Obviously, $\leq^{\prime}$ is a preorder on $W^{\prime}$.}
\item ${R^{\prime}}{=}({\triangleright}{\cup}{\triangleright^{\prime}})^{+}$.
\end{itemize}
\begin{lemma}\label{lemma:frame:prime:is:appropriate}
The frame $(W^{\prime},{\leq^{\prime}},{R^{\prime}})$ is finite, transitive, downward confluent and forward confluent.
\end{lemma}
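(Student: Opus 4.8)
The plan is to dispatch finiteness and transitivity at once and then concentrate on the two confluence conditions, which carry all the weight. Finiteness holds because $W^{\prime}{=}{\mathcal T}$ and, by construction, every clip has a finite set of tips; transitivity holds because $R^{\prime}{=}(\triangleright{\cup}\triangleright^{\prime})^{+}$ is a transitive closure. Writing $S{=}\triangleright{\cup}\triangleright^{\prime}$, downward confluence amounts to $\ll^{\star}{\circ}S^{+}{\subseteq}S^{+}{\circ}\ll^{\star}$ and forward confluence to $(\ll^{\star})^{{-}1}{\circ}S^{+}{\subseteq}S^{+}{\circ}(\ll^{\star})^{{-}1}$.

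First I would reduce each confluence to a \emph{one-step} statement in which the right-hand $S$ is a single edge. For downward confluence the target local statement is: for all $x,y,z{\in}{\mathcal T}$, if $x{\ll^{\star}}y$ and $y{S}z$ then there is $w$ with $x{S}w$ and $w{\ll^{\star}}z$. Granting this, an induction on the length $m$ of an $S$-path lifts it to $S^{+}$: peeling the first edge of $y{S}u{S^{m}}z$, the local statement turns $x{\ll^{\star}}y{S}u$ into $x{S}v{\ll^{\star}}u$, and the induction hypothesis applied to $v{\ll^{\star}}u{S^{m}}z$ yields $w$ with $v{S^{+}}w$ and $w{\ll^{\star}}z$, whence $x{S^{+}}w$. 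A symmetric induction lifts the analogous forward one-step statement. It is essential here that the right-hand side of the local statement be a \emph{single} $S$-edge: this is exactly what makes the lifting go through without circularity.

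Next I would prove the downward one-step statement by cases on the edge $y{S}z$, using that $\ll^{\star}$ preserves the rank of a tip and that, since the $(\alpha_{f}{+}1)$-slice is empty, no tip of rank $\alpha_{f}$ has a genuine $\triangleright$-successor (so the two cases below are exhaustive and disjoint). If $y{\triangleright}z$ is a genuine edge, then $y$, and hence $x$, has some rank $\beta$ with $\beta{<}\alpha_{f}$. Here $\alpha_{f}$-cleanness for downward confluence says precisely that every configuration $a{\ll}b{\triangleright}c$ at rank $\beta$ admits a genuine single-step witness $a{\triangleright}c^{\prime}$, $c^{\prime}{\ll}c$; iterating this strong commutation along the chain $x{\ll^{\star}}y$ pushes the $\triangleright$-edge down to $x$ and produces $w$ with $x{\triangleright}w$ and $w{\ll^{\star}}z$. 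If instead $y{\triangleright^{\prime}}z$, then $y$ has rank $\alpha_{f}$, $z$ has rank $\beta_{f}{+}1$ and $f(y){\triangleright}z$. Applying the embedding $f$, which sends $\ll_{\alpha_{f}}$-edges to $\ll_{\beta_{f}}^{\star}$-paths and preserves labels, the chain $x{\ll^{\star}}y$ becomes $f(x){\ll^{\star}}f(y)$ at rank $\beta_{f}{<}\alpha_{f}$; the genuine-$\triangleright$ case at rank $\beta_{f}$ then yields $w$ with $f(x){\triangleright}w$ and $w{\ll^{\star}}z$. Since $w$ lies in the $(\beta_{f}{+}1)$-slice, the pair $(x,w)$ meets the defining clause of $\triangleright^{\prime}$, so $x{\triangleright^{\prime}}w$ and $w{\ll^{\star}}z$, as wanted. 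The forward one-step statement is proved in the same way, iterating the strong single-step witness guaranteed by $\alpha_{f}$-cleanness for forward confluence and then transferring through $f$.

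The main obstacle is the $\triangleright^{\prime}$ case: it is here that the fold identifying the $\alpha_{f}$-slice with the $\beta_{f}$-slice must be exploited, and one must check both that $f$ transports the confluence configuration faithfully, in $\ll$-structure and in labels over $\Sigma_{A}$, and that the witness produced downstairs is a \emph{bona fide} $\triangleright^{\prime}$-edge upstairs. A secondary but crucial point is the bookkeeping that keeps the one-step statements framed with a single $S$-edge on the right, without which the passage to $S^{+}$ would not close.
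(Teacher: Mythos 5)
Your proof is correct and follows essentially the same route as the paper's: both reduce the confluence of $(W^{\prime},{\leq^{\prime}},{R^{\prime}})$ to one-step commutation statements, discharge the genuine ${\triangleright}$-edges via $\alpha_{f}$-cleanness for downward and forward confluence (the effect of steps~$(6)$ and~$(7)$ of the saturation procedure), and handle the ${\triangleright^{\prime}}$-edges by transferring the configuration through the embedding $f$ down to the rank-$\beta_{f}$ slice and reading the resulting witness back as a ${\triangleright^{\prime}}$-edge. The only difference is one of presentation: you spell out the lifting induction from the one-step statements to the transitive closures ${\ll^{\star}}$ and $({\triangleright}{\cup}{\triangleright^{\prime}})^{+}$, which the paper compresses into ``it suffices to demonstrate''.
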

\begin{proof}
About downward confluence, it suffices to demonstrate that for all $\gamma{\in}\N$, if $\gamma{<}\alpha_{f}$ then ${\ll_{\gamma}}{\circ}{\triangleright}{\subseteq}{\triangleright}{\circ}{\ll_{\gamma{+}1}}$ and ${\ll_{\alpha_{f}}}{\circ}{\triangleright^{\prime}}{\subseteq}{\triangleright^{\prime}}{\circ}{\ll_{\beta_{f}{+}1}^{\star}}$.
The former inclusions are consequences of the effect of step~$(6)$ in the saturation procedure.
As for the latter inclusion, suppose $(i,s,\alpha_{f},X),(j,t,\alpha_{f},Y){\in}{\mathcal T}_{\alpha_{f}}$ and $(l,v,\beta_{f}{+}1,T){\in}{\mathcal T}_{\beta_{f}{+}1}$ are such that $(i,s,\alpha_{f},X){\ll_{\alpha_{f}}}(j,t,\alpha_{f},Y)$ and $(j,t,\alpha_{f},Y){\triangleright^{\prime}}(l,v,\beta_{f}{+}1,T)$.
\linebreak
Hence, $f(i,s,\alpha_{f},X){\ll_{\beta_{f}}^{\star}}f(j,t,\alpha_{f},Y)$ and $f(j,t,\alpha_{f},Y){\triangleright}(l,v,\beta_{f}{+}1,T)$.
Thus, as a consequence of the effect of step~$(6)$ in the saturation procedure, there exists $(m,w,\beta_{f}{+}1,U){\in}{\mathcal T}_{\beta_{f}{+}1}$ such that $f(i,s,\alpha_{f},X){\triangleright}(m,w,\beta_{f}{+}1,U)$ and $(m,w,\beta_{f}{+}
$\linebreak$
1,U){\ll_{\beta_{f}{+}1}^{\star}}(l,v,\beta_{f}{+}1,T)$.
Consequently, $(i,s,\alpha_{f},X){\triangleright^{\prime}}(m,w,\beta_{f}{+}1,U)$ and $(m,
$\linebreak$
w,\beta_{f}{+}1,U){\ll_{\beta_{f}{+}1}^{\star}}(l,v,\beta_{f}{+}1,T)$.
\\
\\
As for forward confluence, it suffices to demonstrate that for all $\gamma{\in}\N$, if $\gamma{<}\alpha_{f}$ then ${\gg_{\gamma}}{\circ}{\triangleright}{\subseteq}{\triangleright}{\circ}{\gg_{\gamma{+}1}}$ and ${\gg_{\alpha_{f}}}{\circ}{\triangleright^{\prime}}{\subseteq}{\triangleright^{\prime}}{\circ}{\gg_{\beta_{f}{+}1}^{\star}}$.
The former inclusions are consequences of the effect of step~$(7)$ in the saturation procedure.
As for the latter inclusion, suppose $(i,s,\alpha_{f},X),(j,t,\alpha_{f},Y){\in}{\mathcal T}_{\alpha_{f}}$ and $(l,v,\beta_{f}{+}1,T){\in}{\mathcal T}_{\beta_{f}{+}1}$ are such that $(i,s,\alpha_{f},X){\gg_{\alpha_{f}}}(j,t,\alpha_{f},Y)$ and $(j,t,\alpha_{f},Y){\triangleright^{\prime}}(l,v,\beta_{f}{+}1,T)$.
Hence, $f(i,s,\alpha_{f},X){\gg_{\beta_{f}}^{\star}}f(j,t,\alpha_{f},Y)$ and $f(j,t,\alpha_{f},Y){\triangleright}(l,v,\beta_{f}{+}1,T)$.
Thus, as a consequence of the effect of step~$(7)$ in the saturation procedure, there exists $(m,w,\beta_{f}{+}1,U){\in}{\mathcal T}_{\beta_{f}{+}1}$ such that $f(i,s,\alpha_{f},X){\triangleright}(m,w,\beta_{f}{+}1,U)$ and $(m,w,\beta_{f}{+}
$\linebreak$
1,U){\gg_{\beta_{f}{+}1}^{\star}}(l,v,\beta_{f}{+}1,T)$.
Consequently, $(i,s,\alpha_{f},X){\triangleright^{\prime}}(m,w,\beta_{f}{+}1,U)$ and $(m,
$\linebreak$
w,\beta_{f}{+}1,U){\gg_{\beta_{f}{+}1}^{\star}}(l,v,\beta_{f}{+}1,T)$.
\medskip
\end{proof}
The frame $(W^{\prime},{\leq^{\prime}},{R^{\prime}})$ is called {\em saturated frame of $s_{0}$.}
\\
\\
The {\em saturated valuation of $s_{0}$}\/ is the valuation $V^{\prime}$: $\At\longrightarrow\wp(W^{\prime})$ on $(W^{\prime},{\leq^{\prime}},{R^{\prime}})$ such that for all $p{\in}\At$, $V^{\prime}(p){=}\{(l,v,\delta,T){\in}{\mathcal T}$: $v{\in}V_{c}(p)\}$.\footnote{Obviously, for all $p{\in}\At$, $V^{\prime}(p)$ is $\leq^{\prime}$-closed.}
\\
\\
The {\em saturated model of $s_{0}$}\/ is the model $(W^{\prime},{\leq^{\prime}},{R^{\prime}},V^{\prime})$.
\\
\\
\begin{lemma}[Saturated Truth Lemma]\label{lemma:truth:lemma:at:the:end:of:the:procedure}
Let $B{\in}\Fo$.
For all $(l,v,\delta,T){\in}W^{\prime}$, if $B{\in}\Sigma_{A}$ then $(l,v,\delta,T){\models}B$ if and only if $v{\models}B$.\footnote{Here, when we write ``$(l,v,\delta,T){\models}B$'' and ``$v{\models}B$'', we mean ``$(W^{\prime},{\leq^{\prime}},{R^{\prime}},V^{\prime}),(l,v,\delta,T){\models}
$\linebreak$
B$'' and ``$(W_{c},{\leq_{c}},{R_{c}},V_{c}),v{\models}B$''.}
\end{lemma}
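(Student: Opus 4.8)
The plan is to argue by induction on the structure of $B$, using the fact that $\Sigma_A$ is closed so that whenever $B{\in}\Sigma_A$ all subformulas of $B$ also lie in $\Sigma_A$ and the induction hypothesis applies to them. For an atom, for $\top$ and for $\bot$, the equivalence is immediate from the definition of $V^{\prime}$ and from the satisfaction clauses; for $B{=}C{\wedge}D$ and $B{=}C{\vee}D$, satisfaction is local at the point, so the equivalence follows at once from the induction hypothesis on $C$ and $D$. The substance therefore lies in the implication case and in the two modal cases.

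For $B{=}C{\rightarrow}D$, the direction ``$v{\models}C{\rightarrow}D$ implies $(l,v,\delta,T){\models}C{\rightarrow}D$'' would be handled by taking an arbitrary $\leq^{\prime}$-successor $(l^{\prime},v^{\prime},\delta^{\prime},T^{\prime})$, noting via Lemma~\ref{lemma:about:coherence:and:morphisms} that $\ll^{\star}$ projects to $\leq_c$, so that $v{\leq_c}v^{\prime}$, and then closing the square with the induction hypothesis on $C$ and $D$. The converse direction is where cleanness for maximality enters. Assuming $v{\not\models}C{\rightarrow}D$, I would build a $\ll$-chain from $(l,v,\delta,T)$ in which every tip has canonical component failing $C{\rightarrow}D$: whenever the current component is not maximal with respect to $C{\rightarrow}D$, the absence of a defect of maximality (the clip is $(\alpha_f{+}1)$-clean for maximality and every rank is at most $\alpha_f$) supplies a $\ll$-successor whose component still fails $C{\rightarrow}D$. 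Since $\ll$ strictly increases height and ${\mathcal T}$ is finite, the chain terminates at a tip whose component $w$ is maximal with respect to $C{\rightarrow}D$ and fails it; Lemma~\ref{lemma:maximal:rightarrow} then gives $w{\models}C$ and $w{\not\models}D$, and the induction hypothesis turns this tip into a $\leq^{\prime}$-successor witnessing $(l,v,\delta,T){\not\models}C{\rightarrow}D$.

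For the modal cases I expect the real difficulty, and I would isolate it in an auxiliary claim describing how $R^{\prime}{=}({\triangleright}{\cup}{\triangleright^{\prime}})^{+}$ transports $\Sigma_A$-modalities between canonical components: if $(l,v,\delta,T){R^{\prime}}(l^{\prime},v^{\prime},\delta^{\prime},T^{\prime})$, then for every ${\square}C{\in}\Sigma_A$, $v{\models}{\square}C$ implies $v^{\prime}{\models}C$, and for every ${\lozenge}C{\in}\Sigma_A$, $v^{\prime}{\models}C$ implies $v{\models}{\lozenge}C$. This claim I would prove by induction on the length of the $({\triangleright}{\cup}{\triangleright^{\prime}})$-path. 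A single ${\triangleright}$-step is a canonical $R_c$-step by Lemma~\ref{lemma:about:coherence:and:morphisms}, and axioms $(4\square)$ and $(4\lozenge)$ are exactly what makes ``${\square}C$ persists forward'' and ``${\lozenge}C$ propagates backward'' survive along the path. A single ${\triangleright^{\prime}}$-step from $(l,v,\alpha_f,T)$ to $(l^{\prime},v^{\prime},\beta_f{+}1,T^{\prime})$ factors through $f$: its image $f(l,v,\alpha_f,T)$ carries the same $\Sigma_A$-label as $v$, so by Lemma~\ref{lemma:truth:lemma} the two components agree on all $\Sigma_A$-modalities, and $f(l,v,\alpha_f,T)$ genuinely ${\triangleright}$-accesses $(l^{\prime},v^{\prime},\beta_f{+}1,T^{\prime})$, so the same axiom-$4$ bookkeeping applies.

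Granting the claim, the ``easy'' halves of the modal cases are immediate: $v{\models}{\square}C$ forces every $R^{\prime}$-successor's component to satisfy $C$, hence by the induction hypothesis $(l,v,\delta,T){\models}{\square}C$, and dually for ${\lozenge}C$ through the backward half of the claim. The witness halves split on the rank of the tip. When $\delta{<}\alpha_f$, the clip is clean for accessibility at rank $\delta$, so the relevant ${\square}$- or ${\lozenge}$-defect is excluded and a genuine ${\triangleright}$-successor of the required kind exists. When $\delta{=}\alpha_f$, where cleanness for accessibility may fail, I would instead pass to $f(l,v,\alpha_f,T)$, whose rank $\beta_f$ is strictly below $\alpha_f$ and therefore clean; the witnessing ${\triangleright}$-successor of $f(l,v,\alpha_f,T)$ is, by the very definition of ${\triangleright^{\prime}}$, a ${\triangleright^{\prime}}$-successor of $(l,v,\alpha_f,T)$ and hence an $R^{\prime}$-successor, and the induction hypothesis closes the case. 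The hard part is thus the interaction, at the top rank $\alpha_f$, between the loop-back edges ${\triangleright^{\prime}}$ furnished by the dreary embedding $f$ and the transitive closure defining $R^{\prime}$; axioms $(4\square)$ and $(4\lozenge)$ together with the $\Sigma_A$-faithfulness of $f$ are precisely what keep that interaction under control.
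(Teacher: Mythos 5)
Your proposal is correct and follows essentially the same route as the paper's proof: induction on $B$, with Lemma~\ref{lemma:maximal:rightarrow} and cleanness for maximality handling the implication case, and the rank split $\delta{<}\alpha_{f}$ versus $\delta{=}\alpha_{f}$ together with the embedding $f$ and the loop-back edges $\triangleright^{\prime}$ handling the modal cases. The only differences are presentational: you reach the maximal failing witness by iterating the no-defect property along a height-increasing $\ll$-chain where the paper invokes cleanness in a single step, and you package as an explicit auxiliary claim the induction along $({\triangleright}{\cup}{\triangleright^{\prime}})$-paths (using axioms $(4\square)$, $(4\lozenge)$ and the $\Sigma_{A}$-label preservation of $f$) that the paper leaves to the reader.
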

\begin{proof}
By induction on $B{\in}\Fo$.
Let $(l,v,\delta,T){\in}W^{\prime}$.
Suppose $B{\in}\Sigma_{A}$.
\\
\\
{\bf Case $B{=}p$:}
\\
\\
From left to right, suppose $(l,v,\delta,T){\models}p$.
Hence, $(l,v,\delta,T){\in}V^{\prime}(p)$.
Thus, $v{\in}
$\linebreak$
V_{c}(p)$.
Consequently, $v{\models}p$.
\\
\\
From right to left, suppose $v{\models}p$.
Hence, $v{\in}V_{c}(p)$.
Thus, $(l,v,\delta,T){\in}V^{\prime}(p)$.
Consequently, $(l,v,\delta,T){\models}p$.
\\
\\
{\bf Case $B{=}C{\rightarrow}D$:}
\\
\\
From left to right, suppose $(l,v,\delta,T){\models}C{\rightarrow}D$.
For the sake of the contradiction, suppose $v{\not\models}C{\rightarrow}D$.
For a while, suppose $v$ is maximal with respect to $C{\rightarrow}D$.
Since $v{\not\models}C{\rightarrow}D$, then by Lemma~\ref{lemma:maximal:rightarrow}, $v{\models}C$ and $v{\not\models}D$.
Hence, by induction hypothesis, $(l,v,\delta,T){\models}C$ and $(l,v,\delta,T){\not\models}D$.
Since $(l,v,\delta,T){\models}C{\rightarrow}D$, then $(l,v,\delta,T){\models}D$: a contradiction.
Thus, $v$ is not maximal with respect to $C{\rightarrow}D$.
Since $({\mathcal T},{\ll},{\triangleright})$ is $(\alpha_{f}{+}1)$-clean for maximality, there exists $(m,w,\epsilon,U){\in}{\mathcal T}$ such that $w{\not\models}C{\rightarrow}D$, $w$ is maximal with respect to $C{\rightarrow}D$ and $(l,v,\delta,T){\ll}(m,w,\epsilon,U)$.
Consequently, by Lemma~\ref{lemma:maximal:rightarrow}, $w{\models}C$ and $w{\not\models}D$.
Hence, by induction hypothesis, $(m,w,\epsilon,U){\models}C$ and $(m,w,\epsilon,U){\not\models}D$.
Since $(l,v,\delta,T){\models}C{\rightarrow}D$ and $(l,v,\delta,T){\ll}
$\linebreak$
(m,w,\epsilon,U)$, then $(m,w,\epsilon,U){\models}D$: a contradiction.
\\
\\
From right to left, suppose $v{\models}C{\rightarrow}D$.
For the sake of the contradiction, suppose $(l,v,\delta,T){\not\models}C{\rightarrow}D$.
Thus, there exists $(m,w,\epsilon,U){\in}{\mathcal T}$ such that $(l,v,\delta,T){\ll^{\star}}(m,
$\linebreak$
w,\epsilon,U)$, $(m,w,\epsilon,U){\models}C$ and $(m,w,\epsilon,U){\not\models}D$.
Consequently, by Lem\-ma~\ref{lemma:about:coherence:and:morphisms}, $v{\leq_{c}}w$.
Moreover, by induction hypothesis, $w{\models}C$ and $w{\not\models}D$.
Since $v{\models}C{\rightarrow}D$, then $w{\models}D$: a contradiction.
\\
\\
{\bf Case $B{=}{\square}C$:}
\\
\\
From left to right, suppose $(l,v,\delta,T){\models}{\square}C$.
For the sake of the contradiction, suppose $v{\not\models}{\square}C$.
We consider the following cases: $\delta{<}\alpha_{f}$; $\delta{=}\alpha_{f}$.
When $\delta{<}\alpha_{f}$, since $v{\not\models}{\square}C$ and $({\mathcal T},{\ll},{\triangleright})$ is $\alpha_{f}$-clean for accessibility, then there exists $(m,w,\epsilon,U){\in}{\mathcal T}$ such that $(l,v,\delta,T){\triangleright}(m,w,\epsilon,U)$ and $w{\not\models}C$.
Hence, $(l,v,\delta,T){R^{\prime}}
$\linebreak$
(m,w,\epsilon,U)$.
Since $(l,v,\delta,T){\models}{\square}C$, then $(m,w,\epsilon,U){\models}C$.
Thus, by induction hypothesis, $w{\models}C$: a contradiction.
When $\delta{=}\alpha_{f}$, let $(m,w,\beta_{f},U){\in}{\mathcal T}_{\beta_{f}}$ be such that $f(l,v,\delta,T){=}(m,w,\beta_{f},U)$.
Consequently, $v{\cap}\Sigma_{A}{=}w{\cap}\Sigma_{A}$.
Since $v{\not\models}{\square}C$, then $w{\not\models}{\square}C$.
Since $({\mathcal T},{\ll},{\triangleright})$ is $\alpha_{f}$-clean for accessibility, then there exists $(n,x,\beta_{f}{+}1,V){\in}{\mathcal T}$ such that $(m,w,\beta_{f},U){\triangleright}(n,x,\beta_{f}{+}1,V)$ and $x{\not\models}C$.
Since $f(l,
$\linebreak$
v,\delta,T){=}(m,w,\beta_{f},U)$, then $(l,v,\delta,T){R^{\prime}}(n,x,\beta_{f}{+}1,V)$.
Since $(l,v,\delta,T){\models}{\square}C$, then $(n,x,\beta_{f}{+}1,V){\models}C$.
Hence, by induction hypothesis, $x{\models}C$: a contradiction.
\\
\\
From right to left, suppose $v{\models}{\square}C$.
For the sake of the contradiction, suppose $(l,v,\delta,T){\not\models}{\square}C$.
Thus, there exists $(m,w,\epsilon,U){\in}{\mathcal T}$ such that $(l,v,\delta,T){R^{\prime}}(m,w,
$\linebreak$
\epsilon,U)$ and $(m,w,\epsilon,U){\not\models}C$.
Consequently, there exists $n{\geq}1$ and there exists $(k_{0},u_{0},\gamma_{0},Z_{0}),\ldots,(k_{n},u_{n},\gamma_{n},Z_{n}){\in}{\mathcal T}$ such that $(k_{0},u_{0},\gamma_{0},Z_{0}){=}(l,v,\delta,T)$, $(k_{n},
$\linebreak$
u_{n},\gamma_{n},Z_{n}){=}(m,w,\epsilon,U)$ and for all $a{\in}(n)$, either $(k_{a{-}1},u_{a{-}1},\gamma_{a{-}1},Z_{a{-}1}){\triangleright}(k_{a},
$\linebreak$
u_{a},\gamma_{a},Z_{a})$, or $(k_{a{-}1},u_{a{-}1},\gamma_{a{-}1},Z_{a{-}1}){\triangleright^{\prime}}(k_{a},u_{a},\gamma_{a},Z_{a})$.
Moreover, by induction hypothesis, $w{\not\models}C$.
By induction on $n{\geq}1$, the reader may easily verify that $w{\models}C$: a contradiction.
\\
\\
{\bf Case $B{=}{\lozenge}C$:}
\\
\\
From left to right, suppose $(l,v,\delta,T){\models}{\lozenge}C$.
For the sake of the contradiction, suppose $v{\not\models}{\lozenge}C$.
Since $(l,v,\delta,T){\models}{\lozenge}C$, then there exists $(m,w,\epsilon,U){\in}{\mathcal T}$ such that $(l,v,\delta,T){R^{\prime}}(m,w,\epsilon,U)$ and $(m,w,\epsilon,U){\models}C$.
Hence, there exists $n{\geq}1$ and there exists $(k_{0},u_{0},\gamma_{0},Z_{0}),\ldots,(k_{n},u_{n},\gamma_{n},Z_{n}){\in}{\mathcal T}$ such that $(k_{0},u_{0},\gamma_{0},Z_{0}){=}(l,
$\linebreak$
v,\delta,T)$, $(k_{n},u_{n},\gamma_{n},Z_{n}){=}(m,w,\epsilon,U)$ and for all $a{\in}(n)$, either $(k_{a{-}1},u_{a{-}1},\gamma_{a{-}1},
$\linebreak$
Z_{a{-}1}){\triangleright}(k_{a},u_{a},\gamma_{a},Z_{a})$, or $(k_{a{-}1},u_{a{-}1},\gamma_{a{-}1},Z_{a{-}1}){\triangleright^{\prime}}(k_{a},u_{a},\gamma_{a},Z_{a})$.
Moreover, by induction hypothesis, $w{\models}C$.
By induction on $n{\geq}1$, the reader may easily verify that $w{\not\models}C$: a contradiction.
\\
\\
From right to left, suppose $v{\models}{\lozenge}C$.
For the sake of the contradiction, suppose $(l,v,\delta,T){\not\models}{\lozenge}C$.
We consider the following cases: $\delta{<}\alpha_{f}$; $\delta{=}\alpha_{f}$.
When $\delta{<}\alpha_{f}$, since $v{\models}{\lozenge}C$ and $({\mathcal T},{\ll},{\triangleright})$ is $\alpha_{f}$-clean for accessibility, then there exists $(m,w,\epsilon,U){\in}{\mathcal T}$ such that $(l,v,\delta,T){\triangleright}(m,w,\epsilon,U)$ and $w{\models}C$.
Thus, $(l,v,\delta,T){R^{\prime}}
$\linebreak$
(m,w,\epsilon,U)$.
Since $(l,v,\delta,T){\not\models}{\lozenge}C$, then $(m,w,\epsilon,U){\not\models}C$.
Consequently, by induction hypothesis, $w{\not\models}C$: a contradiction.
When $\delta{=}\alpha_{f}$, let $(m,w,\beta_{f},U){\in}{\mathcal T}_{\beta_{f}}$ be such that $f(l,v,\delta,T){=}(m,w,\beta_{f},U)$.
Hence, $v{\cap}\Sigma_{A}{=}w{\cap}\Sigma_{A}$.
Since $v{\models}{\lozenge}C$, then $w{\models}{\lozenge}C$.
Since $({\mathcal T},{\ll},{\triangleright})$ is $\alpha_{f}$-clean for accessibility, then there exists $(n,x,\beta_{f}{+}1,V){\in}{\mathcal T}$ such that $(m,w,\beta_{f},U){\triangleright}(n,x,\beta_{f}{+}1,V)$ and $x{\models}C$.
Since $f(l,
$\linebreak$
v,\delta,T){=}(m,w,\beta_{f},U)$, then $(l,v,\delta,T){R^{\prime}}(n,x,\beta_{f}{+}1,V)$.
Since $(l,v,\delta,T){\not\models}{\lozenge}C$, then $(n,x,\beta_{f}{+}1,V){\not\models}C$.
Thus, by induction hypothesis, $x{\not\models}C$: a contradiction.
\medskip
\end{proof}
All in all, the desired result is within reach.
\begin{proposition}[Completeness and Finite Frame Property]\label{proposition:fmp:fik}
For all $A{\in}
$\linebreak$
\Fo$, the following conditions are equivalent:
\begin{enumerate}
\item $A{\in}\LIK4$,
\item for all transitive, downward confluent and forward confluent frames $(W,{\leq},
$\linebreak$
{R})$, $(W,{\leq},{R}){\models}A$,
\item for all finite, transitive, downward confluent and forward confluent frames $(W,{\leq},{R})$, $(W,{\leq},{R}){\models}A$.
\end{enumerate}
\end{proposition}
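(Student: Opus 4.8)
The plan is to prove the cycle of implications $(1){\Rightarrow}(2){\Rightarrow}(3){\Rightarrow}(1)$. The first implication is precisely the Soundness Lemma~\ref{proposition:soundness:of:the:logics}, so nothing further is required there. The second is immediate, since every finite, transitive, downward confluent and forward confluent frame belongs to the class quantified over in $(2)$; hence validity across that larger class forces validity across the smaller class of finite frames. All the substance therefore lies in the third implication, $(3){\Rightarrow}(1)$, which I would prove by contraposition and which amounts to completeness with respect to finite appropriate frames, i.e. the finite frame property.

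For the contrapositive, assume $A{\not\in}\LIK4$. By the Lindenbaum Lemma~\ref{lemma:almost:completeness} there is a prime theory $s_{0}$ with $A{\not\in}s_{0}$, and the Canonical Truth Lemma~\ref{lemma:truth:lemma} then gives $(W_{c},{\leq_{c}},{R_{c}},V_{c}),s_{0}{\not\models}A$. Fixing this $s_{0}$ as the prime theory distinguished in Section~\ref{section:tips}, I would launch the saturation procedure from the initial clip $(\{(0,s_{0},0,0)\},\emptyset,\emptyset)$ of $s_{0}$. By Lemma~\ref{proposition:principal} it terminates, and from its final regular clip $({\mathcal T},{\ll},{\triangleright})$ one builds the saturated frame $(W^{\prime},{\leq^{\prime}},{R^{\prime}})$ and the saturated model $(W^{\prime},{\leq^{\prime}},{R^{\prime}},V^{\prime})$. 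By Lemma~\ref{lemma:frame:prime:is:appropriate}, this frame is finite, transitive, downward confluent and forward confluent, hence an admissible witness for condition $(3)$.

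To falsify $A$ in this finite model, note that the initial tip $(0,s_{0},0,0)$ is never deleted during the procedure and so belongs to ${\mathcal T}{=}W^{\prime}$. Because $\Sigma_{A}$ is the least closed set of formulas containing $A$, we have $A{\in}\Sigma_{A}$, so the Saturated Truth Lemma~\ref{lemma:truth:lemma:at:the:end:of:the:procedure}, applied to the tip $(0,s_{0},0,0)$ and the formula $A$, yields $(0,s_{0},0,0){\models}A$ if and only if $s_{0}{\models}A$. Since $s_{0}{\not\models}A$, we obtain $(0,s_{0},0,0){\not\models}A$; thus $A$ is neither true in the saturated model nor valid in the saturated frame. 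This exhibits a finite, transitive, downward confluent and forward confluent frame on which $A$ fails, contradicting $(3)$ and closing the cycle.

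The main obstacle is the $(3){\Rightarrow}(1)$ direction, although within this proposition its difficulty is delegated to earlier results: the termination of the saturation procedure rests on the dreariness of labelled trees (Lemma~\ref{proposition:about:labelled:trees:to:be:dreary}), while the passage from the infinite canonical model to the finite saturated one rests on the Saturated Truth Lemma. The only genuinely load-bearing checks internal to this proof are that the surviving initial tip still carries $s_{0}$ and that $A{\in}\Sigma_{A}$, so that the truth lemma is applied to exactly the formula whose failure must be preserved.
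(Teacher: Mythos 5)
Your proposal is correct and follows essentially the same route as the paper: soundness for $(1){\Rightarrow}(2)$, the trivial inclusion of frame classes for $(2){\Rightarrow}(3)$, and for $(3){\Rightarrow}(1)$ the chain Lindenbaum Lemma, Canonical Truth Lemma, saturation procedure, Lemma~\ref{lemma:frame:prime:is:appropriate} and the Saturated Truth Lemma applied to the initial tip $(0,s_{0},0,0)$ and the formula $A$. The only differences are cosmetic: you phrase the last implication as a contraposition where the paper argues by contradiction, and you make explicit two checks the paper leaves tacit (that the initial tip survives in ${\mathcal T}$ and that $A{\in}\Sigma_{A}$).
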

\begin{proof}
Let $A{\in}\Fo$.
\\
\\
$\bf{(1){\Rightarrow}(2)}$~By Lemma~\ref{proposition:soundness:of:the:logics}.
\\
\\
$\bf{(3){\Rightarrow}(1)}$ Suppose for all finite, transitive, downward confluent and forward confluent frames $(W,{\leq},{R})$, $(W,{\leq},{R}){\models}A$.
For the sake of the contradiction, suppose $A{\not\in}\LIK4$.
Hence, by Lemma~\ref{lemma:almost:completeness}, there exists a prime theory $s_{0}$ such that $A{\not\in}s_{0}$.
Thus, by Lemma~\ref{lemma:truth:lemma}, $s_{0}{\not\models}A$.
Let $(W^{\prime},{\leq^{\prime}},{R^{\prime}})$ be the saturated frame of $s_{0}$.
Since for all finite, transitive, downward confluent and forward confluent frames $(W,{\leq},{R})$, $(W,{\leq},{R}){\models}A$, then by Lemma~\ref{lemma:frame:prime:is:appropriate}, $(W^{\prime},{\leq^{\prime}},{R^{\prime}}){\models}A$.
Let $V^{\prime}$: $\At\longrightarrow\wp(W^{\prime})$ be the saturated valuation of $s_{0}$.
Since $(W^{\prime},{\leq^{\prime}},{R^{\prime}}){\models}A$, then $(W^{\prime},{\leq^{\prime}},{R^{\prime}},V^{\prime}){\models}A$.
Consequently, $(0,s_{0},0,0){\models}A$.
Hence, by Lemma~\ref{lemma:truth:lemma:at:the:end:of:the:procedure}, $s_{0}{\models}A$: a contradiction.
\medskip
\end{proof}
\begin{proposition}
The membership problem in $\LIK4$ is decidable.
\end{proposition}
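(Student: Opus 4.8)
The plan is to derive decidability from the finite frame property established in Proposition~\ref{proposition:fmp:fik}, by showing that both $\LIK4$ and its complement are recursively enumerable and then invoking the standard observation that a set which is both recursively enumerable and co-recursively enumerable is recursive.

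First I would check that $\LIK4$ is recursively enumerable. The Hilbert-style presentation of Section~\ref{section:axiomatization} is effective: it uses the standard axioms and inference rules of $\IPL$ together with the finitely many extra axioms $(\mathbf{C}\square)$, $(\mathbf{C}\lozenge)$, $(\axiomN{\square})$, $(\axiomN{\lozenge})$, $(4\square)$, $(4\lozenge)$, $(\Axiom\mathbf{d})$, $(\Axiom\mathbf{f})$, the rules $(\Rule{\square})$, $(\Rule{\lozenge})$, and closure under uniform substitution. Since one can mechanically enumerate all finite derivations built from these schemata and rules, the set of their conclusions, namely $\LIK4$, is recursively enumerable.

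Next I would show that the complement of $\LIK4$ is recursively enumerable, and this is where the finite frame property does the work. By the equivalence of conditions~$(1)$ and~$(3)$ in Proposition~\ref{proposition:fmp:fik}, for every $A{\in}\Fo$ we have $A{\not\in}\LIK4$ if and only if there is a finite, transitive, downward confluent and forward confluent frame $(W,{\leq},{R})$ carrying a valuation $V$ with $(W,{\leq},{R},V){\not\models}A$. The point is that this witness can be searched for effectively: one enumerates all finite frames up to isomorphism; for each, transitivity and downward and forward confluence are elementary conditions quantifying over finitely many points and are therefore decidable; and for each admissible frame one decides whether $A$ is refutable by ranging over the finitely many $\leq$-closed valuations of the finitely many atoms occurring in $A$ and evaluating the clauses of Section~\ref{section:semantics}, each such evaluation being a terminating recursion on the structure of $A$ over a finite model. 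Dovetailing this search semi-decides $A{\not\in}\LIK4$.

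Finally, on input $A$ I would run the two enumerations in parallel; since exactly one of them eventually produces $A$, the procedure halts and reports membership. The only point requiring care is the decidability of validity in a fixed finite frame; I expect this to be routine once one notes that only the valuations differing on the atoms of $A$ are relevant and that there are finitely many such valuations, so the whole of the second enumeration reduces to finitary checks, everything else being an immediate consequence of Proposition~\ref{proposition:fmp:fik}.
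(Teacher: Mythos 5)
Your proof is correct and follows essentially the same route as the paper: the paper simply cites \cite[Theorem~$6.13$]{Blackburn:et:al:2001} (finite axiomatizability together with the finite frame property implies decidability) in combination with Proposition~\ref{proposition:fmp:fik}, and your argument is precisely the standard proof of that cited theorem — $\LIK4$ is recursively enumerable via its effective Hilbert-style presentation, its complement is recursively enumerable by searching finite frames satisfying the (decidable, on finite structures) conditions of transitivity and downward/forward confluence, and the two semi-decision procedures run in parallel yield decidability. You have merely unpacked what the paper leaves as a black-box citation, which is a legitimate and complete way to conclude.
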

\begin{proof}
By~\cite[Theorem~$6.13$]{Blackburn:et:al:2001} and Proposition~\ref{proposition:fmp:fik}.
\medskip
\end{proof}
%
%
%
%
%
%
%
%
%
\section*{Acknowledgements}
We wish to thank Han Gao (Aix-Marseille University), Zhe Lin (Xiamen University) and Nicola Olivetti (Aix-Marseille University) for their valuable remarks.
\\
\\
Special acknowledgement is also granted to our colleagues of the Toulouse Institute of Computer Science Research for many stimulating discussions about the subject of this article.
%
%
%
%
%
%
%
%
\bibliographystyle{named}

\begin{thebibliography}{}
%
%
\bibitem{Balbiani:et:al:2021}
Balbiani, P., Di\'eguez, M., Fern\'andez-Duque, D.:
{\it Some constructive variants of $\S4$ with the finite model property.}
In {\it LICS'21: Proceedings of the 36th Annual ACM/IEEE Symposium on Logic in Computer Science.}
IEEE (2021) 1--13 (doi: 10.1109/LICS52264.2021.9470643).
%
%
\bibitem{Balbiani:et:al:2024}
Balbiani, P., Gao, H., Gencer, \c{C}., Olivetti, N.:
{\it A natural intuitionistic modal logic: axiomatization and bi-nested calculus.}
In {\it 32nd EACSL Annual Conference on Computer Science Logic.}
LIPICS (2024) 13:1--13:21.
%
%
\bibitem{Balbiani:et:al:2024b}
Balbiani, P., Gao, H., Gencer, \c{C}., Olivetti, N.:
{\it Local intuitionistic modal logics and their calculi.}
In {\it Automated Reasoning.}
Springer (2024) 78--96.
%
%
\bibitem{Balbiani:Gencer:preliminary:draft:SL}
Balbiani, P., Gencer, \c{C}.:
{\it Intuitionistic modal logics: a minimal setting.}
Studia Logica (to appear).
%
%
\bibitem{Balbiani:Gencer:preliminary:draft:LJIGPL}
Balbiani, P., Gencer, \c{C}.:
{\it Intuitionistic modal logics: tips and clips.}
Logic Journal of the IGPL (to appear).
%
%
\bibitem{Blackburn:et:al:2001}
Blackburn, P., de Rijke, M., Venema, Y.:
{\it Modal Logic.}
Cambridge University Press (2001).
%
%
\bibitem{Bozic:Dosen:1984}
Bo\v{z}i\'{c}, M., Do\v{s}en, K.:
{\it Models for normal intuitionistic modal logics.}
Studia Logica {\bf 43} (1984) 217--245.
%
%
%
%
\bibitem{Chagrov:Zakharyaschev:1997}
Chagrov, A., Zakharyaschev, M.:
{\it Modal Logic.}
Oxford University Press (1997).
%
%
\bibitem{Dalmonte:2022}
Dalmonte, T.:
{\it Wijesekera-style constructive modal logics.}
In {\it Advances in Modal Logic.
Volume 14.}
College Publications (2022) 281--303.
%
%
\bibitem{Dalmonte:et:al:2021}
Dalmonte, T., Grellois, C., Olivetti, N.:
{\it Terminating calculi and countermodels for constructive modal logics.}
In {\it Automated Reasoning with Analytic Tableaux and Related Methods.}
Springer (2021) 391--408.
%
%
\bibitem{Das:Marin:2023}
Das, A., Marin, S.:
{\it On intuitionistic diamonds (and lack thereof).}
In {\it Automated Reasoning with Analytic Tableaux and Related Methods.}
Springer (2023) 283--301.
%
%
\bibitem{Davey:Priestley:2002}
Davey, B., Priestley, H.:
{\it Introduction to Lattices and Order.}
Cambridge University Press (2002).
%
%
%
%
%
%
\bibitem{FischerServi:1984}
Fischer Servi, G.:
{\it Axiomatizations for some intuitionistic modal logics.}
Rendiconti del Seminario Matematico Universit\`a e Politecnico di Torino {\bf 42} (1984) 179--194.
%
%
\bibitem{Fitting:2014}
Fitting, M.:
{\it Nested sequents for intuitionistic logics.}
Notre Dame Journal of Formal Logic {\bf 55} (2014) 41--61.
%
%
\bibitem{Gasquet:et:al:2006}
Gasquet, O., Herzig, A., Sahade, M.:
{\it Terminating modal tableaux with simple completeness proof.}
In {\it Advances in Modal Logic.
Volume 6.}
College Publications (2006) 167--186.
%
%
%
%
\bibitem{Girlando:et:al:2023}
Girlando, M., Kuznets, R., Marin, S., Morales, M., Stra\ss burger, L.:
{\it Intuitionistic $\S4$ is decidable.}
In {\it 38th Annual ACM/IEEE Symposium on Logic in Computer Science.}
IEEE (2023) 10.1109/LICS56636.2023.10175684.
%
%
\bibitem{Girlando:et:al:2024}
Girlando, M., Kuznets, R., Marin, S., Morales, M., Stra\ss burger, L.:
{\it A simple loopcheck for intuitionistic $\K$.}
In {\it Logic, Language, Information, and Computation.}
Springer (2024) 47--63.
%
%
\bibitem{Grefe:1996}
Grefe, C.:
{\it Fischer Servi's intuitionistic modal logic has the finite model property.}
In {\it Advances in Modal Logic.
Volume 1.}
CSLI Publications (1996) 85--98.
%
%
\bibitem{Hasimoto:2001}
Hasimoto, Y.:
{\it Finite model property for some intuitionistic modal logics.}
Bulletin of the Section of Logic {\bf 30} (2001) 87--97.
%
%
\bibitem{Marin:et:al:2021}
Marin, S., Morales, M., Stra\ss burger, L.:
{\it A fully labelled proof system for intuitionistic modal logics.}
Journal of Logic and Computation {\bf 31} (2021) 998--1022.
%
%
%
%
%
%
%
%
\bibitem{Olivetti:2022}
Olivetti, N.:
{\it A journey in intuitionistic modal logic: normal and non-normal modalities.}
In {\it LATD 2022 and MOSAIC Kick Off Conference.}
University of Salerno (2022) 12--13.
%
%
\bibitem{IMLA:2017}
de Paiva, V., Artemov, S. (editors):
{\it Intuitionistic Modal Logic 2017.}
Journal of Applied Logics {\bf 8} (2021) special issue.
%
%
\bibitem{Plotkin:Stirling:1986}
Plotkin, G., Stirling, C.:
{\it A framework for intuitionistic modal logics.}
In {\it Theoretical Aspects of Reasoning About Knowledge.}
Morgan Kaufmann Publishers (1986) 399--406.
%
%
\bibitem{Prenosil:2014}
P\v{r}enosil, A.:
{\it A duality for distributive unimodal logic.}
In {\it Advances in Modal Logic.
Volume 10.}
College Publications (2014) 423--438.
%
%
\bibitem{Rauszer:1980}
Rauszer, C.:
{\it An Algebraic and Kripke-Style Approach to a Certain Extension of Intuitionistic Logic.}
PWN~---~Polish Scientific Publishers (1980).
%
%
\bibitem{Simpson:1994}
Simpson, A.:
{\it The Proof Theory and Semantics of Intuitionistic Modal Logic.}
Doctoral thesis at the University of Edinburgh (1994).
%
%
\bibitem{Sotirov:1984}
Sotirov, V.:
{\it Modal theories with intuitionistic logic.}
In {\it Mathematical Logic.}
Publishing House of the Bulgarian Academy of Sciences (1984) 139--171.
%
%
\bibitem{Stewart:et:al:2018}
Stewart, C., de Paiva, V., Alechina, N.:
{\it Intuitionistic modal logic: a $15$-year retrospective.}
Journal of Logic and Computation {\bf 28} (2018) 873--882.
%
%
\bibitem{Takano:2003}
Takano, M.:
{\it Finite model property for an intuitionistic modal logic.}
Nihonkai Mathematical Journal {\bf 14} (2003) 125--132.
%
%
\bibitem{Vakarelov:1981}
Vakarelov, D.:
{\it Intuitionistic modal logics incompatible with the law of the excluded middle.}
Studia Logica {\bf 40} (1981) 103--111.
%
%
\bibitem{Wechler:1992}
Wechler, W.:
{\it Universal Algebra for Computer Scientists.}
Springer (1992).
%
%
\bibitem{Wijesekera:1990}
Wijesekera, D.:
{\it Constructive modal logics~I.}
Annals of Pure and Applied Logic {\bf 50} (1990) 271--301.
%
%
\end{thebibliography}
\end{document}